\documentclass[11pt]{amsart}
\usepackage{amsmath}
\usepackage{amsfonts}

\usepackage{amscd}
\usepackage{amsthm}
\usepackage{mathrsfs}
\usepackage{amssymb} \usepackage{latexsym}
\usepackage{eufrak}
\usepackage{euscript}
\usepackage{epsfig}
\usepackage{graphics}
\usepackage{array}
\usepackage{enumerate}
\usepackage{dsfont}
\usepackage{color}
\usepackage{wasysym}
\usepackage{hyperref}
\usepackage{pdfsync}

\numberwithin{equation}{section}

\newcommand{\bel}[1]{\begin{equation}\label{#1}}

\newcommand{\be}{\begin{equation}}

\newcommand{\ba}{\begin{eqnarray}}
\newcommand{\ea}{\end{eqnarray}}

\newcommand{\qe}{\end{equation}}
\newcommand{\R}{{\mathbb R}}

\newcommand{\N}{{\mathbb N}}

\newcommand{\Hmm}[1]{\leavevmode{\marginpar{\tiny%
$\hbox to 0mm{\hspace*{-0.5mm}$\leftarrow$\hss}%
\vcenter{\vrule depth 0.1mm height 0.1mm width \the\marginparwidth}%
\hbox to
0mm{\hss$\rightarrow$\hspace*{-0.5mm}}$\\\relax\raggedright #1}}}

\newtheorem{theorem}{Theorem}[section]
\newtheorem{lemma}[theorem]{Lemma}
\newtheorem{coro}[theorem]{Corollary}
\newtheorem{defi}[theorem]{Definition}

\newtheorem{rem}[theorem]{Remark}

\newtheorem{prop}[theorem]{Proposition}

\newtheorem{example}[theorem]{Example}

\begin{document}

\title[Cheeger estimates of Dirichlet-to-Neumann operators]{Cheeger estimates of Dirichlet-to-Neumann operators on infinite subgraphs of graphs}

\author{Bobo Hua}
\address{School of Mathematical Sciences, LMNS, Fudan University, Shanghai 200433,
China; Shanghai Center for Mathematical Sciences, Fudan University, Shanghai 200433, China}
\email{bobohua@fudan.edu.cn}

\author{Yan Huang}
\address{School of Mathematical Sciences, Fudan University, Shanghai 200433, China.}
\email{yanhuang0509@gmail.com}

\author{Zuoqin Wang}
\address{School of Mathematical Sciences, University of Science and Technology of China, Hefei,
Anhui 230026, China.}
\email{wangzuoq@ustc.edu.cn}

\thanks{B. H. is supported in part by NSFC, No.11831004. Z. W. is supported in part by NSFC, Nos. 11571131 and 11721101. B. H. and Z. W. are supported by NSFC no. 11826031.}

\begin{abstract}
In this paper, we study the Dirichlet-to-Neumann operators on infinite subgraphs of graphs. For an infinite graph, we prove Cheeger-type estimates for the bottom spectrum of the Dirichlet-to-Neumann operator, and the higher order Cheeger estimates for higher order eigenvalues of the Dirichlet-to-Neumann operator.
\end{abstract}

\maketitle

\section{introduction}

Eigenvalue estimates are of interest in Riemannian geometry and mathematical physics. There are various eigenvalue estimates using geometric quantities. In this paper, we focus on the isoperimetric type estimate introduced by Cheeger \cite{Cheeger1970}, now called the Cheeger estimate, which reveals a close relation between the first nontrivial eigenvalue of the Laplace-Beltrami operator on a closed manifold, and the isoperimetric constant called Cheeger constant.

Let $(M,g)$ be a compact, connected, smooth Riemannian manifold with smooth boundary $\partial{M}$. The Dirichlet-to-Neumann operator $\Lambda$, which will be referred as the DtN operator for short,
is defined as
\[\aligned
\Lambda:H^{\frac{1}{2}}(\partial M) & \to H^{-\frac{1}{2}}(\partial M),
\\
f &\mapsto\Lambda(f):=\frac{\partial u_f}{\partial n},
\endaligned\]
where $u_f$ is the harmonic extension of $f\in H^{\frac{1}{2}}(\partial M)$. The DtN operator $\Lambda$ is a first order elliptic pseudo-differential operator \cite[p.37]{Taylor1996}. Since $\partial M$ is compact, the spectrum of $\Lambda$ is nonnegative, discrete and unbounded \cite[p. 95]{Bandle1980}. We refer to \cite{Girouard2017} for a survey of the spectral properties of the DtN operators. The eigenvalue problem associated to the DtN operator $\Lambda$ is also known as the Steklov problem. For the history of this problem, c.f. \cite{Kuznetsov2014}. There are many results on the Steklov problem, see e.g.  \cite{Escobar1997,FraserSchoen2011,ColboisEl2011,Hassannezhad2011,Escobar1999,FraserSchoen2016,Jammes2015}.

For the first nontrivial eigenvalue of the DtN operator, Jammes \cite{Jammes2015} introduced a type of Cheeger constant,
$$h_J(M):=\inf_{\substack{\Omega\subset M\\\mathrm{Vol}(\Omega)\leq\frac 12{\mathrm{Vol}(M)}}}\frac{\mathrm{Area}(\partial\Omega\cap \mathrm{int}(M))}{\mathrm{Area}(\Omega\cap\partial M)},$$
where $\mathrm{Area}(\cdot)$ and $\mathrm{Vol}(\cdot)$ denote the codimension one Hausdorff measure (i.e., the Riemannian area), and the Riemannian volume, respectively.
We call $h_J(M)$ the Jammes-type Cheeger constant and the subscript ``$J$" indicates the Jammes-type.  Let $\sigma_2(M)$ be the first nontrivial eigenvalue of $\Lambda$. The Jammes-type Cheeger estimate \cite[Theorem~1]{Jammes2015} reads as
\begin{eqnarray}\label{jammescheegerestimate}\sigma_2(M)\geq\frac{1}{4}h_{N}(M) h_J(M),\end{eqnarray}
where $h_{N}(M)$ is the Neumann Cheeger constant associated to the Neumann Laplacian on $M$.

There are many interesting interactions between Riemannian geometry and discrete analysis on graphs.
Many methods initiated in Riemannian geometry have been generalized to the discrete setting, and conversely some approaches found on graphs may also be applied to Riemannian geometry. The Cheeger estimate was first generalized to graphs by Dodziuk \cite{Dodziuk1984} and Alon-Milman \cite{AlonMilman1985} independently. Miclo introduced higher order Cheeger constants and conjectured related higher order Cheeger estimates, see \cite{Miclo2008,DaneshgarJavadiMiclo2012}. The conjecture was proved by Lee, Oveis Gharan and Tevisan \cite{Trevsan2014} via random partition methods on graphs. Then Miclo \cite{Miclo2015} and Funano \cite{Funano2013} extended the results to the Riemannian case and found some important applications.

Recently, the authors \cite{BoboYanZuoqin2017} defined the DtN operator on a finite subgraph of a graph, and proved two Cheeger-type estimates for the first nontrivial eigenvalue of the DtN operator: the Escobar-type Cheeger estimate following \cite{Escobar1997} and the Jammes-type Cheeger estimate following \cite{Jammes2015}. Hassannezhad-Miclo \cite{Miclo2017} proved the Jammes-type Cheeger estimate independently, and generalized it to the higher order Cheeger estimates for eigenvalues of the DtN operator in terms of higher order Cheeger-Steklov constants. Their result is as follows: Consider a Markov process on a finite state space $U$ and a proper subset $V$ of $U,$ serving as the boundary, on which the DtN operator is defined. Let $\sigma_k$ be the $k$-th eigenvalue of the DtN operator.  Hassannezhad-Miclo \cite[Theorem~A]{Miclo2017} proved that there exists a universal constant $c$ such that
\begin{eqnarray}\label{miclohigherorder}
\sigma_k\geq\frac{c}{k^6}\frac{\varsigma_k}{A}, \quad\forall 1\leq k\leq \sharp V,
\end{eqnarray}
where $A$ is the largest absolute value of the diagonal elements of the irreducible Markov generator and $\varsigma_k$ is the $k$-th order Cheeger-Steklov constant.

Spectral theory for finite graphs has been extensively studied in the literature, see e.g. \cite{Bapat2014,BrouwerHaemers2012,CvetkovicDoobSachs1995,Chung97}. For infinite graphs, there are also many results on the spectra of the Laplacians, see \cite{Mohar1982,MoharWoess1989,Simon2011,BauerHuaJost2014,Fujiwara1996,DodziukMathai2006}. Infinite graphs have many applications in geometric group theory, probability theory, mathematical physics, etc. In this paper, we study the DtN operators on infinite subgraphs of graphs. By the well-known exhaustion methods, see e.g. \cite{BauerHuaJost2014}, we construct the DtN operator on an infinite subgraph of an infinite graph and prove Cheeger-type estimates for the bottom spectrum and higher order Cheeger estimates for the higher order eigenvalues of the DtN operators following \cite{Jammes2015} and \cite{Miclo2017}, which can be viewed as an extension of the results in \cite{BoboYanZuoqin2017} and  \cite{Miclo2017}.

We recall some basic definitions on infinite graphs. Let $V$ be a countable infinite set and $\mu$ be a symmetric weight function given by
\[\aligned
\mu:V\times V & \to [0,\infty),\\
(x,y) &\mapsto \mu_{xy}=\mu_{yx}.
\endaligned\]
This induces a graph structure $G=(V,E)$ with the set of vertices $V$ and the set of edges $E$ such that $\{x,y\}\in E$ if and only if $\mu_{xy}>0$. Two vertices $x,y$ satisfying $\{x,y\}\in E$ are called neighbors, denoted by $x\sim y$.  We only consider locally finite graphs where each vertex only has finitely many neighbors. We do allow self-loops, i.e. $x\sim x$ for $\mu_{xx}>0$. We call the triple $(V,E,\mu)$ a weighted graph.

For an infinite graph, the exhaustion of the whole graph by finite subsets of vertices is an important concept, see \cite{BauerHuaJost2014}. A sequence of subsets of vertices $\mathcal{W}=\{W_i\}^{\infty}_{i=1}$ is called an \emph{exhaustion} of the infinite graph $G=(V,E)$, denoted by $\mathcal{W}\uparrow V$, if it satisfies
\begin{itemize}
\item $W_1\subset W_2\subset\cdots\subset W_i\subset\cdots \subset V$,
\item $\sharp W_i<\infty,\, \forall i=1,2,\cdots$,
\item $V=\bigcup^{\infty}_{i=1} W_i.$
\end{itemize}
For any quantity $\tau$ defined on finite subgraphs of $V$ that is monotone, i.e. for any finite subgraphs $W \subset W'$ of $V$, one has $\tau(W)\leq \tau( W')$ (or $\tau(W)\geq \tau( W')$ as long as $W \subset W'$), we write \begin{equation}\label{exhaustionMono}\tau(V):=\lim_{\mathcal{W}\uparrow V}\tau( W)=\lim_{i\rightarrow\infty}\tau( W_i).\end{equation} One can check that this limit exists and does not depend on the choice of the exhaustion.

Given $\Omega_1,\Omega_2\subset V$, the set of edges between $\Omega_1$ and $\Omega_2$ is denoted by
$$E(\Omega _1,\Omega _2):=\{\{x,y\}\in E|x\in\Omega _1,y\in\Omega _2\}.$$
For any subset $\Omega\subset V$, there are two notions of boundary:
The edge boundary of $\Omega$ is defined as
$$\partial\Omega:=E(\Omega,\Omega^c),$$
where $\Omega^c:=V\setminus\Omega$; the vertex boundary of $\Omega$ is defined as
$$\delta\Omega:=\{x\in \Omega^c|\ x\sim y  \mathrm{\ for\ some\ } y\in \Omega\}.$$

We write $\overline{\Omega}:=\Omega \cup \delta\Omega.$ From now on, we only consider the graph structure $(\overline{\Omega}, E(\Omega,\overline{\Omega}),\mu),$ still denoted by $\overline{\Omega}$ for simplicity, where the weight $\mu$ is modified such that $\mu_{xy}=0$ for any $\{x,y\}\not\in E(\Omega,\overline{\Omega})$, i.e. the edges between vertices in $\delta \Omega,$ $E(\delta\Omega,\delta\Omega)$, are removed. So in what follows we may forget about the ambient graph $(V,E)$ and regard $\overline{\Omega}$ as graph $\Omega$ with its vertex boundary $\delta\Omega.$

We introduce weights on the vertex set $\overline{\Omega}$ as
\[
d(x)=\left\{
       \begin{array}{ll}
       \sum_{y\in\overline{\Omega}}\mu_{xy} ,& \quad x\in\Omega, \\
       \sum_{y\in\Omega}\mu_{xy} ,& \quad x\in\delta\Omega,
               \end{array}
     \right.
\]
which extends to a measure $d(\cdot)$ on subsets of $\overline{\Omega}$ by
$$d(A):=\sum_{x\in A}d(x),\quad \forall\ A\subset\overline{\Omega}.$$

Given a vertex set $F$, we denote by $\R^F$ the collection of all real functions defined on $F$. Let $ W\subset\overline{\Omega}$ be a finite subset and $\delta W$ be the vertex boundary of $ W$ in $\overline{\Omega}$, i.e. $\delta W=\{y\in\overline{\Omega}\setminus W|\exists x\in W\text{\, s.t.\, }x\sim y\}$.
For any $f\in\mathbb{R}^{W\cap\delta\Omega}$, let $u^{ W}_f$ be the solution of the following equation:
\begin{equation}\label{existlemma}
\left\{
\begin{aligned}
\Delta u_{f}^{ W}(x)=0, &\quad x\in W\cap\Omega, \\
u_{f}^{ W}(x)=f(x), &\quad x\in W\cap\delta\Omega,\\
u_{f}^{ W}(x)=0, &\quad x\in\delta W,\\
\end{aligned}
\right.
\end{equation}
see Lemma \ref{harmonic extension}. The third condition above stands for the Dirichlet boundary condition on $ W$.

Now let us define the DtN operator for this infinite subgraph setting.
Let $\ell_0(\delta\Omega)$ denote the set of functions on $\delta\Omega$ with finite support. For any $f\in\ell_0(\delta\Omega)$, we write $f = f^+-f^-$, where $f^+:=\max\{f,0\}$ and $f^-:=\max\{-f,0\}$.
For any $\mathcal{W}\uparrow\overline{\Omega},$ let $u_{f^+}^{ W_i}$ ($u_{f^-}^{ W_i}$ resp.) be the solution of \eqref{existlemma} with $ W$ and $f$ replaced by $ W_i$ and $f^+$ ($f^-$  resp.). Set $$u_f^{ W_i}:=u_{f^+}^{ W_i}-u_{f^-}^{ W_i}.$$
Applying the well-known maximum principle, see e.g. \cite{Grigor'yan2009}, we have $$u_{f^{\pm}}^{ W_i}\leq u_{f^\pm}^{ W_{i+1}}\quad \mathrm{and}\quad |u_{f^{\pm}}^{ W_i}|\leq \sup_{x\in \delta\Omega}|f(x)|.$$ Hence the limit of $u^{ W_i}_f$ exists and we set
$$u_f:=\lim_{i\to \infty}u^{ W_i}_f.$$
For any $f\in\ell_0(\delta\Omega)$, we define
$$\Lambda(f):=\frac{\partial u_f}{\partial n}.$$
From Lemma \ref{welldefined} and Proposition~\ref{prop:selfadjoint}, 
$\Lambda$ is a bounded self-adjoint linear operator on $\ell_0(\delta\Omega)$, which can be uniquely extended to $\ell^2(\delta\Omega)$. We call the extension to $\ell^2(\delta\Omega)$ the DtN operator on $\Omega$ and still denote it by $\Lambda$.
\begin{rem}
Our definition of the DtN operator is slightly different from that of Hassannezhad-Miclo's \cite{Miclo2017}. In fact, the edges between vertices in $\delta \Omega,$ i.e. $E(\delta\Omega,\delta\Omega)$, play no role in our definition, but they matter in Hassannezhad-Miclo's.
\end{rem}
By the standard spectral theory \cite[(4.5.1) on p.88]{Davies1996}, for any $k\geq 1$, the $k$-th eigenvalue of the DtN operator on $\Omega$ is  equal to \begin{equation}\label{min_max_infiDefi}\sigma_k(\Omega):=\inf_{\substack{H\subset\ell_0(\delta\Omega),\\ \dim H=k}}\sup_{0\neq f\in H}\frac{\langle\Lambda (f),f\rangle_{\delta\Omega}}{\langle f,f\rangle_{\delta\Omega}}.\end{equation}
For $k=1,$ $\sigma_1(\Omega)$ is called the \emph{bottom spectrum} of $\Lambda.$ If the number of vertices in $\delta\Omega$ is finite, we will prove that $\sigma_1(\Omega)=0$ if and only if $\overline{\Omega}$ is recurrent, see Proposition~\ref{prop:recu}.

By exhaustion methods, in order to give Cheeger-type estimates for $\sigma_k(\Omega)$, $k\geq1$, we consider finite subsets of $\overline{\Omega}$. Let $W\subset\overline{\Omega}$ be a finite subset. Given any subset $A\subset W$, we denote by
$$\partial_{ W}A:=\partial A\cap E( W,\overline{ W})$$
the relative edge boundary of $A$ in $ W$. Then the Cheeger-type constants of $ W$ are defined as follows.

\begin{defi}\label{defi}Let $ W\subset\overline{\Omega}$ be a finite subset and $ W\cap\delta\Omega\neq\emptyset$. The Jammes-type Cheeger constant of $ W$ in $\overline{\Omega}$ is defined as
$$h_J( W):=\min_{\emptyset\neq A\subset W}\frac{\mu(\partial_{ W} A)}{d(A\cap\delta\Omega)}.$$
We set $h_J(W)=+\infty$ if $W\cap\delta\Omega=\emptyset$. Similarly, the classical Cheeger constant of $ W$ in $\overline{\Omega}$ is defined as
$${h}( W):=\min_{\emptyset\neq A\subset W}\frac{\mu(\partial_{ W} A)}{d(A)}.$$
\end{defi}
\begin{rem}
For any finite $ W\subset\overline{\Omega}$, it is easy to show that
\begin{equation}\label{eq:comp1}h_J( W)\geq {h}( W).\end{equation}
\end{rem}

The DtN operator (with Dirichlet boundary condition) on $ W,$ denoted by $\Lambda_ W$, is defined as
\[\aligned
\Lambda_{ W}: \R^{ W\cap\delta\Omega} & \to\R^{ W\cap\delta\Omega},
\\
f  &\mapsto \Lambda_{ W} (f):=\frac{\partial u^{ W}_f}{\partial n},
\endaligned\]
where $u^W_f$ is the solution of \eqref{existlemma}.

Let $\sigma_k( W)$, $1\leq k\leq\sharp(W\cap\delta\Omega)$, be the $k$-th eigenvalue of $\Lambda_{ W}$. Similar to \eqref{min_max_infiDefi}, $\sigma_k(W)$ can be characterized as
\begin{equation}\label{min_max_finite}\sigma_k(W):=\min_{\substack{H\subset\mathbb{R}^{W\cap\delta\Omega},\\ \dim H=k}}\max_{0\neq f\in H}\frac{\langle\Lambda_W (f),f\rangle_{W\cap\delta\Omega}}{\langle f,f\rangle_{W\cap\delta\Omega}}.\end{equation}

We obtain the following Jammes-type Cheeger estimate for $\sigma_1(W)$, the first nontrivial eigenvalue of $\Lambda_W$, which is an analog to \eqref{jammescheegerestimate} in the Riemannian case.

\begin{theorem}\label{finitejammeslowerestimate}
For any finite subset $ W\subset\overline{\Omega}$ with $ W\cap\delta\Omega\neq\emptyset$, we have
$$\frac{h( W)\cdot h_J( W)}{2}\leq\sigma_1( W)\leq h_J( W).$$
\end{theorem}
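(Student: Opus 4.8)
The plan is to recast the Rayleigh quotient of $\Lambda_W$ as a statement about the Dirichlet energy and then run a coarea argument that brings in \emph{both} Cheeger constants simultaneously. Write $\mathcal D_W(v):=\tfrac12\sum_{x,y\in\overline W}\mu_{xy}(v(x)-v(y))^2$ for $v\colon\overline W\to\R$. By \eqref{min_max_finite} with $k=1$, together with Green's formula $\langle\Lambda_W f,f\rangle_{W\cap\delta\Omega}=\mathcal D_W(u^W_f)$ and $\langle f,f\rangle_{W\cap\delta\Omega}=\sum_{x\in W\cap\delta\Omega}d(x)f(x)^2$, one has
\[
\sigma_1(W)=\min_{0\neq f\in\R^{W\cap\delta\Omega}}\frac{\mathcal D_W(u^W_f)}{\sum_{x\in W\cap\delta\Omega}d(x)f(x)^2},
\]
and, by the Dirichlet principle, $u^W_f$ minimizes $\mathcal D_W(v)$ among all $v$ with $v|_{W\cap\delta\Omega}=f$ and $v|_{\delta W}=0$. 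The upper bound is then immediate: choose $A\subset W$ attaining $h_J(W)$ (it exists since $W$ is finite, and has $A\cap\delta\Omega\neq\emptyset$ because $W\cap\delta\Omega\neq\emptyset$ makes $h_J(W)$ finite), put $f=\mathbf 1_{A\cap\delta\Omega}$, and use $\mathbf 1_A$ as a competitor for $u^W_f$; since $\mathbf 1_A$ vanishes on $\delta W$ and $\mathcal D_W(\mathbf 1_A)=\mu(\partial_W A)$ while $\sum_{x\in W\cap\delta\Omega}d(x)f(x)^2=d(A\cap\delta\Omega)$, we get $\sigma_1(W)\le \mu(\partial_W A)/d(A\cap\delta\Omega)=h_J(W)$.

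For the lower bound it suffices to prove that for \emph{every} $v\colon\overline W\to\R$ with $v|_{\delta W}=0$,
\[
\mathcal D_W(v)\ \ge\ \frac{h(W)\,h_J(W)}{2}\sum_{x\in W\cap\delta\Omega}d(x)v(x)^2,
\]
since applying this with $v=u^W_f$ and dividing gives $\sigma_1(W)\ge h(W)h_J(W)/2$. I would abbreviate $D=\mathcal D_W(v)$, $S=\sum_{x\in W}d(x)v(x)^2$, $P=\sum_{x\in W\cap\delta\Omega}d(x)v(x)^2$, so $P\le S$, and record the elementary identity $\tfrac12\sum_{x,y}\mu_{xy}(v(x)+v(y))^2=2S-D$ (this uses $v|_{\delta W}=0$). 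Then apply the coarea formula to $g=v^2$: since $v$ vanishes on $\delta W$, every superlevel set $B_t:=\{g>t\}$ with $t\ge 0$ satisfies $B_t\subset W$, whence
\[
\tfrac12\sum_{x,y\in\overline W}\mu_{xy}\bigl|v(x)^2-v(y)^2\bigr|=\int_0^\infty \mu(\partial_W B_t)\,dt .
\]

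Now the two Cheeger constants enter through the integrand. From $\mu(\partial_W B_t)\ge h_J(W)\,d(B_t\cap\delta\Omega)$ we get, after integrating, $\tfrac12\sum_{x,y}\mu_{xy}|v(x)^2-v(y)^2|\ge h_J(W)\,P$; from $\mu(\partial_W B_t)\ge h(W)\,d(B_t)$ we get the same quantity $\ge h(W)\,S$. On the other hand $|v(x)^2-v(y)^2|=|v(x)-v(y)|\,|v(x)+v(y)|$, so Cauchy--Schwarz (with weights $\mu_{xy}$) gives $\tfrac12\sum_{x,y}\mu_{xy}|v(x)^2-v(y)^2|\le \sqrt{D}\,\sqrt{2S-D}$. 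Combining the $h(W)$-bound with this yields $h(W)^2S^2\le D(2S-D)$, i.e. $D^2-2SD+h(W)^2S^2\le 0$; since $1-\sqrt{1-t}\ge t/2$ on $[0,1]$ this forces $h(W)\le1$ and $D\ge S\bigl(1-\sqrt{1-h(W)^2}\bigr)\ge \tfrac12 h(W)^2 S$, so $S\le 2D/h(W)^2$. Feeding this into the $h_J(W)$-bound, $h_J(W)^2P^2\le D(2S-D)\le 2DS\le 4D^2/h(W)^2$, hence $h(W)h_J(W)\,P\le 2D$, which is exactly the claimed inequality.

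The one genuinely delicate point, I expect, is keeping the constant sharp at $\tfrac12$: this requires the estimate $D(2S-D)\le 2DS$ rather than any cruder bound, together with the \emph{sharp} discrete Cheeger estimate $1-\sqrt{1-h^2}\ge h^2/2$ (equivalently, the $h(W)$-case of the coarea argument, $\mathcal D_W(v)\ge\tfrac12 h(W)^2\sum_{x\in W}d(x)v(x)^2$ for $v$ vanishing on $\delta W$). The other thing to check carefully is that \emph{all} superlevel sets $B_t$ lie in $W$, so that $h(W)$ and $h_J(W)$ — defined as minima over subsets of $W$ — genuinely apply to them; this is precisely where the Dirichlet condition $v|_{\delta W}=0$ is used, and it is also what makes legitimate the reduction to arbitrary admissible $v$ rather than just the harmonic extensions $u^W_f$.
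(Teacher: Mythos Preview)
Your proof is correct and rests on the same three ingredients as the paper's: the coarea formula applied to $v^2$, Cauchy--Schwarz to control $\sum_e\mu_{xy}|v^2(x)-v^2(y)|$ by the Dirichlet energy, and the two Cheeger constants applied to the superlevel sets. The upper bound argument is identical to the paper's.

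The only difference is in how the lower bound is assembled. Writing $I:=\sum_e\mu_{xy}|v^2(x)-v^2(y)|$, $D:=\mathcal D_W(v)$, $S:=\sum_{x\in W}d(x)v(x)^2$, $P:=\sum_{x\in W\cap\delta\Omega}d(x)v(x)^2$, the paper simply multiplies and divides by $S$ to write $\sigma_1(W)=\dfrac{SD}{SP}$, bounds $SD\ge\tfrac12 I^2$ by Cauchy--Schwarz, and then multiplies the \emph{two} coarea lower bounds directly: $I^2\ge \bigl(h(W)S\bigr)\bigl(h_J(W)P\bigr)$. Cancelling $SP$ gives $\sigma_1(W)\ge\tfrac12 h(W)h_J(W)$ in one line. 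Your route is more circuitous: you first use $I\ge h(W)S$ and $I^2\le D(2S-D)$ to solve a quadratic and derive the Dirichlet Cheeger inequality $S\le 2D/h(W)^2$, then feed this into the $h_J$-bound to eliminate $S$. The detour through the quadratic (and the ``$1-\sqrt{1-t}\ge t/2$'' step) is unnecessary: already $I^2\le 2SD$ and $I\ge h(W)S$ give $h(W)^2S\le 2D$ directly. Both routes reach the sharp constant $\tfrac12$; the paper's is shorter, yours gains nothing extra but is not wrong.
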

From Lemma \ref{eigenmono}, $\sigma_k( W)$ is non-increasing when $ W$ increases. Moreover, we obtain an approximation relation between ${\sigma}_k(\Omega)$ and $\{\sigma_k( W_i)\}^{\infty}_{i=1}$ for any $\mathcal{W}\uparrow\overline{\Omega}$.

\begin{prop}\label{eigenvalueapproximation}
For any $\mathcal{W}\uparrow\overline{\Omega}$ and $k\geq1,$
$$\lim_{\mathcal{W}\uparrow\overline{\Omega}}\sigma_k( W)={\sigma}_k(\Omega).$$
\end{prop}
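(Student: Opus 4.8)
The plan is to compare the two min--max characterizations \eqref{min_max_infiDefi} and \eqref{min_max_finite} directly, exploiting that every function in $\ell_0(\delta\Omega)$ is supported in $W_i$ for $i$ large, together with the pointwise convergence $u_f^{W_i}\to u_f$ built into the definition of $u_f$. Fix an exhaustion $\mathcal W=\{W_i\}$; by Lemma~\ref{eigenmono} the numbers $\sigma_k(W_i)$ are non-increasing in $i$ (for $i$ large enough that $\sharp(W_i\cap\delta\Omega)\ge k$; if $\sharp\delta\Omega<k$ both sides are $+\infty$ and there is nothing to prove), so $L:=\lim_i\sigma_k(W_i)=\inf_i\sigma_k(W_i)$ exists, and it suffices to show $\sigma_k(\Omega)\le L$ and $L\le\sigma_k(\Omega)$.

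The key technical ingredient is the behaviour of the quadratic forms. Fix $f\in\ell_0(\delta\Omega)$ and let $i_0$ be such that $\mathrm{supp}\,f\subset W_{i_0}$. Then for all $i\ge i_0$: (i) $\langle f,f\rangle_{W_i\cap\delta\Omega}=\langle f,f\rangle_{\delta\Omega}$, since each side depends only on $f|_{\mathrm{supp}\,f}$; (ii) $\langle\Lambda_{W_i}f,f\rangle\to\langle\Lambda f,f\rangle$ as $i\to\infty$: each $x\in\mathrm{supp}\,f$ has finitely many neighbours, all in $W_i$ for $i$ large, the normal derivative $\frac{\partial u_f^{W_i}}{\partial n}(x)$ is a fixed finite linear combination of $f(x)$ and the values $u_f^{W_i}(y)$ at these neighbours, and $u_f^{W_i}(y)\to u_f(y)$, so $(\Lambda_{W_i}f)(x)\to(\Lambda f)(x)$ and we sum over the finite set $\mathrm{supp}\,f$; (iii) $\langle\Lambda_{W_{i+1}}f,f\rangle\le\langle\Lambda_{W_i}f,f\rangle$, which is the Dirichlet principle: $\langle\Lambda_{W_i}f,f\rangle$ is the Dirichlet energy of $u_f^{W_i}$, and the extension by zero of $u_f^{W_i}$ to $\overline{W_{i+1}}$ is an admissible competitor in the minimization defining $u_f^{W_{i+1}}$ with the same energy (this is the mechanism behind Lemma~\ref{eigenmono}). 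Combining (ii) and (iii), $\langle\Lambda_{W_i}f,f\rangle$ decreases to $\langle\Lambda f,f\rangle$; in particular $\langle\Lambda f,f\rangle\le\langle\Lambda_{W_i}f,f\rangle$ for every $i\ge i_0$.

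For $\sigma_k(\Omega)\le L$: fix $i$ large and, by \eqref{min_max_finite}, choose a $k$-dimensional $H\subset\mathbb R^{W_i\cap\delta\Omega}$ realizing $\sigma_k(W_i)$. Extension by zero embeds $H$ as a $k$-dimensional subspace $\widetilde H\subset\ell_0(\delta\Omega)$, and by (i) and (iii) each $0\ne f\in\widetilde H$ satisfies $\langle\Lambda f,f\rangle_{\delta\Omega}/\langle f,f\rangle_{\delta\Omega}\le\langle\Lambda_{W_i}f,f\rangle/\langle f,f\rangle$; so \eqref{min_max_infiDefi} gives $\sigma_k(\Omega)\le\sigma_k(W_i)$, and letting $i\to\infty$ yields $\sigma_k(\Omega)\le L$. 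For $L\le\sigma_k(\Omega)$: given $\varepsilon>0$, choose via \eqref{min_max_infiDefi} a $k$-dimensional $H\subset\ell_0(\delta\Omega)$ with $\sup_{0\ne f\in H}\langle\Lambda f,f\rangle_{\delta\Omega}/\langle f,f\rangle_{\delta\Omega}\le\sigma_k(\Omega)+\varepsilon$. The functions in $H$ have supports contained in a common finite set, so for $i$ large $H$ embeds into $\mathbb R^{W_i\cap\delta\Omega}$ as a $k$-dimensional subspace, and \eqref{min_max_finite} gives $\sigma_k(W_i)\le\max_{f\in H,\,\langle f,f\rangle=1}\langle\Lambda_{W_i}f,f\rangle$, the unit sphere being a fixed compact subset of $H$ by (i). The quadratic forms $f\mapsto\langle\Lambda_{W_i}f,f\rangle$ converge pointwise on $H$ by (ii), hence, being quadratic forms on the fixed finite-dimensional space $H$, uniformly on that sphere, to $f\mapsto\langle\Lambda f,f\rangle_{\delta\Omega}$; therefore $L\le\lim_i\max_{f\in H,\,\langle f,f\rangle=1}\langle\Lambda_{W_i}f,f\rangle=\max_{f\in H,\,\langle f,f\rangle=1}\langle\Lambda f,f\rangle_{\delta\Omega}\le\sigma_k(\Omega)+\varepsilon$. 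Letting $\varepsilon\to0$ finishes the proof.

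The step I expect to require the most care is ingredient (ii)--(iii) above: writing the normal derivative $\frac{\partial u_f^{W_i}}{\partial n}$ in the correct normalization, identifying $\langle\Lambda_{W_i}f,f\rangle$ with the Dirichlet energy of $u_f^{W_i}$ via Green's formula, and checking that the zero-extension is genuinely admissible for the larger Dirichlet problem. Once this claim is in place, the two min--max comparisons are routine.
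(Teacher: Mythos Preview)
Your proof is correct and follows the same route as the paper: both compare the two min--max formulas by using the monotonicity of $\mathrm{Cap}(f,W_i)=\langle\Lambda_{W_i}f,f\rangle$ to get $\sigma_k(\Omega)\le\sigma_k(W_i)$, and then test with a near-optimal finite-dimensional $H\subset\ell_0(\delta\Omega)$ for the reverse inequality. Your treatment of the reverse direction is in fact more careful than the paper's---the paper writes $\sigma_k(W_i)\le\max_{f\in H}D_\Omega(u_f)/\langle f,f\rangle_{\delta\Omega}$ directly, which is not immediate since $\mathrm{Cap}(f,W_i)\ge D_\Omega(u_f)$, whereas your uniform-convergence argument for quadratic forms on the unit sphere of the finite-dimensional $H$ closes this gap cleanly.
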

As a corollary, we have the following estimate.
\begin{coro}\label{corodtn} For any $k\geq 1,$ $$\sigma_k(\Omega)\geq {\lambda}_k(\overline{\Omega}),$$ where ${\lambda}_k(\overline{\Omega})$ are eigenvalues of the normalized Laplacian on the graph $\overline{\Omega}=(\overline{\Omega},E(\Omega,\overline{\Omega}),\mu).$
\end{coro}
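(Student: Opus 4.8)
The plan is to establish the inequality first on each finite member of an exhaustion and then pass to the limit using Proposition~\ref{eigenvalueapproximation}. Fix $\mathcal{W}=\{W_i\}_{i\ge1}\uparrow\overline{\Omega}$; for $i$ large we have $W_i\cap\delta\Omega\neq\emptyset$ and $\sharp W_i\ge k$, so all quantities below make sense. For a finite $W\subset\overline{\Omega}$ with $W\cap\delta\Omega\neq\emptyset$, let $\lambda_k(W)$ denote the $k$-th eigenvalue of the normalized Laplacian on the subgraph $W$ with Dirichlet boundary condition on $\delta W$, i.e.\ the $k$-th smallest value of the min-max of the Rayleigh quotient $\mathcal{E}(g)/\sum_{x\in W}d(x)g(x)^2$ over $k$-dimensional subspaces of $\R^W$, where $\mathcal{E}(v):=\sum_{\{x,y\}\in E(\Omega,\overline{\Omega})}\mu_{xy}(v(x)-v(y))^2$ is the Dirichlet energy. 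By \eqref{exhaustionMono} and the monotonicity of Dirichlet eigenvalues under enlarging the domain (extending a function by $0$ changes neither $\mathcal{E}$ nor the $d$-weighted mass, so $\R^{W}\subset\R^{W'}$ gives $\lambda_k(W')\le\lambda_k(W)$ whenever $W\subset W'$), the sequence $\lambda_k(W_i)$ is non-increasing and $\lambda_k(\overline{\Omega})=\lim_i\lambda_k(W_i)$; in particular $\lambda_k(W_i)\ge\lambda_k(\overline{\Omega})$ for all large $i$. Hence it suffices to prove the finite comparison $\sigma_k(W)\ge\lambda_k(W)$.

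To prove this, write $B:=W\cap\delta\Omega$ and recall the summation-by-parts identity, which follows from \eqref{existlemma} together with the lemmas defining $\Lambda_W$: for $f\in\R^{B}$, if $u^W_f$ is the solution of \eqref{existlemma} extended by $0$ on $\overline{\Omega}\setminus W$, then $\langle\Lambda_W f,f\rangle_{W\cap\delta\Omega}=\mathcal{E}(u^W_f)$, while $\langle f,f\rangle_{W\cap\delta\Omega}=\sum_{x\in B}d(x)f(x)^2$. The map $\iota\colon\R^{B}\to\R^{W}$, $f\mapsto u^W_f$, is linear and injective since $u^W_f|_{B}=f$, so for any $k$-dimensional subspace $H\subset\R^{B}$ the image $\iota(H)$ is a $k$-dimensional subspace of $\R^W$. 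For $0\neq f\in H$, using $u^W_f|_{B}=f$ and $\sum_{x\in W}d(x)u^W_f(x)^2\ge\sum_{x\in B}d(x)f(x)^2$ we get
\[
\frac{\langle\Lambda_W f,f\rangle_{W\cap\delta\Omega}}{\langle f,f\rangle_{W\cap\delta\Omega}}
=\frac{\mathcal{E}(u^W_f)}{\sum_{x\in B}d(x)f(x)^2}
\ge\frac{\mathcal{E}(u^W_f)}{\sum_{x\in W}d(x)u^W_f(x)^2}.
\]
Taking the maximum over $0\neq f\in H$ and then using $\iota(H)$ as a test space in the min-max formula for $\lambda_k(W)$ gives $\max_{0\neq f\in H}\langle\Lambda_W f,f\rangle_{W\cap\delta\Omega}/\langle f,f\rangle_{W\cap\delta\Omega}\ge\lambda_k(W)$; minimizing over $H$ and invoking \eqref{min_max_finite} yields $\sigma_k(W)\ge\lambda_k(W)$.

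Combining the finite comparison with $\lambda_k(W_i)\ge\lambda_k(\overline{\Omega})$, we obtain $\sigma_k(W_i)\ge\lambda_k(\overline{\Omega})$ for all large $i$; letting $i\to\infty$ and applying Proposition~\ref{eigenvalueapproximation} gives $\sigma_k(\Omega)=\lim_i\sigma_k(W_i)\ge\lambda_k(\overline{\Omega})$, as claimed.

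The only genuinely delicate point is getting the Rayleigh-quotient inequality to point in the right direction: the harmonic extension $u^W_f$, regarded as a function on all of $W$, carries at least as much $d$-weighted $\ell^2$-mass as its boundary trace $f$, so the Laplacian Rayleigh quotient of $u^W_f$ does not exceed the DtN Rayleigh quotient of $f$. Everything else is bookkeeping with the variational characterizations \eqref{min_max_finite}, \eqref{min_max_infiDefi} and the exhaustion limits already supplied by Proposition~\ref{eigenvalueapproximation} and \eqref{exhaustionMono}.
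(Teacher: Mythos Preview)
Your proof is correct and follows essentially the same route as the paper: reduce via Proposition~\ref{eigenvalueapproximation} and \eqref{infiKL} to the finite inequality $\sigma_k(W)\ge\lambda_{k,D}(W)$, then compare the two min--max formulas by pushing a $k$-dimensional test space $H\subset\R^{W\cap\delta\Omega}$ forward along the injective harmonic-extension map $f\mapsto u^W_f$ and using that enlarging the domain of summation from $W\cap\delta\Omega$ to $W$ can only increase the $d$-weighted $\ell^2$-mass in the denominator. The paper's write-up is slightly terser but the logic is identical.
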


By Definition \ref{defi}, $h_J( W)$ and ${h}( W)$ are non-increasing when $ W$ increases. So by \eqref{exhaustionMono}, for any $\mathcal{W}\uparrow\overline{\Omega}$, the corresponding Cheeger constants for $\Omega$ can be defined as
$$h_J(\Omega):=\lim_{\mathcal{W}\uparrow\overline{\Omega}}h_J( W),$$ $${h}(\overline{\Omega}):=\lim_{\mathcal{W}\uparrow\overline{\Omega}}{h}( W).$$
Hence, by Theorem \ref{finitejammeslowerestimate} and Proposition \ref{eigenvalueapproximation}, we have the following estimates for the bottom spectrum ${\sigma}_1(\Omega)$.

\begin{theorem}\label{infinitejammeslowerestimate}
Let $\Omega\subset V$ be an infinite subset. We have
$$ \frac{{h}(\overline{\Omega})\cdot {h}_J(\Omega)}{2}\leq {\sigma}_1(\Omega)\leq{h}_J(\Omega).$$
\end{theorem}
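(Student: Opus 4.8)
The plan is to derive the estimate from the finite-volume inequality of Theorem~\ref{finitejammeslowerestimate} by passing to a limit along an exhaustion. We may assume $\delta\Omega\neq\emptyset$, since otherwise the DtN operator and all the constants involved become trivial. Fix any exhaustion $\mathcal{W}=\{W_i\}_{i=1}^{\infty}\uparrow\overline{\Omega}$. Because each vertex of $\delta\Omega$ eventually belongs to some $W_i$, we have $W_i\cap\delta\Omega\neq\emptyset$ for all sufficiently large $i$; dropping finitely many initial terms (which changes neither the eigenvalue limit in Proposition~\ref{eigenvalueapproximation} nor the monotone limits in \eqref{exhaustionMono}), we assume this holds for every $i$. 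Theorem~\ref{finitejammeslowerestimate} applied to the finite subgraph $W_i$ then gives
\[
\frac{h(W_i)\cdot h_J(W_i)}{2}\ \leq\ \sigma_1(W_i)\ \leq\ h_J(W_i),\qquad \forall\, i.
\]

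Next I would let $i\to\infty$ in each term. By Lemma~\ref{eigenmono} the sequence $\sigma_1(W_i)$ is non-increasing, and by Proposition~\ref{eigenvalueapproximation} it converges to $\sigma_1(\Omega)$. By Definition~\ref{defi} the sequences $h(W_i)$ and $h_J(W_i)$ are non-increasing and non-negative, hence convergent, and by \eqref{exhaustionMono} their limits are exactly $h(\overline{\Omega})$ and $h_J(\Omega)$ (in particular independent of the chosen exhaustion). Taking $A=W_i$ in Definition~\ref{defi} shows $h_J(W_i)\leq \mu(\partial_{W_i}W_i)/d(W_i\cap\delta\Omega)<\infty$, so all three quantities are finite; therefore $\lim_{i\to\infty}h(W_i)h_J(W_i)=h(\overline{\Omega})\,h_J(\Omega)$, the limit of a product of convergent real sequences being the product of the limits.

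Passing to the limit in the displayed chain of inequalities then yields
\[
\frac{h(\overline{\Omega})\cdot h_J(\Omega)}{2}\ \leq\ \sigma_1(\Omega)\ \leq\ h_J(\Omega),
\]
which is the claim. I do not anticipate a genuine obstacle: Theorem~\ref{finitejammeslowerestimate} and Proposition~\ref{eigenvalueapproximation} carry out the real work, and the remainder is the elementary fact that the three defining sequences are monotone with finite non-negative limits, so the inequalities are preserved and the product factorizes in the limit. The only point needing a word of care is ensuring $W_i\cap\delta\Omega\neq\emptyset$ so that Theorem~\ref{finitejammeslowerestimate} applies and $h_J(W_i)$ is finite; this is automatic for large $i$, and alternatively one may retain the convention $h_J(W)=+\infty$ when $W\cap\delta\Omega=\emptyset$ and observe that such terms occur for only finitely many $i$ and hence do not affect the limit.
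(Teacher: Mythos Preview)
Your proposal is correct and follows essentially the same approach as the paper: apply Theorem~\ref{finitejammeslowerestimate} on each $W_i$, then invoke Proposition~\ref{eigenvalueapproximation} and the monotone limits defining $h(\overline{\Omega})$ and $h_J(\Omega)$ to pass to the limit. The paper's proof is in fact terser than yours; your additional care in ensuring $W_i\cap\delta\Omega\neq\emptyset$ and in justifying the factorization of the product limit is sound but not strictly needed beyond what the paper writes.
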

\begin{rem}\label{rem:try1} Combining \eqref{eq:comp1} with Theorem \ref{infinitejammeslowerestimate}, we have
$${\sigma}_1(\Omega)\geq \frac12h^2(\overline{\Omega}).$$ This can be also derived from the Cheeger estimate on $\overline{\Omega}$ \cite{Fujiwara1996} and Corollary~\ref{corodtn} for $k=1.$ In particular, this yields that if $h(\overline{\Omega})>0,$ then ${\sigma}_1(\Omega)>0.$

\end{rem}

For any finite subset $ W\subset\overline{\Omega}$ with $ W\cap\delta\Omega\neq\emptyset$,
we denote by $\mathcal{A}( W)$ the collection of all nonempty subsets of $ W$ and $\mathcal{A}_k( W)$ the set of all disjoint $k$-tuples $(A_1,\cdots,A_k)$ such that $A_l\in\mathcal{A}( W)$, $\forall l\in[k]$, where $[k]:=\{1,\cdots,k\}$, $\forall k\in\mathbb{N}^+$. Following \cite{Miclo2017}, we define the higher order Cheeger-type constants for the DtN operator on $ W.$
\begin{defi}\label{hOrderCheegerConstant}
The $k$-th order Cheeger-Steklov constant for the DtN operator on $ W$ is defined as
$${h}_k( W):=\min_{(A_1,\cdots,A_k)\in\mathcal{A}_k( W)}\max_{l\in[k]}{h}_J(A_l){h}(A_l).$$
Similarly, the $k$-th order Jammes-type Cheeger constant for $ W$ is defined as
$${h}^k_{J}( W):=\min_{(A_1,\cdots,A_k)\in\mathcal{A}_k( W)}\max_{l\in[k]}\frac{\mu(\partial_{ W}A_l)}{d(A_l\cap\delta\Omega)}.$$
\end{defi}

Following \cite{Miclo2017}, as an intermediary step to obtain a Cheeger-type estimate for the higher order eigenvalues of the DtN operators, we prove a higher order Cheeger estimate for the Dirichlet problem on finite graphs, see Theorem \ref{hd} in the paper. Initiated by \cite[Proposition~3]{Miclo2017}, any $k$-th eigenvalue of the DtN operator on $W$ can be approximated by a sequence of the $k$-th eigenvalues of the Dirichlet Laplacians defined in \eqref{def:blowup} and \eqref{dtnapproximate}, see Proposition~\ref{approximation}.  Hence, combining Theorem \ref{hd} with Proposition \ref{approximation}, we obtain the following result.
\begin{theorem}\label{hdtn}
There exists an universal constant $c>0$ such that
$$\frac{c}{k^6}{h}_k( W)\leq \sigma_k( W)\leq{h}^k_{J}( W),$$
where $\sigma_k( W)$ is the $k$-th  eigenvalue of the DtN operator on $ W$.
\end{theorem}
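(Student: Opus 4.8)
The plan is to prove Theorem \ref{hdtn} by combining two ingredients, exactly along the lines suggested in the paragraph preceding its statement: (i) a higher order Cheeger estimate for the Dirichlet Laplacian on a suitable sequence of finite ``blow-up'' graphs (Theorem \ref{hd}), and (ii) an approximation result (Proposition \ref{approximation}) saying that $\sigma_k(W)$ is the limit of the $k$-th Dirichlet eigenvalues of these blow-up graphs. The upper bound $\sigma_k(W)\le h^k_J(W)$ should be the easy direction: given a minimizing disjoint $k$-tuple $(A_1,\dots,A_k)\in\mathcal{A}_k(W)$ for $h^k_J(W)$, I would use the test functions supported on $A_l\cap\delta\Omega$ (more precisely, the boundary traces of the truncated harmonic-type competitors $u^W_{\mathbf{1}_{A_l}}$ or simply indicator functions), plug the $k$-dimensional span into the min-max characterization \eqref{min_max_finite}, and estimate each Rayleigh quotient by $\mu(\partial_W A_l)/d(A_l\cap\delta\Omega)$ using the standard Dirichlet-energy/boundary-flux identity for the DtN form. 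Since the $A_l$ are disjoint, the cross terms vanish and the max over the span is controlled by $\max_l \mu(\partial_W A_l)/d(A_l\cap\delta\Omega)$, which is $h^k_J(W)$.

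For the lower bound, I would first invoke Proposition \ref{approximation}: the $k$-th DtN eigenvalue $\sigma_k(W)$ equals $\lim_{m\to\infty}\lambda_k(G_m)$, where $G_m$ is the blow-up graph obtained from $W$ by attaching, to each boundary vertex, a long path (or a large-weight pendant structure) as in \eqref{def:blowup}–\eqref{dtnapproximate}, and $\lambda_k(G_m)$ is the $k$-th Dirichlet eigenvalue there. Next, apply Theorem \ref{hd} on each $G_m$: there is a universal $c>0$ with $\lambda_k(G_m)\ge \frac{c}{k^6}\,\mathcal{H}_k(G_m)$, where $\mathcal{H}_k(G_m)$ is the corresponding $k$-th order Dirichlet Cheeger constant (a min-max over disjoint $k$-tuples of subsets of $V(G_m)$ of a product of a ``classical'' and a ``relative'' isoperimetric ratio). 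The crucial step is then to show that
$$\liminf_{m\to\infty}\mathcal{H}_k(G_m)\ \ge\ h_k(W),$$
i.e. that optimal disjoint families in the blow-up graphs can be pushed down to disjoint families $(A_1,\dots,A_k)\in\mathcal{A}_k(W)$ whose product $h_J(A_l)h(A_l)$ (in the sense of Definition \ref{hOrderCheegerConstant}) is not larger, up to $o(1)$, than $\mathcal{H}_k(G_m)$. This is where the bulk of the work lies: one must analyze, for a near-optimal subset $S\subset V(G_m)$, how $S$ intersects the attached paths versus the core $W$, argue that one may assume $S$ either contains a whole pendant path over a boundary vertex or avoids it (by a truncation/exchange argument that does not increase the ratio), and thereby translate the blow-up isoperimetric ratio into the $W$-intrinsic quantities $\mu(\partial_W A)$, $d(A)$, $d(A\cap\delta\Omega)$. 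Passing to the limit in $m$ converts the attached-path contribution into the DtN normalization, reproducing the factor $h_J$.

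The main obstacle, as indicated, is this last comparison between the Dirichlet-Cheeger constants of the blow-up graphs $G_m$ and the Cheeger-Steklov constant $h_k(W)$: one needs it to be essentially uniform in $m$ and to respect the disjointness of the $k$ pieces simultaneously (a competitor minimizing one piece's ratio must not destroy another's). I would handle this by working piece-by-piece — for each $l$, apply the truncation/exchange argument to $A_l$ separately, using that the pieces live in disjoint regions so modifications of $A_l$ inside the pendant attached to a boundary vertex $x$ do not affect $A_{l'}$ for $l'\ne l$ (each boundary vertex belongs to at most one piece, or can be reassigned without increasing the max) — and then take $\max_{l\in[k]}$ at the end. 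Once $\liminf_m \mathcal{H}_k(G_m)\ge h_k(W)$ is established, combining it with Theorem \ref{hd} and Proposition \ref{approximation} gives
$$\sigma_k(W)=\lim_{m\to\infty}\lambda_k(G_m)\ \ge\ \frac{c}{k^6}\liminf_{m\to\infty}\mathcal{H}_k(G_m)\ \ge\ \frac{c}{k^6}\,h_k(W),$$
which together with the upper bound completes the proof.
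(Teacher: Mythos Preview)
Your upper-bound argument and the overall skeleton (combine Theorem~\ref{hd} with Proposition~\ref{approximation}) match the paper. But two concrete misreadings of the setup lead you to invent a difficulty that is not actually there.

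First, the ``blow-up'' in \eqref{def:blowup}--\eqref{dtnapproximate} does \emph{not} attach paths or add vertices: it is a change of vertex measure on the \emph{same} finite set $W$, namely $m^{(r)}_x=d(x)$ on $\delta\Omega$ and $m^{(r)}_x=\tfrac{1}{r}d(x)$ on $\Omega$. Hence for every $r$ the family $\mathcal{A}_k(W)$ of disjoint $k$-tuples is identical; there is nothing to ``push down.''

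Second, the constant $\Gamma_k(W)$ in Theorem~\ref{hd} is \emph{not} a product of isoperimetric ratios; it is
\[
\Gamma_k(W)=\min_{(A_1,\dots,A_k)\in\mathcal{A}_k(W)}\max_{l\in[k]}\lambda^{\nu}_{1,D}(A_l),
\]
a min--max of first Dirichlet eigenvalues. With the measure $m^{(r)}$ this reads $\min\max_l \lambda^{(r)}_{1,D}(A_l)$. Since $W$ is finite, $\mathcal{A}_k(W)$ is finite, so one may swap $\lim_{r\to\infty}$ with the min--max. Proposition~\ref{approximation}, applied now to each $A_l$, gives $\lim_{r\to\infty}\lambda^{(r)}_{1,D}(A_l)=\sigma_1(A_l)$, and only then does Theorem~\ref{finitejammeslowerestimate} convert $\sigma_1(A_l)$ into $\tfrac12 h_J(A_l)h(A_l)$. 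Chaining these,
\[
\sigma_k(W)=\lim_{r}\lambda^{(r)}_{k,D}(W)\ \ge\ \frac{c}{k^6}\min_{(A_l)}\max_l \lim_{r}\lambda^{(r)}_{1,D}(A_l)=\frac{c}{k^6}\min_{(A_l)}\max_l \sigma_1(A_l)\ \ge\ \frac{c'}{k^6}\,h_k(W).
\]
So the ``truncation/exchange argument on pendant paths'' you flag as the main obstacle is simply not needed; once you use the correct form of the blow-up and of $\Gamma_k$, the lower bound is a three-line chain of already-proved results.
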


By definition \ref{hOrderCheegerConstant}, ${h}_k( W)$ and ${h}^k_{J}( W)$ are non-increasing when $ W$ increases. Hence the corresponding Cheeger constants for $\Omega$ can be defined as
$${h}_k(\Omega):=\lim_{\mathcal{W}\uparrow\overline{\Omega}}{h}_k( W),$$
$${h}^k_{J}(\Omega):=\lim_{\mathcal{W}\uparrow\overline{\Omega}}{h}^k_{J}( W).$$

Finally, by exhaustion, the monotonicity of the higher order eigenvalues of the DtN operators, see Lemma \ref{eigenmono} in the paper, and the convergence of eigenvalues, see Proposition \ref{eigenvalueapproximation}, we have the following higher order Cheeger-type estimate for the DtN operator on infinite graphs.

\begin{theorem}\label{infinitehigherlowwerbound}
For any $k\in\mathbb{N}$, there exists a universal constant $c>0$ such that
$$\frac{c}{k^6}{h}_k(\Omega)\leq {\sigma}_k(\Omega)\leq{h}^k_{J}(\Omega).$$
\end{theorem}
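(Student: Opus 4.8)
The plan is to obtain the estimate for the infinite subgraph $\Omega$ as a limit of the corresponding estimates on the finite pieces $W_i$ of an exhaustion $\mathcal{W}\uparrow\overline{\Omega}$, using Theorem \ref{hdtn} as the finite-level input. Fix such an exhaustion. For each $i$ large enough that $W_i\cap\delta\Omega\neq\emptyset$, Theorem \ref{hdtn} gives, for the universal constant $c>0$,
\[
\frac{c}{k^6}\,h_k(W_i)\leq \sigma_k(W_i)\leq h^k_{J}(W_i).
\]
Now let $i\to\infty$. On the right-hand side, $h^k_{J}(W_i)\to h^k_{J}(\Omega)$ by the very definition of $h^k_{J}(\Omega)$ via \eqref{exhaustionMono}, together with the monotonicity of $h^k_{J}(W)$ in $W$ noted just before the theorem; similarly $h_k(W_i)\to h_k(\Omega)$ on the left. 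In the middle, Proposition \ref{eigenvalueapproximation} gives $\sigma_k(W_i)\to\sigma_k(\Omega)$. Passing to the limit in both inequalities simultaneously yields
\[
\frac{c}{k^6}\,h_k(\Omega)\leq \sigma_k(\Omega)\leq h^k_{J}(\Omega),
\]
which is the claim. One should also record the degenerate case: if $\delta\Omega$ is finite, then for $k>\sharp(\delta\Omega)$ the spaces $H\subset\ell_0(\delta\Omega)$ of dimension $k$ in \eqref{min_max_infiDefi} do not exist, and likewise $\mathcal{A}_k(W_i)=\emptyset$ for large $i$, so both sides are $+\infty$ by convention and the inequality holds trivially; in the generic case where $\delta\Omega$ is infinite this issue does not arise.

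The conceptual content is entirely carried by Theorem \ref{hdtn}, Proposition \ref{eigenvalueapproximation}, and Lemma \ref{eigenmono}; the role of this theorem is purely to package the passage to the limit, so the argument is short. The one point that deserves care is the coherence of the limits: I would emphasize that the universal constant $c$ in Theorem \ref{hdtn} does not depend on $W$, so it survives the limit unchanged — otherwise one could not conclude a clean inequality for $\Omega$. I would also point out that, because all three quantities $h_k(W_i)$, $\sigma_k(W_i)$, $h^k_{J}(W_i)$ are monotone in $i$ (non-increasing, by Definition \ref{hOrderCheegerConstant} and Lemma \ref{eigenmono}), each limit exists in $[0,+\infty]$ and is independent of the chosen exhaustion, so the resulting estimate for $\sigma_k(\Omega)$ is well posed.

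The main obstacle, such as it is, is not in this theorem but is imported: it lies in establishing the two-sided finite estimate of Theorem \ref{hdtn}, whose lower bound rests on the higher order Cheeger estimate for the Dirichlet problem (Theorem \ref{hd}) combined with the eigenvalue approximation of the DtN operator by Dirichlet Laplacians (Proposition \ref{approximation}) — and Theorem \ref{hd} in turn invokes the random-partitioning machinery of Lee--Oveis Gharan--Trevisan that produces the $k^{-6}$ factor. Granting those, the proof here is a two-line limiting argument; the only thing to verify by hand is that the convergence statements of Proposition \ref{eigenvalueapproximation} and of the Cheeger constants really apply to the same exhaustion simultaneously, which they do since an exhaustion of $\overline{\Omega}$ is fixed once and for all.
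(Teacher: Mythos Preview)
Your proposal is correct and follows essentially the same route as the paper: apply Theorem~\ref{hdtn} to each $W_i$ in an exhaustion, then pass to the limit using Proposition~\ref{eigenvalueapproximation} for $\sigma_k$ and the definitions of $h_k(\Omega)$ and $h^k_J(\Omega)$ for the Cheeger constants. The paper's own proof is the two-line limiting argument you describe, without the additional commentary on degenerate cases and exhaustion-independence.
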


\begin{rem}
Hassannezhad-Miclo \cite[Theorem~B]{Miclo2017} defined the DtN operator on a subset of a probability measure space $(M,\mathcal{M},\nu)$, endowed with a Markov kernel $P$ leaving $\nu$ invariant, and proved the higher order Cheeger estimate. Our result applies to general infinite graphs with possibly infinite total measure, which can be regarded as an extension of Hassannezhad-Miclo's result.
\end{rem}

The organization of the paper is as follows: In Section 2, we recall some facts on graphs. In Section 3, we study the spectra of the DtN operators on infinite subgraphs. In Section 4, we prove the Jammes-type Cheeger estimate for the bottom spectrum of the DtN operators. In Section 5, we obtain higher order Cheeger estimate for the Dirichlet problems. In Section 6, we prove higher order Cheeger estimate for the DtN operators on infinite subgraphs of graphs.

\section{Preliminaries}

Let $(X,\nu)$ be a discrete measure space, i.e. $X$ is a countable discrete space equipped with a Borel measure $\nu.$ For $p\in[1,\infty],$ the space of $\ell^p$ summable functions on $(X,\nu)$ is defined routinely: Given a function $f\in\R^X,$ for $p\in[1,\infty)$, we denote by
$$\|f\|_{\ell^p}=\left(\sum_{x\in X}|f(x)|^p\nu(x)\right)^{1/p}$$
the $\ell^p$ norm of $f.$ For $p=\infty,$
$$\|f\|_{\ell^\infty}=\sup_{x\in X}|f(x)|.$$
Let
$$\ell^p(X,\nu):=\{\left.f\in\R^X \right| \|f\|_{\ell^p}<\infty\}$$ be the space of $\ell^p$ summable functions on $(X,\nu).$ In our setting, these definitions apply to $(\Omega,d)$ and $(\delta\Omega,d)$ for $\Omega\subset V$ in a graph $(V,E,\mu).$ The case for $p=2$ is of particular interest, as we have the Hilbert spaces $\ell^2(\Omega,d)$ and $\ell^2(\delta\Omega,d)$ equipped with standard inner products
\[\aligned
\langle f,g\rangle_{\Omega} & =\sum_{x\in \Omega}f(x)g(x)d(x),\quad f,g\in \R^\Omega,
\\
\langle \varphi,\psi\rangle_{\delta\Omega} &=\sum_{x\in\delta\Omega}\varphi(x)\psi(x)d(x),\quad \varphi,\psi\in \R^{\delta\Omega}.
\endaligned\]

Given $ W\subset\overline{\Omega},$ an associated quadratic form is defined as
$$D_W(f,g)=\sum_{e=\{x,y\}\in E( W,\overline{ W})}\mu_{xy}(f(x)-f(y))(g(x)-g(y)), \quad f, g\in\R^{\overline{ W}}.$$
The Dirichlet energy of $f\in\R^{\overline{ W}}$ can be written as
\[D_W(f):=D_W(f,f).\]
For any $f\in\R^{\overline{\Omega}},$ the Laplacian of $f$ is defined as
$$\Delta f(x):=\frac{1}{d(x)}\sum_{y\in V: y\sim x}\mu_{xy}(f(y)-f(x)),\quad x\in \Omega.$$
For any $f\in \R^{\overline{\Omega}}$, the outward normal derivative of $f$ at $z\in\delta\Omega$ is defined as
\begin{equation}\label{neumannboundaryoperator}\frac{\partial f}{\partial n}(z):=\frac{1}{d(z)}\sum_{x\in\Omega: x\sim z}\mu_{zx}(f(z)-f(x)).\end{equation}
For any finite subset $W\subset\overline{\Omega}$, the Dirichlet eigenvalue problem on $W$ is defined as
\begin{equation}\label{normalizeDL}
\left\{
       \begin{array}{ll}
        \Delta f(x)=-\lambda f(x), &\quad x\in W, \\
        f(x)=0, &\quad x\in\delta W.
               \end{array}
     \right.
\end{equation}
We denote by $\lambda_{k,D}( W)$ the $k$-th eigenvalue of the above Dirichlet problem. For any $\mathcal{W}\uparrow\overline{\Omega}$, the $k$-th eigenvalue of the normalized Laplacian on the graph $\overline{\Omega}=(\overline{\Omega},E(\Omega,\overline{\Omega}),\mu)$ is defined as
\begin{equation}\label{infiKL}\lambda_k(\overline{\Omega}):=\lim_{i\rightarrow\infty}\lambda_{k,D}(W_i).\end{equation}

We recall the following well-known results for the Laplace operators, see e.g. \cite{Grigor'yan2009} for their proofs.
\begin{lemma}\label{Green's formula}
(Green's formula) For any finite subset $ W\subset\overline{\Omega}$ and any $f,g\in\R^{\overline{ W}}$, we have
\begin{equation}\label{eq:green formula}
\langle \Delta f,g\rangle_W=-D_{ W}(f,g)+\left\langle \frac{\partial f}{\partial n},g\right\rangle_{\delta W},
\end{equation}
where $\frac{\partial f}{\partial n}$ on $\delta W$ is defined similarly as \eqref{neumannboundaryoperator} with $\Omega$ replaced by $ W.$
\end{lemma}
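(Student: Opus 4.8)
My plan is to establish \eqref{eq:green formula} by a bookkeeping argument that sums over the edges of $E(W,\overline W)$. First I would unwind the left-hand side: since $\langle\Delta f,g\rangle_W=\sum_{x\in W}\Delta f(x)\,g(x)\,d(x)$ and $\Delta f(x)=\frac1{d(x)}\sum_{y\sim x}\mu_{xy}(f(y)-f(x))$, the weight $d(x)$ cancels, giving
\[
\langle\Delta f,g\rangle_W=\sum_{x\in W}\sum_{y\sim x}\mu_{xy}\,g(x)\bigl(f(y)-f(x)\bigr)=\sum_{(x,y)}\mu_{xy}\,g(x)\bigl(f(y)-f(x)\bigr),
\]
the last sum being over ordered pairs $(x,y)$ with $x\in W$ and $\{x,y\}\in E(W,\overline W)$; here one uses that, since $\delta W$ is the vertex boundary of $W$ inside $\overline\Omega$, every neighbor of a vertex of $W$ already lies in $\overline W$, so no edge is missed. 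Self-loops contribute $0$ everywhere and may be ignored.

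Next I would split $E(W,\overline W)$ into \emph{interior} edges (both endpoints in $W$) and \emph{boundary} edges (one endpoint in $W$, the other in $\delta W$); no edge of $E(W,\overline W)$ has both endpoints in $\delta W$. An interior edge $\{x,y\}$ contributes in both orientations, and
\[
\mu_{xy}g(x)\bigl(f(y)-f(x)\bigr)+\mu_{xy}g(y)\bigl(f(x)-f(y)\bigr)=-\mu_{xy}\bigl(f(x)-f(y)\bigr)\bigl(g(x)-g(y)\bigr),
\]
i.e. exactly minus its contribution to $D_W(f,g)$, with no boundary contribution. A boundary edge with $x\in W$, $y\in\delta W$ contributes only through the orientation $(x,y)$, and writing $g(x)=(g(x)-g(y))+g(y)$ splits its contribution as
\[
\mu_{xy}g(x)\bigl(f(y)-f(x)\bigr)=-\mu_{xy}\bigl(f(x)-f(y)\bigr)\bigl(g(x)-g(y)\bigr)+\mu_{xy}\,g(y)\bigl(f(y)-f(x)\bigr),
\]
where the first term is again minus its contribution to $D_W(f,g)$ and the second, after summing over $y\in\delta W$ and using $\frac{\partial f}{\partial n}(y)=\frac1{d(y)}\sum_{x\in W,\,x\sim y}\mu_{yx}(f(y)-f(x))$ (whose weight $d(y)$ cancels inside $\langle\partial f/\partial n,g\rangle_{\delta W}$), is exactly its contribution to $\langle\partial f/\partial n,g\rangle_{\delta W}$. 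Summing all of these per-edge identities over $E(W,\overline W)$ yields \eqref{eq:green formula}.

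The argument is elementary, so there is no genuine obstacle; the one point worth flagging is the asymmetry in the definition of $\Delta$, which is evaluated only at vertices of $W$. Precisely this asymmetry causes each boundary edge to be counted once rather than twice, and that single counting is what produces the normal-derivative term $\langle\partial f/\partial n,g\rangle_{\delta W}$ on the right. (Alternatively one could reduce to $g=\mathbf 1_v$ by bilinearity and treat $v\in W$ and $v\in\delta W$ separately, but the edge-by-edge accounting disposes of both simultaneously.)
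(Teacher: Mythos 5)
Your proof is correct: the edge-by-edge summation-by-parts bookkeeping, with interior edges counted in both orientations and boundary edges counted once (producing the $\left\langle \frac{\partial f}{\partial n},g\right\rangle_{\delta W}$ term), is exactly the standard argument, and your handling of the weight cancellations, of the fact that all neighbors of $W$ lie in $\overline{W}$, and of self-loops is sound. The paper itself omits the proof and simply cites Grigor'yan, so your argument coincides with the intended (standard) one rather than offering a genuinely different route.
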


\begin{lemma}\label{harmonic extension}
For any $f \in\R^{ W\cap\delta\Omega},$ there exists a unique function $u_{f}^{ W}\in\R^{\overline{ W}}$ satisfying \eqref{existlemma}.
\end{lemma}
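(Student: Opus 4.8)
\textbf{Proof proposal for Lemma~\ref{harmonic extension}.}

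The plan is to reformulate the boundary value problem \eqref{existlemma} as a linear system on the finite set $W\cap\Omega$ and exploit the positive-definiteness of the associated Dirichlet form to get both existence and uniqueness. First I would note that prescribing $u_f^W(x)=f(x)$ on $W\cap\delta\Omega$ and $u_f^W(x)=0$ on $\delta W$ fixes the function completely on $\overline W\setminus(W\cap\Omega)$; so the only unknowns are the values $v(x):=u_f^W(x)$ for $x\in W\cap\Omega$, a finite-dimensional vector space $\R^{W\cap\Omega}$. The remaining equations $\Delta u_f^W(x)=0$ for $x\in W\cap\Omega$ then become an inhomogeneous linear system $Lv=b$, where $L$ is the $(W\cap\Omega)\times(W\cap\Omega)$ matrix obtained by restricting the graph Laplacian (with the convention that neighbors in $W\cap\delta\Omega$ and in $\delta W$ contribute to the diagonal but their off-diagonal entries are moved to the right-hand side $b$, which encodes the boundary data $f$ and $0$).

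Since $L$ is a square matrix, it suffices to show that the homogeneous problem has only the trivial solution, i.e.\ that $L$ is invertible. So suppose $w\in\R^{\overline W}$ satisfies $\Delta w(x)=0$ on $W\cap\Omega$, $w=0$ on $W\cap\delta\Omega$, and $w=0$ on $\delta W$; I want $w\equiv 0$. The clean way is an energy argument via Green's formula (Lemma~\ref{Green's formula}): extend $w$ by its given values and compute
\[
0=\langle\Delta w,w\rangle_W=-D_W(w,w)+\left\langle\frac{\partial w}{\partial n},w\right\rangle_{\delta W}.
\]
Here $\langle\Delta w,w\rangle_W=0$ because $\Delta w=0$ on $W\cap\Omega$ while $w=0$ on $W\cap\delta\Omega$ (so every term in the sum over $W$ vanishes), and $\langle\partial w/\partial n,w\rangle_{\delta W}=0$ because $w=0$ on $\delta W$. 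Hence $D_W(w,w)=0$, which forces $w$ to be constant on each connected component of the graph $(\overline W,E(W,\overline W),\mu)$; since $w$ vanishes at the boundary vertices $W\cap\delta\Omega$ and $\delta W$, and since (by our standing conventions, after the modification of $\mu$) every vertex of $W$ is connected within $\overline W$ to such a boundary vertex — more precisely, each component of $\overline W$ that meets $W\cap\Omega$ must contain a vertex of $\delta W$ or $W\cap\delta\Omega$, as a vertex in $W\cap\Omega$ with a neighbor outside $W$ lands in $\delta W$ — we conclude $w\equiv 0$ on that component, hence $w\equiv0$ everywhere on $\overline W$. Alternatively, one can invoke the maximum principle directly: a function that is harmonic on $W\cap\Omega$ attains its maximum and minimum on the boundary $\partial(W\cap\Omega)\subset (W\cap\delta\Omega)\cup\delta W$, where $w=0$, forcing $w\equiv0$.

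Therefore $L$ is invertible, the system $Lv=b$ has a unique solution for every right-hand side $b$, and so for every $f\in\R^{W\cap\delta\Omega}$ there is a unique $u_f^W\in\R^{\overline W}$ satisfying \eqref{existlemma}. The only point requiring a little care — and the step I expect to be the main (minor) obstacle — is justifying the connectivity claim used in the energy argument, namely that after removing the edges $E(\delta\Omega,\delta\Omega)$ and restricting to $\overline W$, no connected component consisting solely of interior vertices $W\cap\Omega$ can exist: any $x\in W\cap\Omega$ either has all its neighbors in $W$ (so one follows a path staying in $W\cap\Omega$) or has a neighbor in $\overline W\setminus W=\delta W$; iterating in a finite set, one either reaches $\delta W$ or reaches a vertex of $W\cap\delta\Omega$ (recalling $\delta\Omega$-vertices have neighbors only in $\Omega$). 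Handling possible self-loops is harmless since they contribute nonnegatively to $D_W$ and do not affect the argument. With this in hand the lemma is proved; one may also remark that $u_f^W$ depends linearly on $f$, a fact used implicitly in the definition of $\Lambda_W$.
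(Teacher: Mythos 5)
Your overall route is the standard one: reduce \eqref{existlemma} to a square linear system on the unknowns in $W\cap\Omega$ and prove invertibility by showing the homogeneous problem has only the trivial solution, via Green's formula (Lemma~\ref{Green's formula}) and $D_W(w)=0$. The paper itself gives no proof of this lemma (it refers to Grigor'yan's notes), and what you write is exactly the expected argument; the vanishing of $\langle\Delta w,w\rangle_W$ term by term, the vanishing of the $\delta W$ boundary term, the treatment of self-loops, and the remark on linearity in $f$ are all fine.

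The one genuine soft spot is the step you yourself flag, and your justification of it does not work as written. ``Iterating in a finite set, one either reaches $\delta W$ or reaches a vertex of $W\cap\delta\Omega$'' is false in general: the iteration can simply cycle inside a finite connected component of $\overline{\Omega}$ that is entirely contained in $W\cap\Omega$. In that situation the lemma itself fails --- adding a constant on such a component to any solution produces another solution of \eqref{existlemma}, so your matrix $L$ is singular --- hence the claim cannot be deduced from the stated hypotheses; it requires the standing (implicit) assumption that the graph $\overline{\Omega}=(\overline{\Omega},E(\Omega,\overline{\Omega}),\mu)$ is connected (or at least that no finite connected component of it lies inside $\Omega$). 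With that assumption the correct argument is short and should replace your iteration: since $\overline{\Omega}$ is connected and infinite while $W$ is finite, every $x\in W\cap\Omega$ admits a path in $\overline{\Omega}$ that leaves $W$; the first vertex of this path outside $W\cap\Omega$ lies either in $W\cap\delta\Omega$ or, being adjacent to the preceding vertex of $W$, in $\delta W$; and every edge traversed before that point has an endpoint in $W\cap\Omega$, hence belongs to $E(W,\overline{W})$ and carries zero increment of $w$ once $D_W(w)=0$. This gives $w\equiv 0$ and completes your proof; the same connectivity input is what underlies the maximum-principle variant you mention.
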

We always denote by $u_{f}^{ W}$ the unique solution of \eqref{existlemma} with the boundary condition $f$ in this paper. By Green's formula, Lemma \ref{Green's formula}, we have the following lemma.
\begin{lemma}\label{fgreen}
For any finite subset $ W\subset\overline{\Omega}$ with $ W\cap\delta\Omega\neq\emptyset$ and  any $f,g\in\mathbb{R}^{ W\cap\delta\Omega},$ we have
$$D_{W}(u^{ W}_f,u^{ W}_g)=\left\langle\frac{\partial u^{ W}_f}{\partial n}, g\right\rangle_{ W\cap\delta\Omega}.$$
\end{lemma}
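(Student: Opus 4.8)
The plan is to obtain Lemma~\ref{fgreen} as a direct consequence of Green's formula (Lemma~\ref{Green's formula}), exploiting the two defining features of the harmonic extensions in \eqref{existlemma}: they vanish on $\delta W$ and they are harmonic on $W\cap\Omega$. Write $u:=u^W_f$ and $v:=u^W_g$. Applying \eqref{eq:green formula} to the pair $(u,v)$ gives $\langle \Delta u, v\rangle_W = -D_W(u,v) + \left\langle \tfrac{\partial u}{\partial n}, v\right\rangle_{\delta W}$. First I would note that the boundary term vanishes: by \eqref{existlemma} we have $v=u^W_g\equiv 0$ on $\delta W$, hence $\left\langle \tfrac{\partial u}{\partial n}, v\right\rangle_{\delta W}=\sum_{z\in\delta W}\tfrac{\partial u}{\partial n}(z)\,v(z)\,d(z)=0$.

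Next I would expand the left-hand side by splitting $W$ into the disjoint pieces $W\cap\Omega$ and $W\cap\delta\Omega$. Since $\Delta u(x)=0$ for every $x\in W\cap\Omega$ by \eqref{existlemma}, and $v=g$ on $W\cap\delta\Omega$, we get $\langle \Delta u, v\rangle_W = \sum_{x\in W\cap\delta\Omega}\Delta u(x)\,g(x)\,d(x)$. The crucial step is to recognize $-\Delta u(x)$ at such a vertex as the outward normal derivative of $u^W_f$. If $x\in W\cap\delta\Omega\subset\delta\Omega$, then every neighbor $y\sim x$ lies in $\Omega$ (all edges inside $\delta\Omega$ having been removed in passing to $\overline\Omega$); moreover $y\in\overline W$ since $x\in W$, so $u(y)=u^W_f(y)$ is well defined. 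Comparing with \eqref{neumannboundaryoperator},
$$-\Delta u(x)=\frac{1}{d(x)}\sum_{y\sim x}\mu_{xy}\bigl(u(x)-u(y)\bigr)=\frac{1}{d(x)}\sum_{y\in\Omega:\,y\sim x}\mu_{xy}\bigl(u^W_f(x)-u^W_f(y)\bigr)=\frac{\partial u^W_f}{\partial n}(x).$$

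Combining the three identities gives $D_W(u^W_f,u^W_g)=-\sum_{x\in W\cap\delta\Omega}\Delta u(x)\,g(x)\,d(x)=\sum_{x\in W\cap\delta\Omega}\tfrac{\partial u^W_f}{\partial n}(x)\,g(x)\,d(x)=\left\langle \tfrac{\partial u^W_f}{\partial n}, g\right\rangle_{W\cap\delta\Omega}$, which is the asserted identity. I do not anticipate a genuine obstacle; the only points requiring care are the bookkeeping among the boundary notions $\delta\Omega$, $\delta W$, $\overline W$, and the fact that on the left of \eqref{eq:green formula} the symbol $\Delta u$ is to be read via the same formula $\Delta u(x)=\frac{1}{d(x)}\sum_{y\sim x}\mu_{xy}(u(y)-u(x))$ evaluated also at the boundary vertices $x\in W\cap\delta\Omega$ — which is legitimate because such an $x$ has all of its neighbors inside $\overline W$ and its weight $d(x)=\sum_{y\in\Omega}\mu_{xy}$ counts precisely those neighbors.
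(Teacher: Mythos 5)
Your proof is correct and follows exactly the route the paper indicates: it only remarks that the lemma follows from Green's formula (Lemma~\ref{Green's formula}), and your argument fills in precisely that derivation, using the Dirichlet condition on $\delta W$, harmonicity on $W\cap\Omega$, and the identification $-\Delta u^W_f=\frac{\partial u^W_f}{\partial n}$ at vertices of $W\cap\delta\Omega$ (legitimate since all their neighbors lie in $\Omega\cap\overline W$ and $d(x)=\sum_{y\in\Omega}\mu_{xy}$ there). No gaps; the bookkeeping you flag is handled correctly.
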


\section{DtN operators on infinite graphs}
Let $G=(V,E)$ be an infinite graph, $\Omega\subset V$ is an infinite subset. Let $\Lambda$ be the DtN operator on $\Omega$ defined in the introduction.

\begin{prop}\label{prop:selfadjoint} For any $f,g\in \ell_0(\delta\Omega),$
$$\langle\Lambda (f),g\rangle=\langle f, \Lambda (g)\rangle.$$
\end{prop}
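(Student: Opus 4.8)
The plan is to reduce the self-adjointness of $\Lambda$ on $\ell_0(\delta\Omega)$ to the self-adjointness of the finite-dimensional operators $\Lambda_{W_i}$ via the exhaustion $\mathcal{W}\uparrow\overline{\Omega}$, and then pass to the limit. First I would fix $f,g\in\ell_0(\delta\Omega)$; since both have finite support, choose an exhaustion $\mathcal{W}=\{W_i\}$ of $\overline{\Omega}$ with $\mathrm{supp}(f)\cup\mathrm{supp}(g)\subset W_i\cap\delta\Omega$ for all $i$. On each finite graph $W_i$, Lemma~\ref{fgreen} applied to the harmonic extensions $u_f^{W_i},u_g^{W_i}$ gives
\[
D_{W_i}(u_f^{W_i},u_g^{W_i})=\Big\langle \tfrac{\partial u_f^{W_i}}{\partial n},g\Big\rangle_{W_i\cap\delta\Omega},
\qquad
D_{W_i}(u_g^{W_i},u_f^{W_i})=\Big\langle \tfrac{\partial u_g^{W_i}}{\partial n},f\Big\rangle_{W_i\cap\delta\Omega}.
\]
Since $D_{W_i}(\cdot,\cdot)$ is manifestly symmetric, the two right-hand sides are equal, i.e. $\langle\Lambda_{W_i}(f),g\rangle_{W_i\cap\delta\Omega}=\langle f,\Lambda_{W_i}(g)\rangle_{W_i\cap\delta\Omega}$. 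Because $f,g$ are supported in $W_i\cap\delta\Omega$ and the normal derivative at a boundary vertex $z$ only involves the values of $u_f$ at $z$ and its neighbors in $\Omega$, these inner products can be rewritten as $\langle\Lambda_{W_i}(f),g\rangle_{\delta\Omega}=\langle f,\Lambda_{W_i}(g)\rangle_{\delta\Omega}$ (the sum being over the common finite support of $f$ and $g$).

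Next I would take $i\to\infty$. By construction in the introduction, $u_f^{W_i}\to u_f$ and $u_g^{W_i}\to u_g$ pointwise (the maximum-principle monotonicity applied to $f^\pm$, $g^\pm$), and these are uniformly bounded. The quantity $\langle\Lambda_{W_i}(f),g\rangle_{\delta\Omega}=\sum_{z\in\mathrm{supp}(g)}\frac{\partial u_f^{W_i}}{\partial n}(z)\,g(z)\,d(z)$ is a finite sum, and each term $\frac{\partial u_f^{W_i}}{\partial n}(z)=\frac{1}{d(z)}\sum_{x\sim z,\,x\in\Omega}\mu_{zx}(u_f^{W_i}(z)-u_f^{W_i}(x))$ is again a finite sum (local finiteness of the graph) converging to $\frac{\partial u_f}{\partial n}(z)=\Lambda(f)(z)$ as $i\to\infty$. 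Hence $\langle\Lambda_{W_i}(f),g\rangle_{\delta\Omega}\to\langle\Lambda(f),g\rangle$ and likewise $\langle f,\Lambda_{W_i}(g)\rangle_{\delta\Omega}\to\langle f,\Lambda(g)\rangle$. Passing to the limit in the finite-graph identity yields $\langle\Lambda(f),g\rangle=\langle f,\Lambda(g)\rangle$, as desired.

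The only genuinely delicate point is the interchange of limits: one must be sure that $u_f^{W_i}(z)\to u_f(z)$ and $u_f^{W_i}(x)\to u_f(x)$ for the finitely many vertices $x\sim z$ entering each normal derivative, so that the finite-sum expression for $\frac{\partial u_f^{W_i}}{\partial n}(z)$ converges termwise to that for $\frac{\partial u_f}{\partial n}(z)$. This is immediate from the pointwise convergence $u_f^{W_i}\to u_f$ established in the introduction together with local finiteness, since there is no infinite summation to control — every sum in sight (over $\mathrm{supp}(f)$, over $\mathrm{supp}(g)$, over neighbors of a fixed vertex) is finite. Once that observation is in place, the argument is entirely routine; alternatively, one could bypass the limit and argue directly on $\overline{\Omega}$ using a suitable Green-type identity for $u_f$, but the exhaustion route reuses Lemma~\ref{fgreen} verbatim and is the cleanest.
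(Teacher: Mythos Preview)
Your proposal is correct and follows essentially the same route as the paper: establish the finite-graph symmetry $\langle \partial u_f^{W_i}/\partial n, g\rangle = \langle \partial u_g^{W_i}/\partial n, f\rangle$ via Lemma~\ref{fgreen} for $W_i$ large enough to contain $\mathrm{supp}(f)\cup\mathrm{supp}(g)$, then pass to the limit using the pointwise convergence $u_f^{W_i}\to u_f$, $u_g^{W_i}\to u_g$ and the fact that only finitely many summands are involved. Your write-up is in fact more careful than the paper's about justifying the limit interchange (local finiteness, finite sums), but the argument is the same.
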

\begin{proof} For sufficiently large $i$ such that $ W_i\supset \mathrm{supp}(f)\cup \mathrm{supp}(g),$ by Lemma~\ref{fgreen},
$$ \langle \frac{\partial u_f^{ W_i}}{\partial n}, g \rangle_{\delta\Omega}= \langle \frac{\partial u_g^{ W_i}}{\partial n}, f \rangle_{\delta\Omega}.$$
Since $f$ and $g$ are of finite support, only finitely many summands are involved in the above equation. By passing to the limit,
$$u_f^{ W_i}\to u_f, \quad u_g^{ W_i}\to u_g, \quad i\to\infty,$$ we prove the proposition.
\end{proof}

Let $f\in\ell_0(\delta\Omega)$, for any finite subset $ W\subset\overline{\Omega}$ with $ W\cap\delta\Omega\neq\emptyset$, set
$$\mathcal{L}(f, W):=\left\{\left.\phi\in\R^{\overline{\Omega}}\ \right|\ \mathrm{supp}(\phi)\subset W\quad\text{and}\quad\phi|_{ W\cap\delta\Omega}=f\right\}.$$
 We define the capacity of $W$ with boundary condition $f$ as
$$\mathrm{Cap}(f, W):=\inf_{\phi\in\mathcal{L}(f, W)}D_{ W}(\phi).$$
Similarly, the capacity of $\overline{\Omega}$ with boundary condition $f$ is defined as
$$\mathrm{Cap}(f):=\inf_{\phi\in \ell_0(\overline{\Omega}), \phi|_{\delta\Omega}=f}D_{\Omega}(\phi).$$

For any $\mathcal{W}\uparrow\overline{\Omega},$ since $f\in\ell_0(\delta\Omega)$, there exists $M\in\N^+$, such that $\mathrm{supp}(f)\subset W_i\cap\delta\Omega$, $\forall i>M$. Note that $\mathcal{L}(f, W_{i_1})\subset\mathcal{L}(f, W_{i_2})$, for any $i_1,i_2>M$ and $i_1<i_2$. Hence by definition, $\mathrm{Cap}(f, W_i)$ is non-increasing when $i> M$. One can verify that
\begin{equation}\label{capacity}\mathrm{Cap}(f)=\lim_{i\to \infty}\mathrm{Cap}(f, W_i).\end{equation}

\begin{lemma}\label{51}
$$\mathrm{Cap}(f, W)=D_{ W}(u^{ W}_f).$$
\end{lemma}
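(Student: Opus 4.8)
The plan is to show that the capacitary minimizer is exactly the harmonic extension $u_f^{W}$ of \eqref{existlemma}, and then evaluate the Dirichlet energy at this minimizer using Green's formula. First I would observe that $\mathcal{L}(f,W)$ is a nonempty affine subspace of the finite-dimensional space $\R^{\overline{W}}$: nonempty because one may extend $f$ by $0$ on $W\cap\Omega$ and on $\delta W$, and affine because it is a translate of the linear space $\mathcal{L}(0,W)=\{\phi\in\R^{\overline{W}}:\mathrm{supp}(\phi)\subset W,\ \phi|_{W\cap\delta\Omega}=0\}$. Since $D_W$ is a nonnegative quadratic form, the infimum defining $\mathrm{Cap}(f,W)$ is attained at some $\phi_0\in\mathcal{L}(f,W)$, and at a minimizer the first variation must vanish: $D_W(\phi_0,\psi)=0$ for every $\psi\in\mathcal{L}(0,W)$.

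Next I would translate this variational condition into the equations \eqref{existlemma}. Expanding $D_W(\phi_0,\psi)=\sum_{\{x,y\}\in E(W,\overline{W})}\mu_{xy}(\phi_0(x)-\phi_0(y))(\psi(x)-\psi(y))$ and collecting terms at each vertex, one gets $D_W(\phi_0,\psi)=-\sum_{x\in \overline{W}} d(x)\,\Delta\phi_0(x)\,\psi(x)$ — equivalently this is Green's formula (Lemma~\ref{Green's formula}) with the boundary term dropped since $\psi$ vanishes on $\delta W$. Because $\psi$ is an arbitrary function supported in $W$ that vanishes on $W\cap\delta\Omega$, and $\psi$ automatically vanishes on $\delta W$, the condition $D_W(\phi_0,\psi)=0$ for all such $\psi$ forces $\Delta\phi_0(x)=0$ for every $x\in W\cap\Omega$. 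Together with $\phi_0|_{W\cap\delta\Omega}=f$ and $\phi_0|_{\delta W}=0$ (both built into membership in $\mathcal{L}(f,W)$), this says $\phi_0$ solves \eqref{existlemma}, so by the uniqueness part of Lemma~\ref{harmonic extension} we conclude $\phi_0=u_f^{W}$. Hence $\mathrm{Cap}(f,W)=D_W(u_f^{W})$.

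Finally, as a sanity check one can also compute $D_W(u_f^{W})$ directly: by Green's formula applied to $f=g=u_f^{W}$ on $W$, $D_W(u_f^{W})=-\langle \Delta u_f^{W},u_f^{W}\rangle_W+\langle \partial u_f^{W}/\partial n, u_f^{W}\rangle_{\delta W}$; the harmonicity on $W\cap\Omega$ kills part of the first term, and the vanishing of $u_f^{W}$ on $\delta W$ and (for the boundary sum) the structure of $\delta W$ leave precisely $D_W(u_f^{W})=\langle \partial u_f^{W}/\partial n, f\rangle_{W\cap\delta\Omega}$, which matches Lemma~\ref{fgreen}. I do not expect a genuine obstacle here; the only point requiring care is bookkeeping the two boundary pieces $W\cap\delta\Omega$ and $\delta W$ correctly when identifying which test functions $\psi$ are admissible, and confirming that the stationarity equation only probes vertices in $W$ (not in $\delta W$), so that one recovers exactly the Dirichlet problem \eqref{existlemma} and not something with extra constraints.
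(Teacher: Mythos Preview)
Your argument is correct. Both your proof and the paper's rest on the Dirichlet principle, but the route is packaged differently. The paper does not set up the first variation at all: it observes that for $\phi\in\mathcal{L}(f,W)$ one has the decomposition $D_W(\phi)=D_{W\cap\Omega}(\phi)+\sum_{x\in W\cap\delta\Omega}\sum_{y\in\Omega\setminus W}\mu_{xy}f(x)^2$, where the second sum is independent of $\phi$ (it uses only the prescribed values $f$ on $W\cap\delta\Omega$ and $0$ outside $W$); then it invokes the Dirichlet principle as a black box to say the harmonic extension $u_f^W$ minimizes $D_{W\cap\Omega}$ with the given boundary data, and reassembles. Your approach instead derives the Euler--Lagrange equation directly via $D_W(\phi_0,\psi)=0$ for $\psi\in\mathcal{L}(0,W)$ and Green's formula, then appeals to the uniqueness in Lemma~\ref{harmonic extension}. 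The paper's decomposition is a bit slicker once one sees why the cross-boundary edges contribute a fixed term; your argument is more self-contained and makes explicit why the minimizer satisfies \eqref{existlemma}. One small slip: in your Green identity the sum should run over $x\in W$, not $x\in\overline{W}$, but since $\psi|_{\delta W}=0$ this is harmless.
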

\begin{proof}
\begin{eqnarray*}
&&\mathrm{Cap}(f, W)=\inf_{\phi\in\mathcal{L}(f, W)}D_{ W}(\phi)\nonumber\\
&=&\inf_{\phi\in\mathcal{L}(f, W)}D_{ W\cap\Omega}(\phi)+\sum_{x\in W\cap\delta\Omega}\sum_{y\in\Omega\setminus W}\mu_{xy}(f(x)-0)^2\nonumber\\
&=&D_{ W\cap\Omega}(u^{ W}_{f})+\sum_{x\in W\cap\delta\Omega}\sum_{y\in\Omega\setminus W}\mu_{xy}f^2(x)\nonumber\\
&=&D_{ W}(u^{ W}_{f}).
\end{eqnarray*}
The second last equality follows from the fact that the harmonic function $u^W_f$ minimizes the Dirichlet energy among functions with the same boundary condition.
\end{proof}

By Lemma \ref{51} and \ref{fgreen}, we have
\begin{equation}\label{52}
\mathrm{Cap}(f, W)=D_{ W}(u^{ W}_f)=\left\langle\frac{\partial u^{ W}_f}{\partial n},f\right\rangle_{ W\cap\delta\Omega}.
\end{equation}

\begin{prop}\label{dtndirichletenergy}
For any $f\in\ell_0(\delta\Omega)$, we have
$$\mathrm{Cap}(f)=D_{\Omega}(u_f)=\left\langle \Lambda (f),f\right\rangle.$$
\end{prop}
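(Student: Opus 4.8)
The plan is to pass to the limit along an exhaustion $\mathcal{W}\uparrow\overline{\Omega}$ in the identity \eqref{52}, using the monotone convergence established earlier in the introduction together with equation \eqref{capacity}. Fix $f\in\ell_0(\delta\Omega)$, and choose $M$ so that $\mathrm{supp}(f)\subset W_i\cap\delta\Omega$ for all $i>M$. For such $i$, \eqref{52} gives
\[
\mathrm{Cap}(f,W_i)=D_{W_i}(u^{W_i}_f)=\left\langle\frac{\partial u^{W_i}_f}{\partial n},f\right\rangle_{W_i\cap\delta\Omega}.
\]
Taking $i\to\infty$: the left-hand side converges to $\mathrm{Cap}(f)$ by \eqref{capacity}. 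For the right-hand side, recall $u^{W_i}_f=u^{W_i}_{f^+}-u^{W_i}_{f^-}$ with $u^{W_i}_{f^{\pm}}\uparrow u_{f^{\pm}}$ monotonically and uniformly bounded by $\sup_{\delta\Omega}|f|$, so $u^{W_i}_f\to u_f$ pointwise; since $f$ has finite support, only finitely many vertices $z\in\delta\Omega$ contribute to $\langle\partial u^{W_i}_f/\partial n,\,f\rangle_{\delta\Omega}$, and at each such $z$ the finitely many terms $\mu_{zx}(u^{W_i}_f(z)-u^{W_i}_f(x))$ converge to $\mu_{zx}(u_f(z)-u_f(x))$. Hence $\langle\partial u^{W_i}_f/\partial n,f\rangle_{\delta\Omega}\to\langle\Lambda(f),f\rangle$ by the definition of $\Lambda$. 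This already yields $\mathrm{Cap}(f)=\langle\Lambda(f),f\rangle$.

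It remains to identify both quantities with $D_{\Omega}(u_f)$. One direction is easy: each $\phi\in\ell_0(\overline{\Omega})$ with $\phi|_{\delta\Omega}=f$ is, for large $i$, an element of $\mathcal{L}(f,W_i)$, so $D_{\Omega}(\phi)=D_{W_i}(\phi)\ge\mathrm{Cap}(f,W_i)$ for all large $i$ (indeed $D_{W_i}(\phi)$ stabilizes once $W_i\supset\mathrm{supp}(\phi)\cup\delta\Omega$-neighborhood), giving $D_{\Omega}(\phi)\ge\mathrm{Cap}(f)$, hence $\inf_\phi D_\Omega(\phi)\ge\mathrm{Cap}(f)$; but the infimum defining $\mathrm{Cap}(f)$ is exactly over such $\phi$, so this is a tautology $\mathrm{Cap}(f)\ge\mathrm{Cap}(f)$ — the content is the reverse. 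For that, I would show $D_{\Omega}(u_f)\le\mathrm{Cap}(f)$ and $D_{\Omega}(u_f)\ge\langle\Lambda(f),f\rangle$, which together with the previous paragraph closes the loop. The lower bound $D_\Omega(u_f)\le\liminf_i D_{W_i}(u^{W_i}_f)=\mathrm{Cap}(f)$ follows from Fatou's lemma applied to the sum defining the Dirichlet energy, using pointwise convergence $u^{W_i}_f\to u_f$ (here one uses that $u^{W_i}_f$ extended by $0$ and $u_f$ agree outside $\Omega$, or more carefully restricts the sum). For $D_\Omega(u_f)\ge\langle\Lambda(f),f\rangle$, I would apply Green's formula on each $W_i$ to $u_f|_{\overline{W_i}}$ (not to $u^{W_i}_f$), using $\Delta u_f=0$ on $\Omega$, to get $D_{W_i}(u_f)=\langle\partial u_f/\partial n,\,u_f\rangle_{\delta W_i}$; the boundary sum splits into the part over $W_i\cap\delta\Omega$, which tends to $\langle\Lambda(f),f\rangle$, and the part over $\delta W_i$, which is nonnegative in sign — or more simply, let $i\to\infty$ and note $D_{W_i}(u_f)\uparrow D_\Omega(u_f)$ while the relevant boundary contributions converge.

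The main obstacle is the bookkeeping around the non-monotone quantity $D_{W_i}(u^{W_i}_f)$: unlike $\mathrm{Cap}(f,W_i)$ it need not be monotone a priori (it equals $\mathrm{Cap}(f,W_i)$ by Lemma~\ref{51}, so in fact it \emph{is} monotone non-increasing once $i>M$, which resolves the issue), and the subtlety that the Dirichlet form $D_\Omega$ is an infinite sum, so exchanging limit and summation requires either the monotone convergence of $u^{W_i}_{f^{\pm}}$ or a Fatou argument with attention to signs. I expect the cleanest route is: (1) $\mathrm{Cap}(f)=\lim_i\mathrm{Cap}(f,W_i)=\lim_i D_{W_i}(u^{W_i}_f)$ by \eqref{capacity} and Lemma~\ref{51}; (2) $\lim_i\langle\partial u^{W_i}_f/\partial n,f\rangle_{\delta\Omega}=\langle\Lambda(f),f\rangle$ by finite support of $f$; (3) $\mathrm{Cap}(f)=\langle\Lambda(f),f\rangle$ from \eqref{52}; (4) $D_\Omega(u_f)=\mathrm{Cap}(f)$ by showing $u_f$ is the minimizer — it is harmonic on $\Omega$ with boundary value $f$, hence by the same energy-minimizing property used in Lemma~\ref{51} (applied in the limit, or via Green's formula: $D_\Omega(u_f,\phi-u_f)=0$ for any competitor $\phi$ with the same boundary data and finite support, so $D_\Omega(\phi)=D_\Omega(u_f)+D_\Omega(\phi-u_f)\ge D_\Omega(u_f)$), we get $D_\Omega(u_f)=\mathrm{Cap}(f)$. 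Steps (1)–(3) are the substantive ones; step (4) is the orthogonality argument for the Dirichlet form, which needs only that $u_f$ is a well-defined finite-energy harmonic extension — guaranteed by $\mathrm{Cap}(f)<\infty$, which itself follows from $\mathrm{Cap}(f,W_i)<\infty$ and monotonicity.
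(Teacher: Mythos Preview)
Your steps (1)--(3), establishing $\mathrm{Cap}(f)=\langle\Lambda(f),f\rangle$ via \eqref{52}, \eqref{capacity}, and pointwise convergence against the finitely supported $f$, match the paper exactly; so does the Fatou inequality $D_\Omega(u_f)\le\mathrm{Cap}(f)$.

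The gap is in the \emph{reverse} inequality $D_\Omega(u_f)\ge\mathrm{Cap}(f)$. Both of your suggested routes fail. First, applying Green's formula to $u_f$ alone on $W_i$ leaves the term $\langle\partial u_f/\partial n,\,u_f\rangle_{\delta W_i}$, and your claim that this is ``nonnegative in sign'' is false: on a transient $3$-regular tree with $\delta\Omega=\{\text{root}\}$ and $f\equiv 1$ there, one has $u_f(x)=2^{-n}$ at level $n$, and at every $z\in\delta W_i$ the outward normal derivative is strictly negative while $u_f(z)>0$. The term does tend to $0$ in that example, but you give no argument for this in general. Second, the orthogonality step $D_\Omega(u_f,\phi-u_f)=0$ is circular: a direct computation (Green's formula on $W_i\supset\mathrm{supp}(\phi)$) gives $D_\Omega(u_f,\phi)=\langle\Lambda(f),f\rangle$, so $D_\Omega(u_f,\phi-u_f)=\langle\Lambda(f),f\rangle-D_\Omega(u_f)$, which vanishes only if the identity you are proving already holds. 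Consequently the expansion $D_\Omega(\phi)=D_\Omega(u_f)+D_\Omega(\phi-u_f)$ is unjustified, and the inequality you do get from it, $\mathrm{Cap}(f)\ge D_\Omega(u_f)$, is the Fatou direction again.

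The paper's fix is precisely to kill the uncontrolled $\delta W_i$-boundary term by pairing $u_f$ with $u_f^{W_i}$ rather than with itself: since $u_f^{W_i}\big|_{\delta W_i}=0$, Green's formula yields $D_{W_i}(u_f,u_f^{W_i})=\langle\partial u_f/\partial n,\,f\rangle_{\delta\Omega}=\mathrm{Cap}(f)$ exactly. Then the elementary inequality $0\le D_{W_i}(u_f-u_f^{W_i})=D_{W_i}(u_f)+\mathrm{Cap}(f,W_i)-2\,\mathrm{Cap}(f)$ gives $D_{W_i}(u_f)\ge 2\,\mathrm{Cap}(f)-\mathrm{Cap}(f,W_i)$, and letting $i\to\infty$ delivers the missing inequality $D_\Omega(u_f)\ge\mathrm{Cap}(f)$.
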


\begin{proof}
For any $\mathcal{W}\uparrow\overline{\Omega},$ since $f\in\ell_0(\delta\Omega)$, there exists $M\in\N$, such that $\mathrm{supp}(f)\subset W_i\cap\delta\Omega$, $\forall i>M$. By \eqref{52},
$$\mathrm{Cap}(f, W_i)=D_{ W_i}(u_f^{ W_i})= \langle\frac{\partial u_f^{ W_i}}{\partial n},f \rangle_{ W_i\cap\delta\Omega}.$$
Letting $i\to \infty$, by \eqref{capacity} we have
\begin{eqnarray}\label{3}
\mathrm{Cap}(f)&=&\lim_{i\to \infty}\mathrm{Cap}(f, W_i)\nonumber\\
&=&\lim_{i\to \infty}\langle\frac{\partial u_f^{ W_i}}{\partial n},f\rangle_{ W_i\cap\delta\Omega}=\langle\frac{\partial u_f}{\partial n}, f\rangle_{\delta\Omega},\end{eqnarray}
where the last equality follows from that $u^{ W_i}_f$ converges to $u_f$ pointwise and $f\in\ell_0(\delta\Omega)$.
By Fatou's lemma,
\begin{eqnarray}\label{4}D_{\Omega}(u_f)\leq\liminf_{i\to \infty}D_{ W_i}(u_f^{ W_i})=\liminf_{i\to \infty}\mathrm{Cap}(f, W_i)=\mathrm{Cap}(f).\end{eqnarray}
For any $i>M$,
\begin{eqnarray}\label{1}D_{ W_i}(u_f-u_f^{ W_i})=D_{ W_i}(u_f)+D_{ W_i}(u_f^{ W_i})-2D_{ W_i}(u_f,u_f^{ W_i}).\end{eqnarray}
By Green's formula, Lemma \ref{Green's formula}, the last term in \eqref{1} can be written as
\begin{eqnarray}\label{2}D_{ W_i}(u_f,u_f^{ W_i})&=&\langle \frac{\partial u_f}{\partial n},u_f^{ W_i}\rangle_{ W_i\cap\delta\Omega}\nonumber=\langle \frac{\partial u_f}{\partial n},f\rangle_{\delta\Omega}=\mathrm{Cap}(f),
\end{eqnarray}
where the last equality follows from \eqref{3}. Therefore, \eqref{1} implies that
$$0\leq D_{ W_i}(u_f-u_f^{ W_i})=D_{ W_i}(u_f)+\mathrm{Cap}(f, W_i)-2\mathrm{Cap}(f),$$
i.e.
$$D_{ W_i}(u_f)\geq 2\mathrm{Cap}(f)-\mathrm{Cap}(f, W_i),$$
whence by letting $i\to \infty$, we get
\begin{equation}\label{geqCap}D_{\Omega}(u_f)=\lim_{i\rightarrow\infty}D_{W_i}(u_f)\geq 2\mathrm{Cap}(f)-\lim_{i\rightarrow\infty}\mathrm{Cap}(f,W_i)=\mathrm{Cap}(f).\end{equation}
Combining \eqref{4} with \eqref{geqCap}, we have
\begin{eqnarray}\label{5}D_{\Omega}(u_f)=\mathrm{Cap}(f).\end{eqnarray}
Then the proposition follows from \eqref{3} and \eqref{5}.
\end{proof}

The proof of the above proposition yields the following corollary.
\begin{coro}\label{coro524}
For any $\mathcal{W}\uparrow\overline{\Omega}$ and $f\in\ell_0(\delta\Omega),$ we have
$$D_{\Omega}(u_f)=\lim_{\mathcal{W}\uparrow\overline{\Omega}}D_{ W}(u^{ W}_{f}).$$
\end{coro}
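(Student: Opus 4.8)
The plan is to assemble the corollary directly from identities already established along the way to Proposition~\ref{dtndirichletenergy}; essentially no new argument is needed. Fix an exhaustion $\mathcal{W}=\{W_i\}^{\infty}_{i=1}\uparrow\overline{\Omega}$ and choose $M$ with $\mathrm{supp}(f)\subset W_i\cap\delta\Omega$ for all $i>M$, which is possible since $f\in\ell_0(\delta\Omega)$.

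First I would observe that $W\mapsto D_W(u^W_f)$ is a monotone quantity in the sense of \eqref{exhaustionMono}, so the left-hand side of the corollary is well defined: indeed, Lemma~\ref{51} gives $D_{W_i}(u^{W_i}_f)=\mathrm{Cap}(f,W_i)$ for every $i>M$, and $\mathrm{Cap}(f,W_i)$ is non-increasing in $i$ because $\mathcal{L}(f,W_i)\subset\mathcal{L}(f,W_{i+1})$ (as recorded just before \eqref{capacity}). Hence the limit along any exhaustion exists and is independent of the exhaustion.

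Next, letting $i\to\infty$ and using \eqref{capacity},
$$\lim_{\mathcal{W}\uparrow\overline{\Omega}}D_W(u^W_f)=\lim_{i\to\infty}D_{W_i}(u^{W_i}_f)=\lim_{i\to\infty}\mathrm{Cap}(f,W_i)=\mathrm{Cap}(f).$$
Finally I would invoke \eqref{5}, namely the identity $D_{\Omega}(u_f)=\mathrm{Cap}(f)$ that was proved inside Proposition~\ref{dtndirichletenergy}; combining it with the display above yields $D_{\Omega}(u_f)=\lim_{\mathcal{W}\uparrow\overline{\Omega}}D_W(u^W_f)$, which is the claim.

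There is no genuine obstacle here: the whole content lies in Proposition~\ref{dtndirichletenergy}. The only step there that is not purely formal is the lower bound $D_{\Omega}(u_f)\ge\mathrm{Cap}(f)$ — the matching upper bound being immediate from Fatou's lemma applied to $D_{W_i}(u^{W_i}_f)=\mathrm{Cap}(f,W_i)$ — which followed from expanding $0\le D_{W_i}(u_f-u^{W_i}_f)$ and evaluating the cross term by Green's formula as $D_{W_i}(u_f,u^{W_i}_f)=\langle\partial u_f/\partial n,f\rangle_{\delta\Omega}=\mathrm{Cap}(f)$, then passing to the limit. I would simply cite that computation rather than repeat it.
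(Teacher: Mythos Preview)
Your proposal is correct and is essentially the paper's own argument: the paper simply states that Corollary~\ref{coro524} is a direct byproduct of the computations in the proof of Proposition~\ref{dtndirichletenergy}, and you have spelled out exactly those steps (Lemma~\ref{51}, \eqref{capacity}, and \eqref{5}).
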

Combining \eqref{min_max_infiDefi} with Proposition \ref{dtndirichletenergy}, we have the following corollary.
\begin{coro}\label{energyConverge}
\begin{eqnarray*}\sigma_k(\Omega):&=&\inf_{\substack{H\subset\ell_0(\delta\Omega),\\ \dim H=k}}\sup_{0\neq f\in H}\frac{\langle\Lambda (f),f\rangle_{\delta\Omega}}{\langle f,f\rangle_{\delta\Omega}}\nonumber\\
&=&\inf_{\substack{H\subset\ell_0(\delta\Omega),\\ \dim H=k}}\sup_{0\neq f\in H}\frac{D_{\Omega}(u_f)}{\langle f,f\rangle_{\delta\Omega}}\nonumber\\
&=&\inf_{\substack{H\subset\ell_0(\delta\Omega),\\ \dim H=k}}\sup_{0\neq f\in H}\frac{\mathrm{Cap}(f)}{\langle f,f\rangle_{\delta\Omega}}.
\end{eqnarray*}
\end{coro}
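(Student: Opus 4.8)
The plan is to observe that this corollary is an immediate consequence of Proposition~\ref{dtndirichletenergy} together with the min-max characterization \eqref{min_max_infiDefi}. Indeed, Proposition~\ref{dtndirichletenergy} asserts that for \emph{every} fixed $f\in\ell_0(\delta\Omega)$ the three real numbers $\langle\Lambda(f),f\rangle_{\delta\Omega}$, $D_\Omega(u_f)$, and $\mathrm{Cap}(f)$ are equal. Since the min-max formula \eqref{min_max_infiDefi} only ever evaluates these quantities at functions $f\in\ell_0(\delta\Omega)$ — which is precisely the class covered by Proposition~\ref{dtndirichletenergy} — no density or extension argument is needed.

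Concretely, I would argue as follows. Fix a $k$-dimensional subspace $H\subset\ell_0(\delta\Omega)$. For each $0\neq f\in H$ we have $f\in\ell_0(\delta\Omega)$, so by Proposition~\ref{dtndirichletenergy}
$$\frac{\langle\Lambda(f),f\rangle_{\delta\Omega}}{\langle f,f\rangle_{\delta\Omega}}=\frac{D_\Omega(u_f)}{\langle f,f\rangle_{\delta\Omega}}=\frac{\mathrm{Cap}(f)}{\langle f,f\rangle_{\delta\Omega}}.$$
Taking the supremum over $0\neq f\in H$ preserves this equality, and then taking the infimum over all $k$-dimensional subspaces $H\subset\ell_0(\delta\Omega)$ preserves it once more; by \eqref{min_max_infiDefi} the first of the three resulting expressions is exactly $\sigma_k(\Omega)$, which gives the claim.

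There is no genuine obstacle here; the statement is purely a rewriting of \eqref{min_max_infiDefi} using the identity established in Proposition~\ref{dtndirichletenergy}. The only point worth a word of care is that the equality in Proposition~\ref{dtndirichletenergy} is an equality of numbers for each individual $f$, so it may be substituted inside both the $\sup$ and the $\inf$ without any interchange-of-limits concern.
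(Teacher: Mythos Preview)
Your argument is correct and is exactly the paper's own approach: the corollary is stated immediately after Proposition~\ref{dtndirichletenergy} with the sentence ``Combining \eqref{min_max_infiDefi} with Proposition \ref{dtndirichletenergy}, we have the following corollary,'' and no further proof is given. Your observation that the identity of Proposition~\ref{dtndirichletenergy} is pointwise in $f\in\ell_0(\delta\Omega)$, hence can be substituted inside the $\sup$ and $\inf$ without any limit-interchange issue, is precisely the (trivial) content.
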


In the next lemma, we show that $\Lambda$ is a bounded operator on $\ell^2(\delta\Omega).$
\begin{lemma}\label{welldefined}
For any $f\in\ell_0(\delta\Omega),$ we have
$$\parallel\Lambda(f)\parallel_{\ell^2(\delta\Omega)}\leq\parallel f\parallel_{\ell^2(\delta\Omega)},$$
i.e. $\Lambda$ is a bounded linear operator on $\ell_0(\delta\Omega)$.
\end{lemma}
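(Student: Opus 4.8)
The plan is to bound $\|\Lambda(f)\|_{\ell^2(\delta\Omega)}^2$ by estimating each value $\Lambda(f)(z) = \frac{\partial u_f}{\partial n}(z)$ for $z\in\delta\Omega$, and then summing against the weights $d(z)$. The starting point is the definition of the normal derivative from \eqref{neumannboundaryoperator}: since $u_f(z)=f(z)$ for $z\in\delta\Omega$, we have
$$\Lambda(f)(z) = \frac{1}{d(z)}\sum_{x\in\Omega,\, x\sim z}\mu_{zx}\bigl(f(z)-u_f(x)\bigr).$$
First I would apply the Cauchy--Schwarz inequality to this sum in the form $\bigl(\sum_x \mu_{zx} a_x\bigr)^2 \le \bigl(\sum_x \mu_{zx}\bigr)\bigl(\sum_x \mu_{zx} a_x^2\bigr)$ with $a_x = f(z)-u_f(x)$. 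Because $\sum_{x\in\Omega,x\sim z}\mu_{zx} = d(z)$, this gives
$$d(z)^2\,\Lambda(f)(z)^2 \le d(z)\sum_{x\in\Omega,\,x\sim z}\mu_{zx}\bigl(f(z)-u_f(x)\bigr)^2,$$
i.e. $d(z)\,\Lambda(f)(z)^2 \le \sum_{x\in\Omega,\,x\sim z}\mu_{zx}\bigl(u_f(z)-u_f(x)\bigr)^2$.

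Next I would sum this over $z\in\delta\Omega$ and recognize the right-hand side as (part of) the Dirichlet energy $D_\Omega(u_f)$: each edge $\{z,x\}$ with $z\in\delta\Omega$, $x\in\Omega$ contributes $\mu_{zx}(u_f(z)-u_f(x))^2$ to $D_\Omega(u_f)$, and these are exactly the edges in $E(\delta\Omega,\Omega)$, which are all the edges of $\overline\Omega$ touching $\delta\Omega$ (recall that $E(\delta\Omega,\delta\Omega)$ has been removed). Hence
$$\sum_{z\in\delta\Omega} d(z)\,\Lambda(f)(z)^2 \le D_\Omega(u_f).$$
By Proposition \ref{dtndirichletenergy}, $D_\Omega(u_f) = \langle\Lambda(f),f\rangle_{\delta\Omega}$, so the left side is $\|\Lambda(f)\|_{\ell^2(\delta\Omega)}^2$ and we obtain $\|\Lambda(f)\|_{\ell^2(\delta\Omega)}^2 \le \langle\Lambda(f),f\rangle_{\delta\Omega}$. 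Finally, one more Cauchy--Schwarz, $\langle\Lambda(f),f\rangle_{\delta\Omega} \le \|\Lambda(f)\|_{\ell^2(\delta\Omega)}\|f\|_{\ell^2(\delta\Omega)}$, and dividing through (the case $\Lambda(f)=0$ being trivial) yields $\|\Lambda(f)\|_{\ell^2(\delta\Omega)}\le\|f\|_{\ell^2(\delta\Omega)}$, as claimed.

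The only genuinely delicate point is the bookkeeping in the summation step: one must be careful that, after summing the pointwise bound over $z\in\delta\Omega$, the edges $\{z,x\}$ with $z\in\delta\Omega$ and $x\in\Omega$ are counted with the correct multiplicity and that no edge with both endpoints outside $\delta\Omega$ (or both inside) is needed — this is exactly why the modified weight convention ($\mu_{xy}=0$ on $E(\delta\Omega,\delta\Omega)$) is convenient, since then $D_\Omega(u_f)$ decomposes cleanly into the interior edges plus the edges meeting $\delta\Omega$, and we only discard the nonnegative interior contribution. Everything else is a two-fold application of Cauchy--Schwarz together with the already-established identity $D_\Omega(u_f)=\langle\Lambda(f),f\rangle_{\delta\Omega}$ from Proposition \ref{dtndirichletenergy}; alternatively, one could run the same argument at the finite level on $W_i$ using \eqref{52} and Lemma \ref{fgreen}, then pass to the limit, but the direct infinite-graph computation is cleaner.
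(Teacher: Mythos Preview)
Your proof is correct. The first half---bounding $\|\Lambda(f)\|_{\ell^2(\delta\Omega)}^2$ by $D_\Omega(u_f)$ via Cauchy--Schwarz on the normal-derivative sum---is exactly what the paper does. The second half differs: the paper proves the explicit energy bound $D_\Omega(u_f)\le\|f\|_{\ell^2(\delta\Omega)}^2$ by passing to finite exhaustions $W_i$ via Corollary~\ref{coro524}, invoking the Dirichlet-energy-minimizing property of $u_f^{W_i}$, and comparing against the zero extension $\overline{f}_{W_i}$ of $f$, whose energy computes directly to $\|f\|_{\ell^2(W_i\cap\delta\Omega)}^2$. Your route instead feeds the identity $D_\Omega(u_f)=\langle\Lambda(f),f\rangle_{\delta\Omega}$ from Proposition~\ref{dtndirichletenergy} back into the inequality and closes with a second Cauchy--Schwarz, yielding the self-improving bound $\|\Lambda(f)\|^2\le\|\Lambda(f)\|\,\|f\|$. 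Your argument is shorter and avoids the exhaustion computation entirely; the paper's version, on the other hand, establishes the standalone inequality $D_\Omega(u_f)\le\|f\|_{\ell^2(\delta\Omega)}^2$ (equivalently $\mathrm{Cap}(f)\le\|f\|^2$), which is of independent interest and does not require knowing that $\Lambda$ is already defined.
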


\begin{proof}
\begin{eqnarray*}
\left\|\frac{\partial u_{f}}{\partial n}\right\|_{\ell^2(\delta\Omega)}^2
&=&\sum_{x\in\delta\Omega}\left|\frac{1}{d(x)}\sum_{y\in\Omega}\mu_{xy}(u_{f}(x)-u_{f}(y))\right|^2d(x)\nonumber\\
&\leq&\sum_{x\in\delta\Omega}\sum_{y\in\Omega}\mu_{xy}(u_{f}(x)-u_{f}(y))^2\nonumber\\
&\leq&D_{\Omega}(u_{f}).
\end{eqnarray*}
For any $f\in\ell_0(\delta\Omega)$ and any finite subset $W\subset\overline{\Omega}$, we denote by
\[
\overline{f}_W(x)=\left\{
       \begin{array}{ll}
       f(x) ,& \quad x\in W\cap\delta\Omega \\
       0 ,& \quad \mathrm{otherwise}
               \end{array}
     \right.
\]
the zero extension of $f|_{W\cap\delta\Omega}$.
By Corollary \ref{coro524},
\begin{eqnarray*}
&&D_{\Omega}(u_{f})=\lim_{\mathcal{W}\uparrow\overline{\Omega}}D_{ W}(u^{ W}_{f})\nonumber\\
&=&\lim_{\mathcal{W}\uparrow\overline{\Omega}}\left(\sum_{e=\{x,y\}\in E( W\cap\Omega,\overline{ W\cap\Omega})}+\sum_{x\in W\cap\delta\Omega}\sum_{y\in\Omega\setminus W}\right)
\mu_{xy}(u^{ W}_f(x)-u^{ W}_f(y))^2\nonumber\\
&\leq&\lim_{\mathcal{W}\uparrow\overline{\Omega}}\left(D_{W\cap\Omega}(\overline{f}_W)+\sum_{x\in W\cap\delta\Omega}\sum_{y\in\Omega\setminus W}
\mu_{xy}f^2(x)\right)\nonumber\\
&=&\lim_{\mathcal{W}\uparrow\overline{\Omega}}\left(\sum_{x\in W\cap\delta\Omega}f^2(x)\sum_{y\in W\cap\Omega}\mu_{xy}+\sum_{x\in W\cap\delta\Omega}\sum_{y\in\Omega\setminus W}
\mu_{xy}f^2(x)\right)\nonumber\\
&=&\lim_{\mathcal{W}\uparrow\overline{\Omega}}\|f\|_{\ell^2( W\cap\delta\Omega)}=\|f\|_{\ell^2(\delta\Omega)},
\end{eqnarray*}
where the inequality above follows from the fact that the harmonic function $u^W_f$ minimizes the Dirichlet energy among functions with the same boundary condition, in particular compared with $\overline{f}_W$.

Hence, we have
$$\left\|\frac{\partial u_{f}}{\partial n}\right\|_{\ell^2(\delta\Omega)}^2\leq\|f\|_{\ell^2(\delta\Omega)}^2.$$

\end{proof}

By the boundedness of $\Lambda$ on $\ell_0(\delta\Omega)$ and the density of $\ell_0(\delta\Omega)$ in $\ell^2(\delta\Omega),$ $\Lambda$ can be uniquely extended to $\ell^2(\delta\Omega).$

For any finite subset $ W\subset\overline{\Omega}$ with $ W\cap\delta\Omega\neq\emptyset$, we denote by $\sigma_k( W)$ the $k$-th eigenvalue of the DtN operator $\Lambda_{ W}$. By \eqref{min_max_finite} and \eqref{52},  $\sigma_k( W)$, $1\leq k\leq\sharp(W\cap\delta\Omega)$, can be characterized as
\begin{eqnarray}\sigma_k(W):&=&\min_{\substack{H\subset\mathbb{R}^{W\cap\delta\Omega},\\ \dim H=k}}\max_{0\neq f\in H}\frac{\langle\Lambda_W (f),f\rangle_{W\cap\delta\Omega}}{\langle f,f\rangle_{\delta\Omega}}\nonumber\\
&=&\min_{H\subset\mathbb{R}^{ W\cap\delta\Omega},\dim H=k}\max_{0\neq f\in H}\frac{\mathrm{Cap}(f, W)}{\langle f,f\rangle_{ W\cap\delta\Omega}}\nonumber\\
&=&\min_{\substack{H\subset\mathbb{R}^{W\cap\delta\Omega},\\ \dim H=k}}\max_{0\neq f\in H}\frac{D_W(u^W_f)}{\langle f,f\rangle_{\delta\Omega}}\label{min_max_finiDefi}.\end{eqnarray}
In order to give Cheeger estimates for infinite graphs, we need the following monotonicity result.

\begin{lemma}\label{eigenmono}
For any $\mathcal{W}\uparrow\overline{\Omega}$,
$$\sigma_k( W_i)\geq\sigma_k( W_{i+1}),\quad\forall i=1,2,\cdots.$$
where $1\leq k\leq\sharp(W_i\cap\delta\Omega)$.
\end{lemma}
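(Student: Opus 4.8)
The plan is to use the variational characterization \eqref{min_max_finiDefi} together with the capacity formulation from Lemma \ref{51}, namely $\sigma_k(W)=\min_{\dim H=k}\max_{0\neq f\in H}\mathrm{Cap}(f,W)/\langle f,f\rangle_{W\cap\delta\Omega}$, where the minimum ranges over $k$-dimensional subspaces $H\subset\R^{W\cap\delta\Omega}$. The key observation is that when $W_i\subset W_{i+1}$, a boundary function $f$ on $W_i\cap\delta\Omega$ can be zero-extended to $\overline{f}\in\R^{W_{i+1}\cap\delta\Omega}$, and the denominator is unchanged since $\langle f,f\rangle=\langle\overline f,\overline f\rangle$ (the extra boundary vertices contribute nothing). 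So everything reduces to the inequality
$$\mathrm{Cap}(\overline f,W_{i+1})\leq \mathrm{Cap}(f,W_i),$$
which I would prove by noting that any $\phi\in\mathcal{L}(f,W_i)$, when viewed in $\R^{\overline{\Omega}}$, already satisfies $\mathrm{supp}(\phi)\subset W_i\subset W_{i+1}$ and $\phi|_{W_{i+1}\cap\delta\Omega}=\overline f$ (the new boundary vertices in $W_{i+1}\cap\delta\Omega$ lie outside $W_i$, where $\phi$ vanishes, matching $\overline f$). Hence $\mathcal{L}(f,W_i)\subset\mathcal{L}(\overline f,W_{i+1})$, and taking infima of $D_{W_{i+1}}(\phi)$ — which for such $\phi$ agrees with $D_{W_i}(\phi)$ since edges outside $E(W_i,\overline{W_i})$ with an endpoint where $\phi\equiv 0$ on both sides contribute zero — gives the capacity inequality. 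One must be slightly careful that $D_{W_{i+1}}(\phi)=D_{W_i}(\phi)$ for $\phi$ supported in $W_i$: any edge counted in $D_{W_{i+1}}$ but not in $D_{W_i}$ has both endpoints outside $W_i$ or joins $W_i$ to $\overline{W_i}\setminus W_i$; in the former case both values of $\phi$ are zero, and in the latter the edge already appears in $E(W_i,\overline{W_i})$, so it is counted in $D_{W_i}$ as well — thus the two energies coincide.

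Granting the capacity monotonicity, the proof concludes by a standard min-max comparison: for any $k$-dimensional $H\subset\R^{W_i\cap\delta\Omega}$, its zero-extension $\overline H:=\{\overline f:f\in H\}$ is a $k$-dimensional subspace of $\R^{W_{i+1}\cap\delta\Omega}$, and
$$\max_{0\neq f\in H}\frac{\mathrm{Cap}(f,W_i)}{\langle f,f\rangle}\ \geq\ \max_{0\neq f\in H}\frac{\mathrm{Cap}(\overline f,W_{i+1})}{\langle\overline f,\overline f\rangle}\ \geq\ \min_{\dim H'=k}\max_{0\neq g\in H'}\frac{\mathrm{Cap}(g,W_{i+1})}{\langle g,g\rangle}=\sigma_k(W_{i+1}).$$
Taking the infimum over $H$ on the left yields $\sigma_k(W_i)\geq\sigma_k(W_{i+1})$. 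The only mild technical point, and the place where one should be most careful, is verifying that zero-extension does not increase the Rayleigh quotient — i.e. the combination of the capacity inequality and the invariance of the $\ell^2$-norm under zero-extension; both are elementary but must be stated cleanly. Note also that the restriction $1\leq k\leq\sharp(W_i\cap\delta\Omega)$ guarantees that $k$-dimensional subspaces of $\R^{W_i\cap\delta\Omega}$ exist, so both sides are well defined.
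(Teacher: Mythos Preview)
Your proof is correct and follows essentially the same route as the paper: both use the capacity form of the min-max characterization \eqref{min_max_finiDefi}, invoke the monotonicity $\mathrm{Cap}(f,W_i)\geq\mathrm{Cap}(\overline f,W_{i+1})$ (which the paper had already recorded just before \eqref{capacity} via the inclusion $\mathcal{L}(f,W_i)\subset\mathcal{L}(\overline f,W_{i+1})$), note that the $\ell^2$-norm is unchanged under zero-extension, and conclude by a standard min-max comparison. The only cosmetic difference is that the paper picks an optimal $H$ for $\sigma_k(W_i)$ up front, whereas you take an arbitrary $H$ and minimize at the end; your more explicit justification of $D_{W_{i+1}}(\phi)=D_{W_i}(\phi)$ for $\phi$ supported in $W_i$ is a welcome clarification.
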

\begin{proof}
For any $i$, by \eqref{min_max_finiDefi}, one can choose $H\subset\R^{ W_i\cap\delta\Omega}$, $\dim H=k$, such that
$$\sigma_k( W_i)=\max_{0\neq f\in H}\frac{\mathrm{Cap}(f,W_i)}{\langle f,f\rangle_{ W_i\cap\delta\Omega}}.$$
Then we have
\begin{eqnarray*}
\sigma_k( W_i)&=&\max_{0\neq f\in H}\frac{\mathrm{Cap}(f, W_i)}{\langle f,f\rangle_{ W_i\cap\delta\Omega}}\geq\max_{0\neq f\in H}\frac{\mathrm{Cap}(f, W_{i+1})}{\langle f,f\rangle_{ W_{i+1}\cap\delta\Omega}}\nonumber\\
&\geq&\min_{\substack{H'\subset\mathbb{R}^{W_{i+1}\cap\delta\Omega}\\\dim H'=k}}\max_{0\neq f\in H'}\frac{\mathrm{Cap}(f, W_{i+1})}{\langle f,f\rangle_{ W_{i+1}\cap\delta\Omega}}\nonumber\\
&=&\sigma_{k}( W_{i+1}).
\end{eqnarray*}
\end{proof}

Now we are ready to prove the approximation result in Proposition \ref{eigenvalueapproximation}.

\begin{proof}[Proof of Proposition \ref{eigenvalueapproximation}]
For any $i\in\mathbb{N}$, by \eqref{min_max_finiDefi}, choose $H\subset\R^{ W_i\cap\delta\Omega}$, $\dim H=k$, such that
$$\sigma_k( W_i)=\max_{0\neq f\in H}\frac{\mathrm{Cap}(f,W_i)}{\langle f,f\rangle_{ W_i\cap\delta\Omega}}.$$
Then by the monotonicity of $\mathrm{Cap}(f, W_i)$ and the definition of $\mathrm{Cap}(f)$,
\begin{eqnarray*}
\sigma_k( W_i)&=&\max_{0\neq f\in H}\frac{\mathrm{Cap}(f, W_i)}{\langle f,f\rangle_{ W_i\cap\delta\Omega}}\geq\max_{0\neq f\in H}\frac{\mathrm{Cap}(f)}{\langle f,f\rangle_{ W_i\cap\delta\Omega}}\nonumber\\
&=&\max_{0\neq f\in H}\frac{D_{\Omega}(u_f)}{\langle f,f\rangle_{\delta\Omega}}\geq\inf_{\substack{H'\subset\ell_0(\delta\Omega)\\\dim H'=k}}\sup_{0\neq f\in H'}\frac{D_{\Omega}(u_f)}{\langle f,f\rangle_{\delta\Omega}}\nonumber\\
&=&{\sigma}_k(\Omega),
\end{eqnarray*}
The second last equality follows from Proposition \ref{dtndirichletenergy}. Hence
$$\lim_{i\to \infty}\sigma_k( W_i)\geq{\sigma}_k(\Omega).$$
On the other hand, by Corollary \ref{energyConverge}, for any $\epsilon>0$, there exists $H\subset\ell_0(\delta\Omega)$, $\dim H=k$, such that
$${\sigma}_k(\Omega)\leq \sup_{f\in H}\frac{D_{\Omega}(u_f)}{\langle f,f\rangle_{\delta\Omega}}=\max_{f\in H}\frac{D_{\Omega}(u_f)}{\langle f,f\rangle_{\delta\Omega}}<{\sigma}_k(\Omega)+\epsilon.$$
For the above $H$, there exists $K\in\mathbb{N}^+$, such that
$$\mathrm{supp}(g)\subset W_i\cap\delta\Omega, \quad\forall i>K,\, \forall g\in H.$$
Hence
$$\sigma_k(W_i)\leq\max_{f\in H}\frac{D_{\Omega}(u_f)}{\langle f,f\rangle_{\delta\Omega}}<\sigma_k(\Omega)+\epsilon.$$
Letting $i\to \infty$ and $\epsilon\rightarrow 0$, we have
$$\lim_{i\to \infty}\sigma_k( W_i)\leq{\sigma}_k(\Omega).$$
Hence the proposition follows.
\end{proof}

Now we are ready to prove Corollary~\ref{corodtn}.
\begin{proof}[Proof of Corollary~\ref{corodtn}]
By Proposition \ref{eigenvalueapproximation} and \eqref{infiKL}, it suffices to prove that for any $\mathcal{W}\uparrow\overline{\Omega}$,
$$\sigma_k( W)\geq\lambda_{k,D}( W),$$
where $\lambda_{k,D}(W)$ is the $k$-th eigenvalue of the Dirichlet problem \eqref{normalizeDL}. Given any subset $H\subset\mathbb{R}^{W\cap\delta\Omega}$,
we denote by
$$\widetilde{H}:=\mathrm{span}\left\{u^W_f|f\in H\right\}.$$
By definition,
\begin{eqnarray*}
\sigma_k( W)&=&\min_{\substack{H\subset\mathbb{R}^{ W\cap\delta\Omega}\\\dim H=k}}\max_{0\neq f\in H}\frac{\sum_{e=\{x,y\}\in E( W,\overline{ W})}\mu_{xy}(u^{ W}_f(x)-u^{ W}_f(y))^2}{\sum_{x\in W\cap\delta\Omega}f^2(x)d(x)}\nonumber\\
&=&\min_{\substack{\widetilde{H}\subset\mathbb{R}^{W}\\\dim \widetilde{H}=k}}\max_{0\neq f\in \widetilde{H}}\frac{\sum_{e=\{x,y\}\in E( W,\overline{ W})}\mu_{xy}(f(x)-f(y))^2}{\sum_{x\in W\cap\delta\Omega }f^2(x)d(x)}\nonumber\\
&\geq&\min_{\substack{\widetilde{H}\subset\mathbb{R}^{W}\\\dim \widetilde{H}=k}}\max_{0\neq f\in \widetilde{H}}\frac{\sum_{e=\{x,y\}\in E( W,\overline{ W})}\mu_{xy}(f(x)-f(y))^2}{\sum_{x\in W }f^2(x)d(x)}\nonumber\\
&\geq&\min_{\substack{H'\subset\mathbb{R}^{ W}\\\dim(H')=k}}\max_{0\neq f\in H'}\frac{\sum_{e=\{x,y\}\in E( W,\overline{ W})}\mu_{xy}(f(x)-f(y))^2}{\sum_{x\in W }f^2(x)d(x)}\nonumber\\
&=&\lambda_{k,D}( W).
\end{eqnarray*}
Hence the proposition follows.
\end{proof}

\begin{rem}
Let $\Omega$ be a finite subset and $k=2.$ Corollary \ref{corodtn} and the fact $\lambda_2(\Omega)\leq 2$ imply that
$$\sigma_2(\Omega)\geq\lambda_2(\overline{\Omega})\geq \frac{1}{8}\lambda_2^2(\overline{\Omega}),$$ which is the consequence of Corollary~{1.1} in \cite{BoboYanZuoqin2017}.

\end{rem}

\section{Jammes-type Cheeger estimate for the bottom spectrum}
Let $(V, E,\mu)$ be an infinite graph, $\Omega\subset V$ be an infinite subgraph and $ W\subset\overline{\Omega}$ be a finite subset with $ W\cap\delta\Omega\neq\emptyset$. Let $0\neq f\in\mathbb{R}^{ W\cap\delta\Omega}$ be the first eigenfunction associated to the first eigenvalue $\sigma_1( W)$. For convenience, we write $f$ for $u_f^W$ in the following. Without loss of generality, we may assume that $f$ is nonnegative, since $D_{ W}(|f|)\leq D_{ W}(f).$  By \eqref{52}, we have
\begin{eqnarray}\label{infinitelemma1}
\sigma_1( W)=\frac{\sum_{e=\{x,y\}\in E( W,\overline{ W})}\mu_{xy}(f(y)-f(x))^2}{\sum_{x\in W\cap\delta\Omega}f^2(x)d(x)}.
\end{eqnarray}

Multiplying both the numerator and denominator of the fraction at the right hand side of \eqref{infinitelemma1} by $\sum_{x\in W}f^2(x)d(x)$ and setting
$$\frac{M}{N}:=\frac{\sum_{x\in W}f^2(x)d(x)\cdot\sum_{e=\{x,y\}\in E( W,\overline{ W})}\mu_{xy}(f(y)-f(x))^2}{\sum_{x\in W}f^2(x)d(x)\cdot\sum_{x\in W\cap\delta\Omega}f^2(x)d(x)},$$
we have
$$\sigma_1( W)=\frac{M}{N}.$$
We need the following lemmas to prove Theorem \ref{finitejammeslowerestimate}.

\begin{lemma}\label{lemma62}
$$M\geq\frac{1}{2}\left(\sum_{e=\{x,y\}\in E( W,\overline{ W})}\mu_{xy}|f^2(x)-f^2(y)|\right)^2.$$
\end{lemma}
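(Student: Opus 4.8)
The plan is to run the standard co-area plus Cauchy--Schwarz computation that underlies discrete Cheeger inequalities, exploiting that here $f\ge 0$ on $\overline{W}$ and $f=0$ on $\delta W$. Since $f\ge 0$, for every edge $\{x,y\}\in E(W,\overline{W})$ one may factor the gradient of $f^2$ as $|f^2(x)-f^2(y)|=|f(x)-f(y)|\,(f(x)+f(y))$. Applying the Cauchy--Schwarz inequality with respect to the weights $\mu_{xy}$ on $E(W,\overline W)$ then gives
\[
\sum_{\{x,y\}\in E(W,\overline W)}\mu_{xy}\,|f^2(x)-f^2(y)|
\le\Big(\sum_{\{x,y\}\in E(W,\overline W)}\mu_{xy}(f(x)-f(y))^2\Big)^{1/2}\Big(\sum_{\{x,y\}\in E(W,\overline W)}\mu_{xy}(f(x)+f(y))^2\Big)^{1/2},
\]
and the square of the first factor is precisely the gradient factor appearing in $M$.

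It then remains to bound the second factor by the mass factor of $M$. First, $(f(x)+f(y))^2\le 2(f^2(x)+f^2(y))$, so the square of the second factor is at most $2\sum_{\{x,y\}\in E(W,\overline W)}\mu_{xy}(f^2(x)+f^2(y))$. Next I would reorganize this sum over edges as a sum over vertices: the edge $\{x,y\}$ contributes $f^2$ at each of its endpoints with weight $\mu_{xy}$, so each vertex $z$ picks up $f^2(z)$ times the total $\mu$-weight of the edges of $E(W,\overline W)$ incident to $z$. Every $z\in\delta W$ contributes $0$ because $f(z)=0$ there, while for $z\in W$ all edges incident to $z$ lie in $E(W,\overline W)$ and their total weight equals $d(z)$ (here one uses that in $\overline\Omega$ the edges inside $\delta\Omega$ have been removed, so a vertex $z\in W\cap\delta\Omega$ has neighbors only in $\Omega$). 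Hence $\sum_{\{x,y\}\in E(W,\overline W)}\mu_{xy}(f^2(x)+f^2(y))=\sum_{x\in W}f^2(x)d(x)$, and combining with the previous display yields $\big(\sum_{\{x,y\}\in E(W,\overline W)}\mu_{xy}|f^2(x)-f^2(y)|\big)^2\le 2M$, which is the claim.

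The computation is short and every step is otherwise routine; the only point deserving real attention is the vertex-reorganization identity $\sum_{\{x,y\}\in E(W,\overline W)}\mu_{xy}(f^2(x)+f^2(y))=\sum_{x\in W}f^2(x)d(x)$, where one must invoke both the Dirichlet condition $f|_{\delta W}=0$ and the paper's convention $\mu_{xy}=0$ for $\{x,y\}\notin E(\Omega,\overline\Omega)$ in order to match the edge weights at an interior vertex $z$ with $d(z)$ exactly --- although an inequality $\le$ would already be enough for the lemma. Everything else is Cauchy--Schwarz together with the elementary inequality $(a+b)^2\le 2(a^2+b^2)$.
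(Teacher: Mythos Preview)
Your argument is correct and is essentially the paper's proof, with the three ingredients (Cauchy--Schwarz on $E(W,\overline W)$, the inequality $(a+b)^2\le 2(a^2+b^2)$, and the vertex reorganization $\sum_{\{x,y\}\in E(W,\overline W)}\mu_{xy}(f^2(x)+f^2(y))=\sum_{x\in W}f^2(x)d(x)$) applied in a slightly different order. The paper first establishes the vertex identity, then uses $f^2(x)+f^2(y)\ge\tfrac12(f(x)+f(y))^2$, and finally Cauchy--Schwarz; your observation that only an inequality is needed at the vertex-reorganization step is also valid but not exploited in the paper.
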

\begin{proof}
Note that
\begin{eqnarray*}
&&\sum_{x\in W}f^2(x)d(x)=\left(\sum_{x\in W}\sum_{y\in W}+\sum_{x\in W}\sum_{y\in\overline{\Omega}\setminus W}\right)\mu_{xy}f^2(x)\nonumber\\
&=&\frac12\sum_{x,y\in W}\mu_{xy}(f^2(x)+f^2(y))+\sum_{x\in W}\sum_{y\in\overline{\Omega}\setminus W}\mu_{xy}f^2(x)\nonumber\\
&=&\sum_{e=\{x,y\}\in E( W, W)}\mu_{xy}(f^2(x)+f^2(y))+\sum_{x\in W}\sum_{y\in\overline{\Omega}\setminus W}\mu_{xy}(f^2(x)+f^2(y))\nonumber\\
&=&\sum_{e=\{x,y\}\in E( W,\overline{ W})}\mu_{xy}(f^2(x)+f^2(y)).
\end{eqnarray*}
Hence
\begin{eqnarray*}
M&=&\sum_{e=\{x,y\}\in E( W,\overline{ W})}\mu_{xy}(f^2(x)+f^2(y))\cdot\sum_{e=\{x,y\}\in E( W,\overline{ W})}\mu_{xy}(f(y)-f(x))^{2}\nonumber\\
&\geq&\frac12\sum_{e=\{x,y\}\in E( W,\overline{ W})}\mu_{xy}(f(x)+f(y))^2\cdot\sum_{e=\{x,y\}\in E( W,\overline{ W})}\mu_{xy}(f(y)-f(x))^{2}\nonumber\\
&\geq&\frac12\left(\sum_{e=\{x,y\}\in E( W,\overline{ W})}\mu_{xy}|f^2(x)-f^2(y)|\right)^2.
\end{eqnarray*}
The last inequality follows from H\"{o}lder's inequality.
\end{proof}

Set $t_0:=\max_{x\in W}\{f(x)\}$. For any $t>0$, set
$$S_t:=f^{-1}([\sqrt{t},+\infty))=\{x\in W|f^2(x)\geq t\}.$$
By the maximum principle, the maximizer of $f$ can't be achieved in $ W$, hence $S_t\cap\delta\Omega\neq\emptyset$ for any $t\in(0,t_0]$.

\begin{lemma}\label{lemma63}
$$\int_0^{\infty}\mu(\partial_{ W}S_t)dt=\sum_{e=\{x,y\}\in E( W,\overline{ W})}\mu_{xy}|f^2(x)-f^2(y)|.$$
\end{lemma}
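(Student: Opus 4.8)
The plan is to prove this identity by a layer-cake (coarea) argument, interchanging the integral over the levels $t$ with the sum over edges. Since $W$ is finite and the graph is locally finite, $E(W,\overline{W})$ is a finite set, and for each $t$ we have $\mu(\partial_{W}S_t)=\sum_{e=\{x,y\}\in E(W,\overline{W})}\mu_{xy}\mathbf{1}_{\{e\in\partial_{W}S_t\}}$, a finite sum of nonnegative, measurable (indeed piecewise-constant in $t$) functions. Hence linearity of the integral immediately gives
$$\int_0^{\infty}\mu(\partial_{W}S_t)\,dt=\sum_{e=\{x,y\}\in E(W,\overline{W})}\mu_{xy}\int_0^{\infty}\mathbf{1}_{\{e\in\partial_{W}S_t\}}\,dt,$$
so it suffices to evaluate the inner integral for each fixed edge.

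For a fixed edge $e=\{x,y\}\in E(W,\overline{W})$, recall that $\partial_{W}S_t=\partial S_t\cap E(W,\overline{W})$ consists of those edges of $E(W,\overline{W})$ with exactly one endpoint in $S_t$. The key observation is that for a vertex $z\in\delta W$ one has $f(z)=u^{W}_f(z)=0$ by the Dirichlet boundary condition in \eqref{existlemma}; consequently, for every $t>0$, the membership $z\in S_t$ is equivalent to $f^2(z)\geq t$ whether $z$ lies in $W$ or in $\delta W$. Therefore $e\in\partial_{W}S_t$ if and only if exactly one of the inequalities $f^2(x)\geq t$, $f^2(y)\geq t$ holds, i.e. if and only if $\min(f^2(x),f^2(y))<t\leq\max(f^2(x),f^2(y))$. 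This set of $t$ is a half-open interval of Lebesgue measure $|f^2(x)-f^2(y)|$, so $\int_0^{\infty}\mathbf{1}_{\{e\in\partial_{W}S_t\}}\,dt=|f^2(x)-f^2(y)|$. Substituting into the displayed equation above yields the claimed formula.

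There is no serious obstacle in this argument; it is essentially a one-dimensional layer-cake computation carried out edge by edge. The only points requiring a little care are the uniform treatment of the two possible configurations of an edge $e\in E(W,\overline{W})$ — both endpoints in $W$, versus one endpoint in $\delta W$ — and the convention (strict versus non-strict inequality) defining the superlevel sets $S_t$. Both are resolved at once by the remark that $f$ vanishes on $\delta W$, which in particular makes the boundary edges contribute exactly $f^2(x)=|f^2(x)-f^2(y)|$ with $y\in\delta W$, consistently with the interior edges.
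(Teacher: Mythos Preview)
Your proof is correct and follows essentially the same coarea (layer-cake) argument as the paper: write $\mu(\partial_W S_t)$ as a finite sum of edge indicators, swap sum and integral, and observe that each edge contributes on an interval of length $|f^2(x)-f^2(y)|$. Your explicit remark that $f$ vanishes on $\delta W$ (so boundary edges are handled uniformly) is a nice point the paper leaves implicit.
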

\begin{proof}
For any interval $(a,b]$, we denote by $\chi_{(a,b]}$ the characteristic function on $(a,b]$, i.e.
\[
\chi_{(a,b]}(x)=\left\{
       \begin{array}{ll}
        0,& x\notin (a,b], \\
        1,&x\in (a,b].
               \end{array}
     \right.
\]
Then
\begin{eqnarray*}
\int_0^{\infty}\mu(\partial_{ W}S_t)dt&=&\int_0^{\infty}\sum_{e=\{x,y\}\in E( W,\overline{ W}),f^2(y)<t\leq f^2(x)}\mu_{xy}dt\nonumber\\
&=&\int_0^{\infty}\sum_{e=\{x,y\}\in E( W,\overline{ W})}\mu_{xy}\chi_{(f^2(y),f^2(x)]}(t)dt\nonumber\\
&=&\sum_{e=\{x,y\}\in E( W,\overline{ W})}\mu_{xy}|f^2(x)-f^2(y)|.
\end{eqnarray*}
\end{proof}
\begin{rem}
Lemma \ref{lemma63} is indeed the Coarea formula in the discrete setting, see \cite[Lemma 3.3]{Grigor'yan2009}.
\end{rem}

\begin{lemma}\label{lemma64}
$$\int_0^{\infty}d(S_t)dt=\sum_{x\in W}f^2(x)d(x).$$
$$\int_0^{\infty}d(S_t\cap\delta\Omega)dt=\sum_{x\in W\cap\delta\Omega}f^2(x)d(x).$$
\end{lemma}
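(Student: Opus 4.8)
The plan is to apply the discrete Cavalieri (layer-cake) principle, exploiting that $W$ is finite so that every sum in sight has only finitely many terms and the interchange of sum and integral is trivially legitimate. First I would record, for each fixed $t>0$, the pointwise identity
$$d(S_t)=\sum_{x\in W}d(x)\,\chi_{\{f^2(x)\ge t\}},$$
which is nothing but the definition of $S_t=\{x\in W:f^2(x)\ge t\}$ together with the definition of the measure $d(\cdot)$ on subsets of $\overline{\Omega}$. Since $t_0:=\max_{x\in W}f(x)$ is finite, we have $S_t=\emptyset$ whenever $t>t_0^2$, so the function $t\mapsto d(S_t)$ is bounded and supported on $(0,t_0^2]$, hence integrable; there is no convergence issue to worry about.

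Next I would integrate over $t\in(0,\infty)$ and swap the finite sum with the integral:
$$\int_0^{\infty}d(S_t)\,dt=\sum_{x\in W}d(x)\int_0^{\infty}\chi_{\{t\le f^2(x)\}}\,dt=\sum_{x\in W}d(x)f^2(x),$$
where the last equality uses that, for each $x$, the function $t\mapsto\chi_{\{t\le f^2(x)\}}$ equals $1$ on $(0,f^2(x)]$ and $0$ elsewhere, so its integral is exactly $f^2(x)$. This gives the first identity.

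For the second identity the argument is verbatim the same with the index set $W$ replaced by $W\cap\delta\Omega$ throughout: indeed $S_t\cap\delta\Omega=\{x\in W\cap\delta\Omega:f^2(x)\ge t\}$, so $d(S_t\cap\delta\Omega)=\sum_{x\in W\cap\delta\Omega}d(x)\,\chi_{\{f^2(x)\ge t\}}$, and integrating yields $\sum_{x\in W\cap\delta\Omega}f^2(x)d(x)$. I expect no genuine obstacle here; the single point that deserves a line of care is the endpoint convention in the definition of $S_t$ (closed versus open sublevel sets), which is immaterial because changing $\ge$ to $>$ alters the integrand, for each fixed $x$, only at the one value $t=f^2(x)$ and so leaves all the integrals unchanged. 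Alternatively one could invoke the discrete coarea/layer-cake statement already used in the proof of Lemma~\ref{lemma63}, but spelling out the two displayed lines above is the cleanest route.
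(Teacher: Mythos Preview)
Your proof is correct and follows essentially the same layer-cake argument as the paper: rewrite $d(S_t)$ as $\sum_{x\in W}d(x)\chi_{(0,f^2(x)]}(t)$, interchange the finite sum with the integral, and evaluate each term; then repeat with $W$ replaced by $W\cap\delta\Omega$. Your added remarks on integrability and the endpoint convention are fine extra care but not needed beyond what the paper does.
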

\begin{proof}
By calculation,
\begin{eqnarray*}
\int_0^{\infty}d(S_t)dt=\int_0^{\infty}\sum_{x\in S_t}d(x)dt=\int_0^{\infty}\sum_{x\in W}d(x)\chi_{(0,f^2(x)]}(t)dt
=\sum_{x\in W}f^2(x)d(x).
\end{eqnarray*}
Similarly
\begin{eqnarray*}
&&\int_0^{\infty}d(S_t\cap\delta\Omega)dt=\int_0^{\infty}\sum_{x\in S_t\cap\delta\Omega}d(x)dt\nonumber\\
&=&\int_0^{\infty}\sum_{x\in W\cap\delta\Omega}d(x)\chi_{(0,f^2(x)]}(t)dt
=\sum_{x\in W\cap\delta\Omega}f^2(x)d(x).
\end{eqnarray*}
Hence we complete the proof.
\end{proof}

Now we are ready to prove Theorem~\ref{finitejammeslowerestimate}.
\begin{proof}[Proof of Theorem~\ref{finitejammeslowerestimate}]

For the upper bound estimate, choose $A\subset W$ that achieves ${h}_J( W)$, i.e.
$${h}_J( W)=\frac{\mu(\partial_{ W}A)}{d(A\cap\delta\Omega)}.$$
Set $g=\chi_A\in\R^{\overline{ W}},$ i.e.
\[
g(x)=\left\{
       \begin{array}{ll}
        1,& x\in A, \\
        0,&x\in \overline{ W}\setminus A.
               \end{array}
     \right.
\]

Then we have
\begin{eqnarray*}
\sigma_1( W)\leq\frac{D_{ W}(g)}{\langle g,g\rangle_{ W\cap\delta\Omega}}=\frac{\mu(\partial_{ W}A)}{d(A\cap\delta\Omega)}={h}_J( W).
\end{eqnarray*}

For the lower bound estimate, combining Lemma \ref{infinitelemma1} to Lemma \ref{lemma64}, we have
\begin{eqnarray*}
\sigma_1( W)&\geq&\frac{1}{2}\frac{\int_{0}^{\infty}\mu(\partial_{ W}S_t)dt\cdot\int_{0}^{\infty}\mu(\partial_{ W}S_t)dt}{\sum_{x\in W}f^2(x)d(x)\cdot\sum_{x\in
 W\cap\delta\Omega}f^2(x)d(x)}\nonumber\\
&\geq&\frac{1}{2}\frac{\int_{0}^{\infty}{h}( W)d(S_t)dt\cdot\int_{0}^{\infty}{h}_J( W)
d(S_t\cap\delta_S W)dt}{\sum_{x\in W}f^2(x)d(x)\cdot\sum_{x\in
 W\cap\delta\Omega}f^2(x)d(x)}\nonumber\\
&=&\frac{{h}( W){h}_J( W)}{2}\frac{\int_{0}^{\infty}d(S_t)dt\cdot\int_{0}^{\infty}d( S_t\cap\delta\Omega)dt}{\sum_{x\in W}f^2(x)d(x)\cdot\sum_{x\in
 W\cap\delta\Omega}f^2(x)d(x)}\nonumber\\
&=&\frac{{h}( W){h}_J( W)}{2}.
\end{eqnarray*}

The theorem follows from the above estimates.
\end{proof}

Finally,
we are ready to prove Theorem~\ref{infinitejammeslowerestimate}.

\begin{proof}[Proof of Theorem~\ref{infinitejammeslowerestimate}] For the upper bound estimate, by Proposition~\ref{eigenvalueapproximation} and Theorem~\ref{finitejammeslowerestimate},
$${\sigma}_1(\Omega)=\lim_{\mathcal{W}\uparrow\overline{\Omega}}\sigma_1( W)\leq \lim_{\mathcal{W}\uparrow\overline{\Omega}}{h}_J( W)={h}_J(\Omega).$$

Similarly, we have the lower bound estimate
\begin{eqnarray*}
{\sigma}_1(\Omega)=\lim_{\mathcal{W}\uparrow\overline{\Omega}}\sigma_1( W)
\geq\lim_{\mathcal{W}\uparrow\overline{\Omega}}\frac{{h}( W)\cdot {h}_J( W)}{2}
=\frac{{h}(\overline{\Omega})\cdot {h}_J(\Omega)}{2}.
\end{eqnarray*}

Hence we complete the proof of the theorem.
\end{proof}

At the end of this section, we give a necessary and sufficient condition for the positivity of ${\sigma}_1(\Omega)$ for an infinite subgraph $\Omega$ with finite vertex boundary. \begin{defi}
Let $(V,E,\mu)$ be an infinite graph. For any finite subset $F\subset V$, we define
$$\mathrm{Cap}_{F}(V):=\inf_{\substack{0\leq\phi\leq1,\phi|_{F}=1\\\mathrm{supp}(\phi)\subset V}}D_{V}(\phi).$$
\end{defi}
In order to obtain the sufficient condition for the positivity of $\sigma_1(\Omega)$, we need the following criterion for an infinite graph to be recurrent, see e.g. \cite[Theorem 2.12]{Woess2000}.
\begin{lemma}
Infinite graph $(\overline{\Omega},E(\Omega,\overline{\Omega}),\mu)$ is recurrent if and only if $\mathrm{Cap}_F(\overline{\Omega})=0$ for any finite subset $F\subset\overline{\Omega}$.
\end{lemma}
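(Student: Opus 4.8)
The plan is to go through the classical electrical-network picture. I view $\overline{\Omega}=(\overline{\Omega},E(\Omega,\overline{\Omega}),\mu)$ together with the vertex measure $d$ as the reversible random walk $(X_n)_{n\ge 0}$ with transition probabilities $p(x,y)=\mu_{xy}/d(x)$, and I assume $\overline{\Omega}$ is connected (otherwise one argues on each connected component). Then recurrence is equivalent to $G(x_0,x_0):=\sum_{n\ge 0}p^{(n)}(x_0,x_0)=\infty$ for one, hence for every, vertex $x_0$, and also to every vertex being recurrent. The argument has two parts: first an exact formula for the single-vertex capacity $\mathrm{Cap}_{\{x_0\}}(\overline{\Omega})$, and then a reduction from an arbitrary finite set $F$ to single vertices.

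For the first part I would show that, with the convention $d(x_0)/\infty=0$,
\[
\mathrm{Cap}_{\{x_0\}}(\overline{\Omega})=\frac{d(x_0)}{G(x_0,x_0)}=d(x_0)\,\mathbb{P}_{x_0}\bigl[X_n\neq x_0\ \text{for all}\ n\ge 1\bigr],
\]
the second equality being the elementary renewal identity $G(x_0,x_0)=1/\mathbb{P}_{x_0}[\text{no return}]$. Fix an exhaustion $\mathcal{W}=\{W_n\}\uparrow\overline{\Omega}$ with $x_0\in W_1$, write $\mathrm{Cap}_{\{x_0\}}(W_n):=\inf\{D_{W_n}(\phi):\ \mathrm{supp}\,\phi\subset W_n,\ \phi(x_0)=1\}$ for the finite-network capacity, and consider the equilibrium potential $e_n(y):=\mathbb{P}_y[\tau_{x_0}<\tau_{\delta W_n}]$. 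One checks that $e_n$ is harmonic on $W_n\setminus\{x_0\}$, equals $1$ at $x_0$ and $0$ on $\delta W_n$, takes values in $[0,1]$, and minimizes the Dirichlet energy among functions supported in $W_n$ with value $1$ at $x_0$; hence $D_{W_n}(e_n)=\mathrm{Cap}_{\{x_0\}}(W_n)$. Green's formula (Lemma~\ref{Green's formula}), together with $e_n|_{\delta W_n}=0$ and harmonicity off $x_0$, then gives
\[
\mathrm{Cap}_{\{x_0\}}(W_n)=-d(x_0)\,\Delta e_n(x_0)=\sum_{y\sim x_0}\mu_{x_0y}\bigl(1-e_n(y)\bigr)=d(x_0)\,\mathbb{P}_{x_0}\bigl[\tau_{\delta W_n}<\tau_{x_0}^+\bigr],
\]
where $\tau_{x_0}^+$ is the first return time to $x_0$. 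Letting $n\to\infty$: on the one hand $\mathrm{Cap}_{\{x_0\}}(W_n)\downarrow\mathrm{Cap}_{\{x_0\}}(\overline{\Omega})$, by the same exhaustion argument that gives \eqref{capacity} (any finitely supported competitor for $\overline{\Omega}$ is a competitor for all large $W_n$); on the other hand the events $\{\tau_{\delta W_n}<\tau_{x_0}^+\}$ decrease to $\{\tau_{x_0}^+=\infty\}$, so the right-hand side decreases to $d(x_0)\,\mathbb{P}_{x_0}[\text{no return}]$. This proves the formula, and in particular $\mathrm{Cap}_{\{x_0\}}(\overline{\Omega})=0$ iff $x_0$ is recurrent.

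For the second part, suppose first that $\overline{\Omega}$ is recurrent. Then every vertex is recurrent, so $\mathrm{Cap}_{\{y\}}(\overline{\Omega})=0$ for all $y$ by the first part. Given a finite $F\subset\overline{\Omega}$ and $\varepsilon>0$, I pick for each $y\in F$ a finitely supported $\phi_y$ with $\phi_y(y)=1$, $0\le\phi_y\le 1$ and $D_{\overline{\Omega}}(\phi_y)<\varepsilon/\sharp F$, and set $\phi:=\bigvee_{y\in F}\phi_y$. Then $\phi$ is finitely supported, $0\le\phi\le 1$, and $\phi\equiv 1$ on $F$; using the submodularity inequality $D_{\overline{\Omega}}(f\vee g)+D_{\overline{\Omega}}(f\wedge g)\le D_{\overline{\Omega}}(f)+D_{\overline{\Omega}}(g)$, which is verified edge by edge, one gets by induction $D_{\overline{\Omega}}(\phi)\le\sum_{y\in F}D_{\overline{\Omega}}(\phi_y)<\varepsilon$. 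Hence $\mathrm{Cap}_F(\overline{\Omega})<\varepsilon$ for every $\varepsilon>0$, i.e. $\mathrm{Cap}_F(\overline{\Omega})=0$. Conversely, if $\mathrm{Cap}_F(\overline{\Omega})=0$ for every finite $F$, taking $F=\{x_0\}$ and invoking the first part gives $G(x_0,x_0)=\infty$, so $\overline{\Omega}$ is recurrent. I expect the main obstacle to be the first part: one must carefully check that the hitting-probability $e_n$ is the Dirichlet minimizer (so that its energy really is the finite capacity), carry out the Green's-formula computation, and justify the two monotone passages to the limit; the remainder — submodularity of the Dirichlet form and the reduction from finite sets to points — is routine.
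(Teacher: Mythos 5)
Your proof is correct. There is, however, no internal proof in the paper to compare against: the lemma is quoted as a known recurrence criterion with a pointer to \cite[Theorem~2.12]{Woess2000}, and what you have written is essentially the standard argument from reversible chains/electrical networks that underlies that citation — the identity $\mathrm{Cap}_{\{x_0\}}(\overline{\Omega})=d(x_0)\,\mathbb{P}_{x_0}[\text{no return}]$ obtained from the equilibrium potentials $e_n$ on an exhaustion, plus submodularity of the Dirichlet form to pass from single points to arbitrary finite sets $F$. Two small points are worth making explicit. First, the paper's definition of $\mathrm{Cap}_F$ literally reads $\mathrm{supp}(\phi)\subset V$, which must be interpreted as \emph{finite} support (otherwise $\phi\equiv 1$ would force every capacity to vanish); your reading is the intended one. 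Second, since your finite-network capacities $\mathrm{Cap}_{\{x_0\}}(W_n)$ omit the constraint $0\le\phi\le 1$, the inequality $\mathrm{Cap}_{\{x_0\}}(\overline{\Omega})\le \mathrm{Cap}_{\{x_0\}}(W_n)$ is not purely a matter of competitor classes nesting; it uses the fact that the minimizer $e_n$ itself takes values in $[0,1]$ and is therefore admissible for the constrained infimum. You do record this property of $e_n$, so the argument closes, but the role it plays in that monotone limit should be stated rather than left implicit.
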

Then we have the following result.
\begin{prop}\label{prop:recu}
If $\Omega$ is an infinite subgraph of $G=(V,E)$ and $\sharp\delta\Omega<\infty$. Then ${\sigma}_1(\Omega)=0$ if and only if $\overline{\Omega}=(\overline{\Omega},E(\Omega,\overline{\Omega}))$ is recurrent.
\end{prop}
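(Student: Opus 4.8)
The plan is to characterize the vanishing of $\sigma_1(\Omega)$ through the capacity $\mathrm{Cap}(\cdot)$ introduced in Section~3, and then to connect that capacity to the recurrence criterion $\mathrm{Cap}_F(\overline\Omega)=0$. By Corollary~\ref{energyConverge} we have
$$\sigma_1(\Omega)=\inf_{0\neq f\in\ell_0(\delta\Omega)}\frac{\mathrm{Cap}(f)}{\langle f,f\rangle_{\delta\Omega}},$$
and since $\sharp\delta\Omega<\infty$ the space $\ell_0(\delta\Omega)=\R^{\delta\Omega}$ is finite dimensional, so the infimum is attained and $\sigma_1(\Omega)=0$ iff there is a nonzero $f\in\R^{\delta\Omega}$ with $\mathrm{Cap}(f)=0$. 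Thus the proposition reduces to the equivalence: \emph{some nonzero $f$ has $\mathrm{Cap}(f)=0$} $\iff$ \emph{$\mathrm{Cap}_F(\overline\Omega)=0$ for every finite $F\subset\overline\Omega$} (which by the cited Lemma is recurrence of $\overline\Omega$).

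For the direction ``recurrent $\Rightarrow$ $\sigma_1(\Omega)=0$'': pick $F=\delta\Omega$ (finite by hypothesis). Recurrence gives $\mathrm{Cap}_{\delta\Omega}(\overline\Omega)=0$, i.e.\ a sequence $\phi_n\in\ell_0(\overline\Omega)$ with $0\le\phi_n\le1$, $\phi_n|_{\delta\Omega}\equiv 1$, and $D_\Omega(\phi_n)\to 0$. Taking $f\equiv 1$ on $\delta\Omega$, these $\phi_n$ are competitors in the definition of $\mathrm{Cap}(f)=\inf_{\phi\in\ell_0(\overline\Omega),\,\phi|_{\delta\Omega}=f}D_\Omega(\phi)$, so $\mathrm{Cap}(1)=0$; hence $\sigma_1(\Omega)\le \mathrm{Cap}(1)/d(\delta\Omega)=0$, and $\sigma_1(\Omega)=0$ since $\sigma_1\ge 0$.

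For the converse, suppose $\sigma_1(\Omega)=0$, so $\mathrm{Cap}(f)=0$ for some $0\neq f\in\R^{\delta\Omega}$; I must show $\mathrm{Cap}_F(\overline\Omega)=0$ for every finite $F$. The key point is that a nonzero function of zero capacity forces recurrence. One clean way: let $z\in\delta\Omega$ be a vertex with $f(z)\neq 0$; after scaling assume $f(z)=1$. From minimizers (or near-minimizers) $\phi_n$ with $\phi_n|_{\delta\Omega}=f$ and $D_\Omega(\phi_n)\to 0$, one extracts a test function for the single-vertex capacity $\mathrm{Cap}_{\{z\}}(\overline\Omega)$: replace $\phi_n$ by $\psi_n:=\min\{\phi_n^+ ,1\}$ cut off appropriately near $z$, noting that truncation and taking positive part do not increase Dirichlet energy, and that $\psi_n(z)=1$. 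This gives $\mathrm{Cap}_{\{z\}}(\overline\Omega)=0$. Since $\overline\Omega$ is connected (as $\Omega$ is a connected infinite subgraph and every boundary vertex attaches to it), $\mathrm{Cap}_{\{z\}}(\overline\Omega)=0$ for one vertex $z$ implies $\mathrm{Cap}_F(\overline\Omega)=0$ for all finite $F$ — this is the standard fact that recurrence is detected at any single vertex — and the cited Lemma then yields recurrence of $\overline\Omega$.

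The main obstacle is the converse direction, specifically the truncation/localization argument that turns ``$\mathrm{Cap}(f)=0$ for some nonzero $f\in\R^{\delta\Omega}$'' into ``$\mathrm{Cap}_{\{z\}}(\overline\Omega)=0$''. The subtlety is that the competitors $\phi_n$ in $\mathrm{Cap}(f)$ are only required to match $f$ on $\delta\Omega$ (no sign or boundedness constraint), whereas $\mathrm{Cap}_F$ demands $0\le\phi\le1$ with $\phi=1$ on $F$; one must check that the normalizations ($f(z)=1$), the truncation $\phi\mapsto\min\{\max\{\phi,0\},1\}$, and possibly a further harmonic-replacement step can be performed without increasing (or only boundedly increasing) $D_\Omega$, while preserving the value $1$ at $z$. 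Once the single-vertex capacity is shown to vanish, propagating to arbitrary finite $F$ and invoking the recurrence criterion is routine. I would also remark that, combining with Remark~\ref{rem:try1}, this recovers the expected dichotomy: transience of $\overline\Omega$ is equivalent to $\sigma_1(\Omega)>0$ when $\sharp\delta\Omega<\infty$.
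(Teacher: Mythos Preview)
Your argument is correct and your worry about the truncation step is unfounded: the map $t\mapsto\min\{\max\{t,0\},1\}$ is $1$-Lipschitz, so $D_\Omega(\psi_n)\le D_\Omega(\phi_n)$ edge by edge, and $\psi_n(z)=1$ is preserved since $\phi_n(z)=f(z)=1$; no harmonic-replacement step is needed. The attainment of the infimum is also fine because $\mathrm{Cap}(f)=\langle\Lambda f,f\rangle_{\delta\Omega}$ is a continuous quadratic form on the finite-dimensional space $\R^{\delta\Omega}$.

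Your ``if'' direction coincides with the paper's. For the ``only if'' direction the paper proceeds differently: it takes a minimizing sequence $f_i\in\R^{\delta\Omega}$ with $\|f_i\|_{\ell^2(\delta\Omega)}=1$ and $D_\Omega(u_{f_i})\to 0$, extracts (by compactness of the finite-dimensional sphere) a subsequence $f_{i_j}\to f_\infty$, argues via exhaustion and the maximum principle that $u_{f_{i_j}}\to u_{f_\infty}$ pointwise, and then uses lower semicontinuity of the Dirichlet energy to get $D_\Omega(u_{f_\infty})=0$, hence $u_{f_\infty}\equiv\mathrm{const}\neq 0$, which directly yields $\mathrm{Cap}_{\delta\Omega}(\overline\Omega)=0$. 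Compared with this, your route is more elementary: by observing at once that the minimum is attained you bypass the convergence $u_{f_{i_j}}\to u_{f_\infty}$ (which the paper only sketches), and the truncation trick reduces to a single-vertex capacity. The trade-off is that you then invoke the standard fact that vanishing capacity at one vertex forces recurrence of the whole (connected) graph, whereas the paper lands directly on $\mathrm{Cap}_{\delta\Omega}(\overline\Omega)=0$ because the limiting extension is globally constant. Both arguments tacitly use connectedness of $\overline\Omega$ (for you, to propagate from $\{z\}$ to all finite $F$; for the paper, to conclude $u_{f_\infty}$ is a single constant).
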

\begin{proof}
For the ``if " part: If $\overline{\Omega}$ is recurrent,  then $\mathrm{Cap}_{F}(\overline{\Omega})=0$ for any finite subset $F\subset\overline{\Omega}$. Choose $F=\delta\Omega$, we have
$$\mathrm{Cap}_{\delta\Omega}(\overline{\Omega})=\inf_{\substack{0\leq\phi\leq1,\phi|_{\delta\Omega}=1\\
\mathrm{supp}(\phi)\subset\overline{\Omega}}}D_{\Omega}(\phi)=0.$$
Hence, ${\sigma}_1(\Omega)=0$.

For the ``only if " part: If ${\sigma}_1(\Omega)=0$, then there exist $\{f_i\}^{\infty}_{i=0}\subset\mathbb{R}^{\delta\Omega}$, $\|f_i\|_{\ell^2(\delta\Omega)}=1$, such that
$$D_{\Omega}(u_{f_i})\rightarrow 0, \ i\to \infty.$$
Hence there exist a subsequence $\{f_{i_{j}}\}^{\infty}_{j=1}$ and  $f_{\infty}$ such that
$$f_{i_{j}}\rightarrow f_{\infty},\ j\to\infty\quad\mathrm{and}\quad \|f_{\infty}\|_{\ell^2(\delta\Omega)}=1.$$
Moreover, by the exhaustion method and the maximum principle on finite set, one can show that $$u_{f_{i_j}}\rightarrow u_{f_{\infty}},\ j\to\infty.$$
Then by the lower semi-continuity,
$$D_{\Omega}(u_{f_{\infty}})\leq\lim_{j\rightarrow\infty}D_{\Omega}(u_{f_{i_j}})=0.$$
Hence $u_{f_{\infty}}=\mathrm{const}$. This implies that $\mathrm{Cap}_{\delta\Omega}(\overline{\Omega})=0$ and $\overline{\Omega}$ is recurrent.

\end{proof}

We give an example with positive bottom spectrum for the DtN operator.
\begin{example}
Let $\Omega$ be a part of the homogenous tree with degree three and $\sharp\delta\Omega=1$, see Figure 1. By calculation, $h_J(\Omega)=1$, $h(\Omega)=\frac{1}{3}$, and ${\sigma}_1(\Omega)=\frac{1}{2}.$ See the related discussions in Remark~\ref{rem:try1}.
\end{example}

\begin{figure}[!h]
\includegraphics[height=5cm,width=9cm]{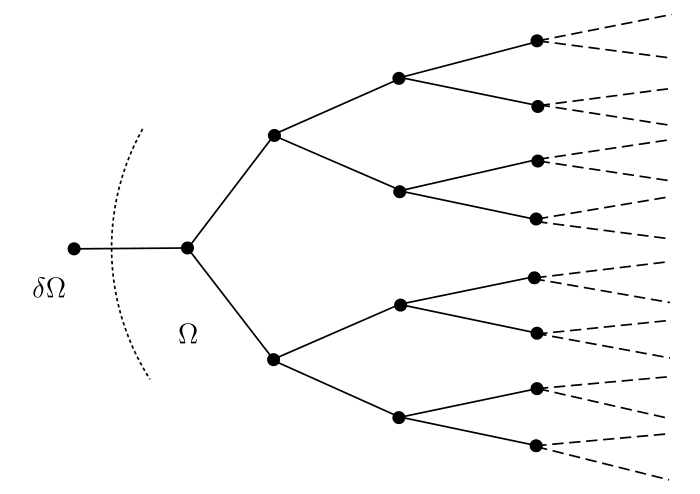}
\caption{}\label{figure1}
\end{figure}

\section{Higher order Cheeger estimates for the Dirichlet eigenvalue problems}
Higher order Cheeger estimates for the Laplace operator on finite graphs (without boundary condition) have been proved in \cite{Trevsan2014}. In this section, we prove higher order Cheeger estimates for the eigenvalues of the Dirichlet Laplacian problem on finite graphs.

Let $V$ be a countable set of vertices, and $$\mu:V\times V\rightarrow[0,\infty),\quad \{x,y\}\mapsto\mu_{xy}=\mu_{yx}$$ be a symmetric weight function. Instead of the vertex measure $d(\cdot)$ defined in the introduction, we introduce a general measure $\nu(\cdot)$ on $V,$
\[\aligned
\nu: V & \to (0,\infty),\\
x &\mapsto \nu(x).
\endaligned\]
This induces a (general) weighted graph structure $G=(V,\mu,\nu)$. For any $f\in \R^V$, the Laplacian on $(V,\mu,\nu)$ is defined as
$$\Delta^{\mu}_{\nu}(f)(x):=\frac{1}{\nu(x)}\sum_{y\in V}\mu_{xy}(f(y)-f(x)),\quad\forall x\in V.$$

Let $ W\subset V$ be a finite subset. Consider the following Dirichlet eigenvalue problem
\begin{equation}\label{Dirichlet}
\left\{
       \begin{array}{ll}
        \Delta^{\mu}_{\nu}(f)(x)=-\lambda f(x), &\quad x\in W, \\
        f(x)=0, &\quad x\in\delta W.
               \end{array}
     \right.
\end{equation}
We denote by $\lambda^{\nu}_{k,D}( W)$ the $k$-th eigenvalue of the above Dirichlet problem. For any subset $A\subset W$, the associated first Dirichlet eigenvalue can be characterized as
$$\lambda^{\nu}_{1,D}(A):=\inf\left\{\frac{D_A(f)}{\langle f,f\rangle_A}:0\neq f\in\R^{\overline{ W}}\, \text{and}\, f(x)=0, \forall x\in\overline{ W}\setminus A\right\},$$
where $\langle\cdot,\cdot\rangle_{A}$ is the inner product w.r.t the measure $\nu(\cdot)$.

Set $N:=\sharp W$. For any $k\in[N]$, define
\begin{equation}\label{eq:eqpp10}\Gamma_k( W):=\min_{(A_1,\cdots,A_k)\in\mathcal{A}_k( W)}\max_{l\in[k]}\lambda^{\nu}_{1,D}(A_l).\end{equation}

Let $f_1,f_2,\cdots,f_k$ be the first $k$ orthonormal eigenfunctions of the Dirichlet eigenvalue problem \eqref{Dirichlet}. Consider the following mapping
\begin{equation}\label{4inducedmap}\aligned
F:  W & \to\R^{k},
\\
x  &\mapsto (f_1(x),f_2(x),\cdots,f_k(x)).
\endaligned\end{equation}
We denote by $\parallel\cdot\parallel$ the Euclidean norm and $\langle\cdot,\cdot\rangle$ the inner product of vectors in $\R^k$. Observe that by the Rayleigh quotient characterization of $\lambda^{\nu}_{k,D}( W),$
$$\frac{\sum_{e=\{x,y\}\in E( W,\overline{ W})}\mu_{xy}\|F(x)-F(y)\|^2}{\sum_{x\in W}\nu(x)\|F(x)\|^2}\leq
\lambda^{\nu}_{k,D}( W).$$
We denote by $\widetilde{S}_{F}$ the support set of $F$, i.e.
$$\widetilde{S}_{F}:=\{x\in W:F(x)\neq {0}\}.$$
Consider the map induced by $F$
$$\widetilde{F}:\widetilde{S}_{F}\mapsto\mathbb{S}^{k-1}, \quad x\mapsto\frac{F(x)}{\|F(x)\|}.$$
For any $x,y\in\widetilde{S}_{F}$, the distance between them is defined as
$$d_{\widetilde{F}}(x,y):=\left\|\frac{F(x)}{\parallel F(x)\parallel}-\frac{F(y)}{\parallel F(y)\parallel}\right\|.$$

\subsection{Spreading lemma}

\begin{defi}\label{defspreading}
Let $F$ be the map defined in \eqref{4inducedmap}. For any $r>0$, $\delta>0$,  $F$ is called $(r,\delta)$-spreading, if for any subset $S\subset W$ with $\mathrm{diam}(S\cap\widetilde{S}_F,d_{\widetilde{F}})\leq r$, one has
$$\sum_{x\in S}\nu(x)\|F(x)\|^2\leq\delta\sum_{x\in W}\nu(x)\|F(x)\|^2.$$
\end{defi}
Following \cite{Trevsan2014,Liu2015}, we prove the following spreading lemma.
\begin{lemma}\label{spreadinglemma}
If $0<r<1$ and $S\subset W$ is a subset satisfying $\text{diam}(S\cap\widetilde{S}_{F},d_{\widetilde{F}})\leq r$, then we have
$$\sum_{x\in S}\nu(x)\|F(x)\|^2\leq\frac{1}{k(1-r^2)}\sum_{x\in W}\nu(x)\|F(x)\|^2.$$
\end{lemma}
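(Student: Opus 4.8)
The plan is to exploit the orthonormality of the eigenfunctions $f_1,\dots,f_k$ together with the constraint that, on $S\cap\widetilde S_F$, all the normalized vectors $\widetilde F(x)=F(x)/\|F(x)\|$ lie in a spherical cap of (chordal) diameter at most $r$. First I would reduce to $S\subset\widetilde S_F$, since vertices with $F(x)=0$ contribute nothing to the sum $\sum_{x\in S}\nu(x)\|F(x)\|^2$; so we may assume $d_{\widetilde F}(x,y)\le r$ for all $x,y\in S$. The key geometric observation is that a set of unit vectors of pairwise chordal distance $\le r<1$ is contained in a half-space: pick any $x_0\in S$ and set $e:=\widetilde F(x_0)$; then for every $x\in S$, $\|\widetilde F(x)-e\|\le r$ forces $\langle \widetilde F(x),e\rangle\ge 1-r^2/2>0$, and more to the point $\langle \widetilde F(x),e\rangle^2\ge 1-r^2$ (since $\|\widetilde F(x)-e\|^2 = 2-2\langle\widetilde F(x),e\rangle$, hence $\langle\widetilde F(x),e\rangle\ge 1-r^2/2\ge\sqrt{1-r^2}$ for $0<r<1$). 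Therefore
\[
\|F(x)\|^2 = \|F(x)\|^2\|\widetilde F(x)\|^2 \le \frac{1}{1-r^2}\,\|F(x)\|^2\langle \widetilde F(x),e\rangle^2 = \frac{1}{1-r^2}\,\langle F(x),e\rangle^2
\]
for every $x\in S$.

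Next I would sum this pointwise bound against the measure $\nu$ and use the spectral decomposition. Writing $e=(e_1,\dots,e_k)$ with $\sum_i e_i^2=1$ and $\langle F(x),e\rangle=\sum_{i=1}^k e_i f_i(x)$, orthonormality of the $f_i$ in $\ell^2(W,\nu)$ gives
\[
\sum_{x\in W}\nu(x)\langle F(x),e\rangle^2 = \sum_{i,j}e_i e_j\sum_{x\in W}\nu(x)f_i(x)f_j(x) = \sum_{i=1}^k e_i^2 = 1.
\]
Consequently
\[
\sum_{x\in S}\nu(x)\|F(x)\|^2 \le \frac{1}{1-r^2}\sum_{x\in S}\nu(x)\langle F(x),e\rangle^2 \le \frac{1}{1-r^2}\sum_{x\in W}\nu(x)\langle F(x),e\rangle^2 = \frac{1}{1-r^2}.
\]
On the other hand $\sum_{x\in W}\nu(x)\|F(x)\|^2 = \sum_{i=1}^k\sum_{x\in W}\nu(x)f_i(x)^2 = k$ by orthonormality, so the displayed bound reads $\sum_{x\in S}\nu(x)\|F(x)\|^2\le \frac{1}{k(1-r^2)}\sum_{x\in W}\nu(x)\|F(x)\|^2$, which is exactly the claim.

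The only genuinely delicate point is the half-space / cap estimate $\langle\widetilde F(x),e\rangle^2\ge 1-r^2$; everything else is bookkeeping with the orthonormality of the eigenfunctions and the fact that $e$ is a fixed unit vector. I would double-check the direction of the inequality $1-r^2/2\ge\sqrt{1-r^2}$ for $r\in(0,1)$ (it follows from squaring: $(1-r^2/2)^2 = 1-r^2+r^4/4\ge 1-r^2$), so that $\langle\widetilde F(x),e\rangle\ge 1-r^2/2$ does indeed imply $\langle\widetilde F(x),e\rangle^2\ge 1-r^2$. A minor alternative that avoids choosing a basepoint is to take $e$ to be the normalized barycenter of $\{\widetilde F(x):x\in S\}$, but anchoring at a single point $x_0$ is cleaner and suffices. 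One should also note the harmless degenerate case $S\cap\widetilde S_F=\emptyset$, where the left side is $0$ and there is nothing to prove.
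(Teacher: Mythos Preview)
Your argument is correct and follows essentially the same route as the paper: pick a basepoint $x_0\in S\cap\widetilde S_F$, set $e=\widetilde F(x_0)$, use orthonormality of $f_1,\dots,f_k$ to get $\sum_{x\in W}\nu(x)\langle F(x),e\rangle^2=1$ and $\sum_{x\in W}\nu(x)\|F(x)\|^2=k$, and convert the diameter bound into $\langle\widetilde F(x),e\rangle^2\ge(1-r^2/2)^2\ge 1-r^2$. Your treatment is slightly more careful in that you explicitly justify $(1-r^2/2)^2\ge 1-r^2$ and dispose of the degenerate case $S\cap\widetilde S_F=\emptyset$, but the strategy is identical.
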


\begin{proof}
For any unit vector $z$ in $\R^k$, we have
\begin{eqnarray}\label{eq:eqq3}
&&\sum_{x\in W}\nu(x)\langle z,F(x)\rangle^2=\sum_{x\in W}\nu(x)\left(\sum^{k}_{i=1}z_if_i(x)\right)^2 \\
&=&\sum^{k}_{i,j=1}z_iz_j\sum_{x\in W}f_i(x)f_j(x)\nu(x)=\sum^{k}_{i=1}z^2_i=1.\nonumber
\end{eqnarray}
By calculation,
\begin{eqnarray}\label{eq:eqq1}
\quad\sum_{x\in W}\nu(x)\|F(x)\|^2=\sum_{x\in W}\sum^{k}_{i=1}\nu(x)f^2_i(x)
=\sum^{k}_{i=1}\sum_{x\in W}\nu(x)f^2_i(x)=k.
\end{eqnarray}
For any $y\in S\cap\widetilde{S}_F$, by choosing $z=\frac{F(y)}{\|F(y)\|}$ in \eqref{eq:eqq3}, we obtain
\begin{eqnarray}\label{eq:eqq2}
1&=&\sum_{x\in W}\nu(x)\langle F(x),\frac{F(y)}{\|F(y)\|}\rangle^2=\sum_{x\in W\cap\widetilde{S}_{F}}\nu(x)\langle F(x),\frac{F(y)}{\|F(y)\|}\rangle^2 \nonumber\\
&=&\sum_{x\in W\cap\widetilde{S}_{F}}\nu(x)\|F(x)\|^2\langle \frac{F(x)}{\|F(x)\|},\frac{F(y)}{\|F(y)\|}\rangle^2 \nonumber\\&=&\sum_{x\in W\cap\widetilde{S}_{F}}\nu(x)\|F(x)\|^2(1-\frac{1}{2}d^2_{\widetilde{F}}(x,y))^2 \nonumber\\
&\geq&\sum_{x\in S\cap\widetilde{S}_{F}}\nu(x)\|F(x)\|^2(1-\frac{1}{2}r^2)^2\geq (1-r^2)\sum_{x\in S}\nu(x)\|F(x)\|^2.
\end{eqnarray} The lemma follows from \eqref{eq:eqq1} and \eqref{eq:eqq2}.
\end{proof}

\begin{rem}
By Definition \ref{defspreading},  the map $F$ is $(r,\frac{1}{k(1-r^2)})$-spreading.
\end{rem}

\subsection{Localization lemma}

\
\noindent\\

The $\epsilon$-neighborhood of $S\subset\widetilde{S}_{F}$ with respect to $d_{\widetilde{F}}$ is defined as
$$N_{\epsilon}(S,d_{\widetilde{F}}):=\{x\in W:d_{\widetilde{F}}(x,S)<\epsilon\}.$$
For any subset $S\subset W$, we define
\begin{eqnarray}\label{cut-off}
\theta(x)=\left\{
       \begin{array}{ll}
        0,& \text{if\, } F(x)=0, \\
        \max\{0, 1-\frac{d_{\widetilde{F}}(x,S\cap\widetilde{S}_{F})}{\epsilon}\},& \text{otherwise}.
               \end{array}
     \right.
\end{eqnarray}
The so-called localization of $F$ on the subset $S$ is defined as
\begin{eqnarray}\label{4cut-off}
\Psi:=\theta\cdot F: W\mapsto\R^k.
\end{eqnarray}
It is obvious that $\Psi|_S=F|_S$ and $\mathrm{supp}(\Psi)\subset N_{\epsilon}(S\cap\widetilde{S}_{F},d_{\widetilde{F}})$. We have the following localization lemma.

\begin{lemma}\label{localizationlemma}
For $0<\epsilon<2$, let $\Psi$ be the localization defined in \eqref{4cut-off}. Then for any $e=\{x,y\}\in E( W,\overline{ W})$, we have
\begin{eqnarray}\label{localization}\|\Psi(x)-\Psi(y)\|\leq(1+\frac{2}{\epsilon})\|F(x)-F(y)\|.\end{eqnarray}
\end{lemma}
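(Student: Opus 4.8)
The plan is to split into the two cases that determine how $\theta$ behaves near the edge $e=\{x,y\}$, and estimate $\|\Psi(x)-\Psi(y)\|=\|\theta(x)F(x)-\theta(y)F(y)\|$ by the triangle inequality applied in a way that separates the variation of $F$ from the variation of $\theta$. The basic decomposition I would use is
\[
\|\theta(x)F(x)-\theta(y)F(y)\|\leq \theta(x)\|F(x)-F(y)\|+|\theta(x)-\theta(y)|\,\|F(y)\|,
\]
so that since $0\le\theta\le 1$ the first term is already bounded by $\|F(x)-F(y)\|$, and the whole game is to bound $|\theta(x)-\theta(y)|\,\|F(y)\|$ by $\frac{2}{\epsilon}\|F(x)-F(y)\|$. (Of course one should also run the symmetric decomposition with the roles of $x$ and $y$ exchanged and take whichever is more convenient, since the final bound is symmetric in $x,y$.)

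The key step is the Lipschitz estimate for $\theta$. If at least one of $F(x),F(y)$ vanishes — say $F(y)=0$ — then the term $|\theta(x)-\theta(y)|\,\|F(y)\|$ vanishes outright and there is nothing to prove; so I may assume $x,y\in\widetilde S_F$. For such $x,y$, the function $d_{\widetilde F}(\cdot, S\cap\widetilde S_F)$ is $1$-Lipschitz with respect to $d_{\widetilde F}$ by the triangle inequality for the metric $d_{\widetilde F}$, hence $t\mapsto\max\{0,1-t/\epsilon\}$ being $\frac1\epsilon$-Lipschitz gives
\[
|\theta(x)-\theta(y)|\le\frac1\epsilon\, d_{\widetilde F}(x,y)=\frac1\epsilon\left\|\frac{F(x)}{\|F(x)\|}-\frac{F(y)}{\|F(y)\|}\right\|.
\]
It then remains to compare $\|F(y)\|\cdot\big\|\frac{F(x)}{\|F(x)\|}-\frac{F(y)}{\|F(y)\|}\big\|$ with $\|F(x)-F(y)\|$. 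Writing $a=F(x)$, $b=F(y)$, one has the elementary inequality
\[
\|b\|\,\left\|\frac{a}{\|a\|}-\frac{b}{\|b\|}\right\|
=\left\|\frac{\|b\|}{\|a\|}a-b\right\|
\le\left\|\frac{\|b\|}{\|a\|}a-a\right\|+\|a-b\|
=\big|\,\|b\|-\|a\|\,\big|+\|a-b\|\le 2\|a-b\|,
\]
using $\big|\|a\|-\|b\|\big|\le\|a-b\|$. Combining the three displays yields $|\theta(x)-\theta(y)|\,\|F(y)\|\le\frac2\epsilon\|F(x)-F(y)\|$, and adding the first term $\theta(x)\|F(x)-F(y)\|\le\|F(x)-F(y)\|$ gives \eqref{localization}.

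I expect the main obstacle to be purely bookkeeping: making sure the case $F(x)=0$ or $F(y)=0$ is handled before invoking $d_{\widetilde F}$, since $d_{\widetilde F}$ is only defined on $\widetilde S_F$, and making sure the chosen asymmetric decomposition puts the $\|F(y)\|$ (not $\|F(x)\|$) next to $|\theta(x)-\theta(y)|$ so that the normalization cancellation in the last display goes through. No step requires $\epsilon<2$ beyond ensuring $\theta$ is genuinely a cut-off taking values in $[0,1]$ (which is already built into the $\max\{0,\cdot\}$), so the hypothesis is only used to keep $N_\epsilon$ a proper neighborhood in the subsequent application; the inequality itself holds for all $\epsilon>0$.
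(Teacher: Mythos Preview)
Your proposal is correct and follows essentially the same approach as the paper: the same triangle-inequality decomposition $\|\theta(x)F(x)-\theta(y)F(y)\|\le\theta(x)\|F(x)-F(y)\|+|\theta(x)-\theta(y)|\|F(y)\|$, the same Lipschitz bound $|\theta(x)-\theta(y)|\le\frac{1}{\epsilon}d_{\widetilde F}(x,y)$, and the identical elementary estimate $\|b\|\,\|a/\|a\|-b/\|b\|\,\|\le 2\|a-b\|$. Your treatment of the degenerate cases and your remark that $\epsilon<2$ is not actually used in the inequality are both accurate.
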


\begin{proof} The result \eqref{localization} is trivial for $F(x)=F(y)=0.$ If only one of $F(x)$ and $F(y)$ vanishes, then \eqref{localization} follows from the fact that $|\theta|\leq 1$. Hence it suffices to consider the case that $x,y\in\widetilde{S}_{F}$, i.e. for $e=\{x,y\}\in E(\widetilde{S}_{F},\widetilde{S}_{F}).$ In this case, we have
\begin{eqnarray*}
\|\Psi(x)-\Psi(y)\|&=&\|\theta(x)F(x)-\theta(y)F(y)\|\nonumber\\
&\leq&|\theta(x)|\|F(x)-F(y)\|+|\theta(x)-\theta(y)|\|F(y)\|\nonumber\\
&\leq&\|F(x)-F(y)\|+\frac{d_{\widetilde{F}}(x,y)}{\epsilon}\|F(y)\|.
\end{eqnarray*}
Note that
\begin{eqnarray*}
d_{\widetilde{F}}(x,y)\|F(y)\|&=&\left\|\frac{F(x)}{\|F(x)\|}-\frac{F(y)}{\|F(y)\|}\right\|\|F(y)\|\nonumber\\
&=&\left\|\frac{\|F(y)\|}{\|F(x)\|}F(x)-F(y)\right\|\nonumber\\
&\leq&\left\|\frac{\|F(y)\|}{\|F(x)\|}F(x)-F(x)\right\|+\|F(x)-F(y)\|\nonumber\\
&\leq&2\|F(x)-F(y)\|.
\end{eqnarray*}
This proves the lemma.
\end{proof}

\subsection{Some results on random partitions}
\
\noindent\\

Let $(X,d)$ be a metric space. For any $x\in X$, $r>0,$ we denote by $B(x,r):=\{y\in X: d(y,x)\leq r\}$ the ball of radius $r$ centered at $x.$ The metric doubling constant $\rho_X$ of $(X,d)$ is defined as

$$\rho_X:=\inf\left\{c\in\mathbb{N}:\forall x\in X, r>0, \exists\, x_1,\cdots,x_c\in X,
\text{ such that }B(x,r)\subset\bigcup^{c}_{i=1}B(x_i,\frac{r}{2})\right\}.$$
The metric doubling dimension of $(X,d)$ is defined as
$$\mathrm{dim}_d(X):=\log_2\rho_X.$$

A Borel measure $\mu$ on $(X,d)$ is called a doubling measure if there exists a finite number $C_{\mu}$ such that for any $x\in S, r>0$,
$$0<\mu\left(B(x,r)\right)\leq C_{\mu}\mu\left(B\left(x,\frac{r}{2}\right)\right)<+\infty.$$
Similarly, the measure doubling dimension is defined as
$$\mathrm{dim}_{\mu}(X):=\log_2(C_{\mu}).$$

The two doubling dimensions are related by the following lemma, see \cite[p. 67]{Coifman1971}.
\begin{lemma}\label{lemma41}
If a metric space $(X,d)$ has a doubling measure $\mu$, then
$$\mathrm{dim}_d(X)\leq4\mathrm{dim}_{\mu}(X).$$
\end{lemma}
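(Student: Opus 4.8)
The plan is to bound the metric doubling constant directly: I will show $\rho_X\le C_\mu^{4}$, so that taking $\log_2$ gives $\mathrm{dim}_d(X)=\log_2\rho_X\le 4\log_2 C_\mu=4\,\mathrm{dim}_\mu(X)$, which is the claim. Fix $x\in X$ and $r>0$; the goal is to cover $B(x,r)$ by at most $C_\mu^{4}$ balls of radius $r/2$.

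The main step is a packing estimate: any subset $\{x_1,\dots,x_N\}\subset B(x,r)$ with $d(x_i,x_j)>r/2$ for $i\neq j$ satisfies $N\le C_\mu^{4}$. Indeed, the balls $B(x_i,r/4)$ are pairwise disjoint — if $z$ lay in two of them then $d(x_i,x_j)\le d(x_i,z)+d(z,x_j)\le r/2$, a contradiction — each is contained in $B(x,2r)$, and $B(x,2r)\subset B(x_i,3r)$ for every $i$. Applying the doubling inequality for $\mu$ four times (and $3r\le 4r$ once),
\begin{align*}
\mu(B(x,2r))&\le\mu(B(x_i,3r))\le\mu(B(x_i,4r))\le C_\mu\,\mu(B(x_i,2r))\\
&\le C_\mu^{2}\,\mu(B(x_i,r))\le C_\mu^{3}\,\mu(B(x_i,r/2))\le C_\mu^{4}\,\mu(B(x_i,r/4)),
\end{align*}
so $\mu(B(x_i,r/4))\ge C_\mu^{-4}\,\mu(B(x,2r))$ for each $i$. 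Summing over the disjoint balls and using $\bigcup_{i}B(x_i,r/4)\subset B(x,2r)$ gives $N\,C_\mu^{-4}\,\mu(B(x,2r))\le\mu(B(x,2r))$; since $\mu$ is a doubling measure, $0<\mu(B(x,2r))<\infty$, so dividing yields $N\le C_\mu^{4}$.

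Consequently a maximal $r/2$-separated subset $\{x_1,\dots,x_N\}$ of $B(x,r)$ exists and has $N\le C_\mu^{4}$; by maximality every $y\in B(x,r)$ lies within distance $r/2$ of some $x_i$, so $B(x,r)\subset\bigcup_{i=1}^N B(x_i,r/2)$, whence $\rho_X\le N\le C_\mu^{4}$ and the lemma follows. The argument is essentially routine; the only point requiring care is the bookkeeping of radii — taking the separation scale $r/2$ and the packing scale $r/4$ is precisely what makes exactly four applications of the doubling inequality suffice, yielding the constant $4$. A coarser choice of scales would still give a bound $\mathrm{dim}_d(X)\le c\,\mathrm{dim}_\mu(X)$, but with a larger constant $c$.
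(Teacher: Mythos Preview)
Your argument is correct: the packing bound via disjoint balls $B(x_i,r/4)\subset B(x,2r)\subset B(x_i,4r)$ and four applications of the doubling inequality yields $\rho_X\le C_\mu^{4}$, hence $\dim_d(X)\le 4\dim_\mu(X)$. The paper does not supply its own proof of this lemma---it simply cites \cite[p.~67]{Coifman1971}---so there is nothing to compare against; what you have written is exactly the standard volume-comparison proof one finds in the literature on spaces of homogeneous type.
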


One can check that $d(x,y):=\|x-y\|$ is a metric on $\mathbb{S}^{k-1}$ and we have the following property.

\begin{prop}\label{diameter}
For the metric space $(\mathbb{S}^{k-1},d)$, we have

$\bullet \,\mathrm{diam}(\mathbb{S}^{k-1},d)=2;$

$\bullet \,\mathrm{dim}_d(\mathbb{S}^{k-1})\leq 4(k-1)\log_2\pi.$
\end{prop}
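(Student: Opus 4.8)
The first assertion is immediate: for all $x,y\in\mathbb{S}^{k-1}$ one has $\|x-y\|\le\|x\|+\|y\|=2$, with equality for $y=-x$, so $\mathrm{diam}(\mathbb{S}^{k-1},d)=2$.

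For the second assertion the plan is to invoke Lemma~\ref{lemma41}: it suffices to exhibit a doubling measure $\mu$ on $(\mathbb{S}^{k-1},d)$ with doubling constant at most $\pi^{k-1}$, for then $\mathrm{dim}_\mu(\mathbb{S}^{k-1})\le\log_2\pi^{k-1}=(k-1)\log_2\pi$ and hence $\mathrm{dim}_d(\mathbb{S}^{k-1})\le4\,\mathrm{dim}_\mu(\mathbb{S}^{k-1})\le4(k-1)\log_2\pi$. I would take $\mu$ to be the $(k-1)$-dimensional surface measure and set $m:=k-1\ge1$. Fixing $x\in\mathbb{S}^{k-1}$ and using rotational invariance to put $x=(0,\dots,0,1)$, I parametrize $y\in\mathbb{S}^{k-1}$ by its angle $\phi\in[0,\pi]$ to $x$ and a direction $\omega\in\mathbb{S}^{m-1}$; then $\|x-y\|^2=2-2\cos\phi$, i.e.\ $\|x-y\|=2\sin(\phi/2)$. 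Hence for $\rho\in(0,2]$ the ball $B(x,\rho)$ is the spherical cap $\{\phi\le2\arcsin(\rho/2)\}$, while $B(x,\rho)=\mathbb{S}^{k-1}=B(x,2)$ for $\rho\ge2$. Using the area element $\sin^{m-1}\phi\,d\phi\,d\omega$ together with the substitutions $\phi=2t$ and $u=\sin t$, one gets a constant $c_k>0$ depending only on $k$ such that
\[
\mu\big(B(x,\rho)\big)=c_k\,G\big(\min\{\rho/2,1\}\big),\qquad G(s):=\int_0^s u^{m-1}(1-u^2)^{(m-2)/2}\,du\quad(0\le s\le1).
\]

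The core of the argument is the estimate $G(2s)\le(1+2^{m})\,G(s)$ for all $s\in(0,\tfrac12]$. For $m\ge2$ I would write $G(2s)=G(s)+\int_s^{2s}u^{m-1}(1-u^2)^{(m-2)/2}\,du$ and substitute $u=2v$ in the last integral; since $0\le1-4v^2\le1-v^2$ for $v\in[0,\tfrac12]$ and $(m-2)/2\ge0$, the integrand is dominated by $2^{m}v^{m-1}(1-v^2)^{(m-2)/2}$, so $\int_s^{2s}\le2^m\big(G(s)-G(s/2)\big)\le2^mG(s)$, giving the claim. For $m=1$, where $G(s)=\arcsin s$, the bound $\arcsin(2s)\le3\arcsin s=(1+2^1)\arcsin s$ on $[0,\tfrac12]$ is checked by hand (the difference vanishes at $s=0$ and $s=\tfrac12$ and is unimodal between). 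Granting this, for arbitrary $\rho>0$ set $\sigma:=\min\{\rho/4,1\}\in(0,1]$ and observe the identity $\min\{\rho/2,1\}=\min\{2\sigma,1\}$; by monotonicity of $G$,
\[
\frac{\mu(B(x,\rho))}{\mu(B(x,\rho/2))}=\frac{G(\min\{2\sigma,1\})}{G(\sigma)}\le\sup_{0<\tau\le1/2}\frac{G(2\tau)}{G(\tau)}\le1+2^{m},
\]
where the middle inequality splits into $2\sigma\le1$ and $2\sigma>1$ (in the latter, bound $G(1)/G(\sigma)\le G(1)/G(1/2)=G(2\cdot\tfrac12)/G(\tfrac12)$). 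Since $\pi^m-2^m$ is increasing in $m$ and $\pi-2>1$, we get $1+2^m\le\pi^m$ for every $m\ge1$, so $\mu$ is doubling with constant $\le\pi^{k-1}$, and Lemma~\ref{lemma41} finishes the proof.

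The step I expect to need the most care is the bookkeeping near the diameter $2$: once the radius reaches $2$ the cap is the whole sphere, so one must check that the passage to the single inequality $G(2s)\le(1+2^m)G(s)$ on $(0,\tfrac12]$ really does cover all $\rho\in(0,\infty)$ — in particular $\rho\ge2$ and $2<\rho<4$ — which is exactly the role of the identity $\min\{\rho/2,1\}=\min\{2\min\{\rho/4,1\},1\}$. The only other nuisance is that for $m=1$ the weight $(1-u^2)^{(m-2)/2}$ is increasing rather than decreasing, so the monotonicity used for $m\ge2$ reverses and that single case has to be handled by the elementary estimate above.
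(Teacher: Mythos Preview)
Your argument is correct. The paper does not actually give a proof of this proposition; it simply remarks that one can refer to \cite[Corollary~3.12]{LeeNaor2005}. Your write-up supplies exactly the self-contained argument that the paper's setup invites: Lemma~\ref{lemma41} is stated immediately before the proposition precisely so that one can bound $\mathrm{dim}_d(\mathbb{S}^{k-1})$ via a doubling measure, and the surface measure is the natural choice. Your computation of $G$, the inequality $G(2s)\le(1+2^m)G(s)$ (with the separate handling of $m=1$), and the bookkeeping for $\rho\ge2$ are all fine; the final numerical check $1+2^m\le\pi^m$ for $m\ge1$ is what pins down the constant $\pi^{k-1}$. So rather than taking a different route, you have written out the proof the paper chose to outsource.

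One small remark: you implicitly take $m=k-1\ge1$, i.e.\ $k\ge2$. This is harmless in context (the proposition is only used in the proof of Theorem~\ref{hd}, where the bound $\alpha\le128Ck(k-1)$ is vacuous for $k=1$), but it is worth saying explicitly, since for $k=1$ the stated inequality $\mathrm{dim}_d(\mathbb{S}^0)\le0$ is actually false.
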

\begin{rem}
For the proof of Proposition \ref{diameter}, one can refer to \cite[Corollary 3.12]{LeeNaor2005}.
\end{rem}

A partition of $(X,d)$ is a map $P:X\rightarrow 2^X$, such that $P(x)$ is the unique set in $\{S_i\}^{m}_{i=1}$ that contains $x$, where $S_i\cap S_j=\emptyset$, $\forall i\neq j$, and $X=\cup^{m}_{i=1}S_i$. We denote by $\mathcal{P}(X)$ the collection of partitions of $(X,d)$.

\begin{defi}[Random partition]
Any probability distribution $\varpi$ on $\mathcal{P}(X)$ is called a random partition of $(X,d).$
\end{defi}
We denote by $\mathrm{supp}(\varpi):=\{P\in\mathcal{P}|\varpi(P)\neq0\}$ the support set of random partition $\varpi$.

\begin{theorem}[see e.g. Theorem 2.4 in \cite{Liu2015}]\label{randompartition}
Let $(X,d)$ be a finite metric subspace of $(Y,d)$. Then for any $r>0$, $\delta\in(0,1)$ there exists a random partition $\varpi,$  such that

$\bullet$ $\text{for any } P\in\mathrm{supp}(\varpi)$, any $S$ in the partition $P$, one has $\mathrm{diam}(S)\leq r$;

$\bullet \,$ for any $x$, one has $\mathbb{P}_{\varpi}[B(x, \frac{r}{\alpha})\subset P(x)]\geq1-\delta$, where $\alpha=\frac{32\mathrm{dim}_d(Y)}{\delta}.$
\end{theorem}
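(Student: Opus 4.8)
The plan is to realize $\varpi$ as a CKR-type random partition at scale $r$, following the random-partition method of \cite{Trevsan2014,Liu2015}. First I would fix a maximal $\tfrac{r}{\alpha}$-separated subset $N\subseteq X$; being maximal it is automatically $\tfrac{r}{\alpha}$-dense, and it will serve as the set of candidate cluster centers. Then I would sample a radius $\beta$ uniformly from $[\tfrac r4,\tfrac r2]$ and, independently, a uniformly random linear order $\pi$ on $N$, and assign to each $z\in X$ the $\pi$-earliest center $c(z)\in N$ with $d(z,c(z))\le\beta$; this candidate set is nonempty because $N$ is $\tfrac{r}{\alpha}$-dense and $\alpha>4$. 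The blocks $P(z):=\{w\in X:c(w)=c(z)\}$ form a partition of $X$, and $\varpi$ is defined to be the law of this random partition.

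The first conclusion is immediate: if $c(w)=c(z)=y$ then $d(w,z)\le d(w,y)+d(y,z)\le 2\beta\le r$, so every block of every $P\in\mathrm{supp}(\varpi)$ has diameter at most $r$. For the second conclusion I would fix $x\in X$, set $t:=\tfrac{r}{\alpha}$, and list $N$ as $y_1,y_2,\dots$ in nondecreasing order of $d(x,\cdot)$. The deterministic core is the claim: if $y_\ast$ denotes the $\pi$-earliest element of $\mathcal C:=\{y\in N:d(x,y)\le\beta+t\}$ and $d(x,y_\ast)\le\beta-t$, then $B(x,t)\subseteq P(x)$; indeed every $z\in B(x,t)$ satisfies $d(z,y_\ast)\le\beta$, while any center $\pi$-earlier than $y_\ast$ lies outside $\mathcal C$, hence at distance $>\beta+t$ from $x$ and $>\beta$ from $z$, so $c(z)=y_\ast$ for all such $z$. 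Therefore the bad event $\{B(x,t)\not\subseteq P(x)\}$ is contained in $\{d(x,y_\ast)>\beta-t\}$, which forces $y_\ast=y_i$ with $d(x,y_i)\in(\beta-t,\beta+t]$. Conditioning on $\beta$: when $y_i\in\mathcal C$ the set $\mathcal C$ contains $y_1,\dots,y_i$, so the random order puts $y_i$ first in $\mathcal C$ with probability $\le\tfrac1i$; and $\beta$ lands in the length-$2t$ interval making $y_i$ eligible with probability $\le\tfrac{2t}{r/4}=\tfrac{8t}{r}$. A union bound over $i$ then gives $\mathbb P_\varpi[B(x,t)\not\subseteq P(x)]\le\tfrac{8t}{r}\sum_{i:\,d(x,y_i)<r}\tfrac1i$.

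Finally I would bound this harmonic sum using the doubling hypothesis, and this is where the \emph{main obstacle} lies. The contributing $y_i$ sit in $B(x,r)$ and are pairwise $\ge\tfrac{r}{\alpha}$ apart, so iterating the doubling property of $(Y,d)$ about $\log_2(2\alpha)$ times caps their number by $m\le\rho_Y\,(2\alpha)^{\mathrm{dim}_d(Y)}$, and the crude estimate $\sum_{i\le m}\tfrac1i\le 1+\ln m$ together with the choice $\alpha\asymp\mathrm{dim}_d(Y)/\delta$ already makes the right-hand side $O(\delta)$. The trouble is that $1+\ln m$ carries an extra factor of order $\log(\mathrm{dim}_d(Y)/\delta)$, so to reach the stated clean bound $\alpha=\tfrac{32\,\mathrm{dim}_d(Y)}{\delta}$ one must argue more sharply: restrict the harmonic sum to the single dyadic annulus around $x$ that the random $\beta$ can actually meet, and --- when a doubling measure is available, as for $Y=\mathbb S^{k-1}$ in the intended application (Proposition~\ref{diameter}) --- compare cardinalities of concentric balls by a bounded, logarithm-free ratio; alternatively, in the Euclidean model one may replace the CKR scheme by a randomly shifted grid. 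This constant optimization is precisely the content of \cite[Theorem~2.4]{Liu2015}, which we invoke.
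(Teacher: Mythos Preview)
The paper does not prove this theorem at all: it is quoted verbatim as a known result from the literature, with the citation ``see e.g.\ Theorem~2.4 in \cite{Liu2015}'' and no accompanying argument. So there is no ``paper's own proof'' to compare against; the authors simply invoke the reference, exactly as you do in your final sentence.

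Your sketch is the standard CKR random-partition construction and is correct in outline. The one point you yourself flag --- that the naive harmonic-sum bound picks up an extra $\log(\mathrm{dim}_d(Y)/\delta)$ factor and that removing it to reach the clean $\alpha=32\,\mathrm{dim}_d(Y)/\delta$ requires a sharper annulus argument --- is real, and you are right that this refinement is precisely what \cite{Liu2015} supplies. Since both you and the paper ultimately defer to that reference for the exact constant, your proposal is in line with the paper's treatment.
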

\begin{rem}
A random partition obtained in the above theorem is called an $(r,\alpha,1-\delta)$-padded random partition.
\end{rem}
\begin{rem}
Random partition theory was firstly developed in theoretical computer science and has many import applications in pure mathematics, see \cite{Gupta2003,LeeNaor2005,Trevsan2014,Liu2015}.
\end{rem}

\begin{lemma}[see e.g. Lemma 6.2 in \cite{Liu2015}]\label{partition}
Let $0<r<1$, $\alpha>0$, $k\in\mathbb{N}^+$. Suppose that $F$ is $(r,\frac{1}{k}(1+\frac{1}{8k}))$-spreading, and there exists a $(r,\alpha,1-\frac{1}{4k})$-random partition on $(\widetilde{S}_{F},d_{\widetilde{F}})$, then there exist $k$ non-empty disjoint subset   $T_1,T_2,\cdots,T_k\subset\widetilde{S}_{F}$ such that

$\bullet\quad d_{\widetilde{F}}(T_i,T_j)\geq2\frac{r}{\alpha}, \, \forall 1\leq i\neq j\leq k;$

$\bullet\quad \sum_{x\in T_i}\nu(x)\|F(x)\|^2\geq\frac{1}{2k}\sum_{x\in W}\nu(x)\|F(x)\|^2, \, \forall 1\leq i\leq k.$
\end{lemma}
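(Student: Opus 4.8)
The plan is to combine the padded random partition from Theorem~\ref{randompartition} with the spreading hypothesis by a first-moment/averaging argument. First I would apply Theorem~\ref{randompartition} to the finite metric space $(\widetilde{S}_F,d_{\widetilde F})$, viewed inside $(\mathbb{S}^{k-1},d)$, to get an $(r,\alpha,1-\frac1{4k})$-padded random partition $\varpi$: every cluster $S$ in every partition in $\mathrm{supp}(\varpi)$ has $\mathrm{diam}(S)\le r$, and for each point $x$ we have $\mathbb{P}_\varpi[B(x,\tfrac r\alpha)\subset P(x)]\ge 1-\tfrac1{4k}$. For a fixed partition $P\in\mathrm{supp}(\varpi)$ write its clusters as $S_1^P,S_2^P,\dots$ and set, for each cluster, the weight $w(S):=\sum_{x\in S}\nu(x)\|F(x)\|^2$; by \eqref{eq:eqq1} the total weight is $k$. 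Since each cluster has $d_{\widetilde F}$-diameter at most $r<1$, the spreading hypothesis (with parameter $\tfrac1k(1+\tfrac1{8k})$) forces $w(S)\le \tfrac1k(1+\tfrac1{8k})$ for every cluster $S$ in every $P\in\mathrm{supp}(\varpi)$. Hence in each $P$ the clusters, having total weight $k$ and individual weight below $\tfrac1k(1+\tfrac1{8k})$, cannot be too few and cannot be dominated by a bounded number of them: a pigeonhole computation shows that for every $P$ there are at least $k$ clusters each of weight $\ge\tfrac1{2k}$ — or, more robustly, that after discarding clusters of small weight one retains a set of clusters whose total weight is close to $k$.

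Next I would pass from clusters to the desired sets $T_1,\dots,T_k$. The subtlety is the separation requirement $d_{\widetilde F}(T_i,T_j)\ge 2r/\alpha$, which is not automatic for raw clusters of a partition. This is where the padding is used: for each cluster $S$ of $P$, shrink it to its $\tfrac r\alpha$-interior $\widehat S:=\{x\in S: B(x,\tfrac r\alpha)\subset S\}$. Distinct shrunken clusters are then automatically $2\tfrac r\alpha$-separated in $d_{\widetilde F}$, because a point within $\tfrac r\alpha$ of $\widehat S_i$ lies in $S_i$, hence not in $\widehat S_j$ for $j\ne i$. The weight lost in passing from $S$ to $\widehat S$, summed over all clusters, is controlled by the expected $\nu\|F\|^2$-mass of the set of ``badly padded'' points, which by the padding guarantee is at most $\tfrac1{4k}$ of the total in expectation. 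Therefore there exists a partition $P\in\mathrm{supp}(\varpi)$ for which the total weight of $\bigcup_S \widehat S$ is at least $k-\tfrac14$ (this is the pointwise realization of an expectation bound). For that fixed $P$, apply the pigeonhole/averaging estimate to the shrunken clusters $\widehat S$: each has weight at most $\tfrac1k(1+\tfrac1{8k})$, and together they carry weight $\ge k-\tfrac14$; an elementary counting argument then yields at least $k$ of them, say $T_1,\dots,T_k$, each with $w(T_i)\ge\tfrac1{2k}$. These $T_i$ are nonempty, pairwise disjoint, pairwise $2\tfrac r\alpha$-separated in $d_{\widetilde F}$, and by \eqref{eq:eqq1} each satisfies $\sum_{x\in T_i}\nu(x)\|F(x)\|^2\ge\tfrac1{2k}=\tfrac1{2k}\cdot\tfrac1k\sum_{x\in W}\nu(x)\|F(x)\|^2$, which is exactly the claimed inequality.

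The main obstacle is the bookkeeping in the averaging step: one must balance the spreading constant $\tfrac1k(1+\tfrac1{8k})$ against the padding failure probability $\tfrac1{4k}$ so that the counting argument genuinely produces $k$ (not merely $k-1$) sets of weight $\ge\tfrac1{2k}$. The slack has to be tracked carefully, since after shrinking we only know the surviving weight is $\ge k-\tfrac14$ rather than exactly $k$, and each cluster's weight upper bound exceeds $\tfrac1k$ by the small amount $\tfrac1{8k^2}$; the numerical constants $\tfrac1{8k}$ and $\tfrac1{4k}$ in the hypotheses are chosen precisely so that $k - \tfrac14$ divided by the per-cluster cap still leaves room for $k$ clusters above the $\tfrac1{2k}$ threshold. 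Everything else — the geometric separation from padding, the diameter bound from the partition, the identity $\sum_{x\in W}\nu(x)\|F(x)\|^2=k$ — is routine and has already been set up in the spreading and localization subsections.
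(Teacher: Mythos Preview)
The paper does not prove this lemma; it is quoted from \cite{Liu2015} without proof, so there is no in-paper argument to compare against. Your outline---padded random partition, shrink each cluster to its $r/\alpha$-interior to get $2r/\alpha$-separation, then an averaging/counting step---is indeed the standard route in \cite{Trevsan2014,Liu2015}. Two points, however, need repair.

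First, the normalization is off by a factor of $k$. By Definition~\ref{defspreading}, $(r,\delta)$-spreading means $w(S)\le \delta\cdot\sum_{x\in W}\nu(x)\|F(x)\|^2$; with $\delta=\tfrac1k(1+\tfrac1{8k})$ and total mass $k$ this gives $w(S)\le 1+\tfrac1{8k}$, not $\tfrac1k(1+\tfrac1{8k})$. Correspondingly, the target is $w(T_i)\ge\tfrac1{2k}\cdot k=\tfrac12$, not $\tfrac1{2k}$; your last displayed identity inserts a spurious $\tfrac1k$ and therefore does \emph{not} match the claimed inequality.

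Second, and more serious, the ``elementary counting argument'' does not go through even after fixing the normalization. Knowing only that the shrunken clusters have total weight $\ge k-\tfrac14$ and individual weight $\le 1+\tfrac1{8k}$ is \emph{not} enough to force $k$ of them above $\tfrac12$: for $k=2$ one can have original cluster weights $\tfrac{17}{16},\ \tfrac12,\ \tfrac{7}{16}$ (summing to $2$, each $\le\tfrac{17}{16}$) and shrunken weights $\tfrac{17}{16},\ 0.4,\ 0.4$ (total $1.8625\ge \tfrac74$, total loss $<\tfrac14$), yet only one shrunken cluster has weight $\ge\tfrac12$. So selecting a single partition that merely maximizes the surviving mass is insufficient. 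The proofs in \cite{Trevsan2014,Liu2015} avoid this by averaging a more refined quantity over the random partition---essentially tracking the \emph{original} mass sitting in clusters whose shrunken mass stays above the threshold, and using that each such cluster carries at most $1+\tfrac1{8k}$ original mass so that the number of them is forced to exceed $k-1$. You have correctly identified this step as the crux; the point is that the bookkeeping cannot be done with the surviving total alone.
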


\subsection{Main results of this section}
\begin{theorem}\label{hd}
Let $(V,\mu,\nu)$ be a weighted graph and $ W\subset V$ be a finite subset. For the Dirichlet eigenvalue problem \eqref{Dirichlet} on $ W$, we have
$$\lambda^{\nu}_{k,D}( W)\geq\frac{c}{k^6}\Gamma_k( W),$$ where $\Gamma_k(W)$ is defined in \eqref{eq:eqpp10}.

\end{theorem}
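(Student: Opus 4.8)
The plan is to follow the random-partition strategy of Lee–Oveis Gharan–Trevisan \cite{Trevsan2014}, in the streamlined form of Liu \cite{Liu2015}, which has been set up in the preceding subsections precisely for this purpose. The target inequality $\lambda^{\nu}_{k,D}(W)\geq \frac{c}{k^6}\Gamma_k(W)$ will follow by producing $k$ disjoint subsets $T_1,\dots,T_k\subset W$ whose localized test functions each have Rayleigh quotient bounded by $C k^6\,\lambda^{\nu}_{k,D}(W)$; since $\Gamma_k(W)$ is the min–max over disjoint $k$-tuples of the first Dirichlet eigenvalues $\lambda^{\nu}_{1,D}(A_l)$, this immediately gives $\Gamma_k(W)\leq C k^6\lambda^{\nu}_{k,D}(W)$.

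First I would take $f_1,\dots,f_k$ the first $k$ orthonormal Dirichlet eigenfunctions and form the map $F:W\to\R^k$ and the spherical map $\widetilde F:\widetilde S_F\to\mathbb{S}^{k-1}$ of \eqref{4inducedmap}. By the Rayleigh characterization of $\lambda^{\nu}_{k,D}(W)$ one has $\sum_{e=\{x,y\}\in E(W,\overline W)}\mu_{xy}\|F(x)-F(y)\|^2\leq \lambda^{\nu}_{k,D}(W)\sum_{x\in W}\nu(x)\|F(x)\|^2$. Next I would fix a radius $r\in(0,1)$ — concretely $r$ a small absolute constant, say $r=\tfrac12$ — and invoke the Spreading Lemma \ref{spreadinglemma}, which says $F$ is $(r,\frac{1}{k(1-r^2)})$-spreading; choosing $r$ so that $\frac{1}{k(1-r^2)}\leq \frac1k(1+\frac{1}{8k})$ is arranged (this only restricts $r$ to be at most an absolute constant). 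Then, using Proposition \ref{diameter} — which bounds $\mathrm{diam}(\mathbb{S}^{k-1},d)=2$ and $\mathrm{dim}_d(\mathbb{S}^{k-1})\leq 4(k-1)\log_2\pi$ — together with the padded-random-partition Theorem \ref{randompartition} applied to the finite metric space $(\widetilde S_F,d_{\widetilde F})\subset(\mathbb{S}^{k-1},d)$ with parameter $\delta=\frac{1}{4k}$, I obtain an $(r,\alpha,1-\frac{1}{4k})$-random partition with $\alpha=\frac{32\,\mathrm{dim}_d(\mathbb{S}^{k-1})}{\delta}=O(k^2)$. Feeding the spreading property and this random partition into Lemma \ref{partition} yields $k$ disjoint nonempty sets $T_1,\dots,T_k\subset\widetilde S_F$ with $d_{\widetilde F}(T_i,T_j)\geq 2\tfrac{r}{\alpha}$ and $\sum_{x\in T_i}\nu(x)\|F(x)\|^2\geq\frac{1}{2k}\sum_{x\in W}\nu(x)\|F(x)\|^2$ for each $i$.

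The final step is to localize. For each $i$ set $\epsilon=\tfrac{r}{\alpha}$ and let $\Psi_i=\theta_i\cdot F$ be the localization of $F$ onto $T_i$ as in \eqref{cut-off}–\eqref{4cut-off}; since the $T_i$ are $2\epsilon$-separated in $d_{\widetilde F}$, the supports $\mathrm{supp}(\Psi_i)\subset N_\epsilon(T_i\cap\widetilde S_F,d_{\widetilde F})$ are pairwise disjoint, so setting $A_i:=\mathrm{supp}(\Psi_i)$ gives a disjoint $k$-tuple in $\mathcal{A}_k(W)$. By the Localization Lemma \ref{localizationlemma}, $\|\Psi_i(x)-\Psi_i(y)\|\leq(1+\tfrac2\epsilon)\|F(x)-F(y)\|$ on every edge, so
\[
\sum_{e=\{x,y\}\in E(W,\overline W)}\mu_{xy}\|\Psi_i(x)-\Psi_i(y)\|^2\leq\Bigl(1+\tfrac2\epsilon\Bigr)^2\sum_{e=\{x,y\}\in E(W,\overline W)}\mu_{xy}\|F(x)-F(y)\|^2,
\]
while $\sum_{x}\nu(x)\|\Psi_i(x)\|^2\geq\sum_{x\in T_i}\nu(x)\|F(x)\|^2\geq\frac{1}{2k}\sum_{x}\nu(x)\|F(x)\|^2$. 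Each coordinate $(\Psi_i)_j$ is supported in $A_i$, and at least one coordinate $j$ has $\sum_x\nu(x)(\Psi_i)_j(x)^2\geq\frac1k\sum_x\nu(x)\|\Psi_i(x)\|^2$; using that coordinate as a test function for $\lambda^{\nu}_{1,D}(A_i)$ gives
\[
\lambda^{\nu}_{1,D}(A_i)\leq k\Bigl(1+\tfrac2\epsilon\Bigr)^2\cdot\frac{\sum_{e}\mu_{xy}\|F(x)-F(y)\|^2}{\frac{1}{2k}\sum_x\nu(x)\|F(x)\|^2}\leq 2k^2\Bigl(1+\tfrac{2\alpha}{r}\Bigr)^2\lambda^{\nu}_{k,D}(W).
\]
Since $\tfrac2\epsilon=\tfrac{2\alpha}{r}=O(k^2)$ and $r$ is an absolute constant, $(1+\tfrac2\epsilon)^2=O(k^4)$, so $\max_i\lambda^{\nu}_{1,D}(A_i)\leq C k^6\lambda^{\nu}_{k,D}(W)$, hence $\Gamma_k(W)\leq Ck^6\lambda^{\nu}_{k,D}(W)$, which is the claim with $c=1/C$.

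The main obstacle — really the only nonroutine point — is bookkeeping the parameter dependence: one must check that the constraints imposed on $r$ by the Spreading Lemma (to get $(r,\frac1k(1+\frac1{8k}))$-spreading) are compatible with keeping $r$ a universal constant, and then track carefully how $\alpha=\Theta(k^2)$ propagates through the $(1+2/\epsilon)^2$ factor and the $1/k$ losses from coordinate selection and from Lemma \ref{partition} to land exactly at the power $k^6$. The padded-partition existence and the geometric estimates on $\mathbb{S}^{k-1}$ are quoted wholesale from Theorem \ref{randompartition}, Proposition \ref{diameter} and Lemma \ref{partition}, so no new ideas are needed there; the Dirichlet boundary condition $f|_{\delta W}=0$ causes no trouble because the cutoff $\theta$ vanishes where $F$ does and localization only shrinks supports within $W$.
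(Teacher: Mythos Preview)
Your overall strategy is exactly that of the paper, and the architecture of the argument is sound. However, there is a genuine parameter error in the spreading step that propagates through the rest of the proof.

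You assert that the constraint $\frac{1}{k(1-r^2)}\leq\frac{1}{k}\bigl(1+\frac{1}{8k}\bigr)$, needed to feed the Spreading Lemma into the hypothesis of Lemma~\ref{partition}, ``only restricts $r$ to be at most an absolute constant.'' This is false: the inequality is equivalent to $r^2\leq\frac{1}{8k+1}$, so $r$ must be of order $1/\sqrt{k}$. The paper takes $r=\frac{1}{3\sqrt{k}}$, and this is not cosmetic. With $r$ an absolute constant (e.g.\ $r=\tfrac12$) the spreading bound $\frac{1}{k(1-r^2)}=\frac{4}{3k}$ exceeds $\frac{1}{k}\bigl(1+\frac{1}{8k}\bigr)$ for every $k\geq 1$, and Lemma~\ref{partition} simply does not apply.

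Once you set $r=\Theta(1/\sqrt{k})$, the ratio $\alpha/r$ becomes $\Theta(k^{5/2})$ rather than $\Theta(k^2)$, so $(1+2/\epsilon)^2=\Theta(k^5)$, not $\Theta(k^4)$. With your coordinate-selection step (picking $j$ with $\sum_x\nu(x)(\Psi_i)_j(x)^2\geq\frac1k\sum_x\nu(x)\|\Psi_i(x)\|^2$), which loses a factor of $k$, you would then end up with $2k^2\cdot\Theta(k^5)=\Theta(k^7)$ instead of $k^6$. The paper avoids this extra loss by using the elementary observation that if $\frac{\sum_j A_j}{\sum_j B_j}\leq C$ with $A_j,B_j\geq 0$, then some $j$ with $B_j>0$ satisfies $A_j/B_j\leq C$; applied to $A_j=\sum_e\mu_{xy}|\psi^j_i(x)-\psi^j_i(y)|^2$ and $B_j=\sum_x\nu(x)|\psi^j_i(x)|^2$, this extracts a scalar test function with the \emph{same} Rayleigh bound, at no cost in $k$. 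Combined with the $2k$ factor from Lemma~\ref{partition}, that gives $2k\cdot\Theta(k^5)=\Theta(k^6)$, matching the statement.

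In short: fix $r=\frac{1}{3\sqrt{k}}$ and replace your coordinate-selection argument by the ratio-of-sums trick; then your proof coincides with the paper's.
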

\begin{proof}
Choosing $r=\frac{1}{3\sqrt{k}}$, $F$ is $(r,\frac{1}{k}(1+\frac{1}{8k}))$-spreading  by Lemma \ref{spreadinglemma}. If we further take $\delta=\frac{1}{4k}$, then $\widetilde{S}_{F}$ has an $(r,\alpha,1-\frac{1}{4k})$-padded random partition by Theorem \ref{randompartition} with
$$\alpha=128k\mathrm{dim}_d(\mathbb{S}^{k-1}).$$
From Proposition \ref{diameter}, we know that $\alpha\leq128Ck(k-1)$, where $C=4\log_2\pi$. Then by Lemma \ref{partition}, we can find $k$ disjoint subsets $T_1, T_2, \cdots, T_k$, such that

$\bullet \quad d_{\widetilde{F}}(T_i,T_j)\geq2\frac{r}{\alpha}\geq\frac{2}{3\sqrt{k}}\frac{1}{128Ck(k-1)},\,\forall 1\leq i\neq j\leq k;$

$\bullet\quad \sum_{x\in T_i}\nu(x)\|F(x)\|^2\geq\frac{1}{2k}\sum_{x\in W}\nu(x)\|F(x)\|^2, \, \forall 1\leq i\leq k.$

Let $\{\theta_i\}^k_{i=1}$ be $k$ cut-off functions defined as in \eqref{cut-off}, where $S$ is replaced by $T_i$ and $\epsilon=\frac{1}{3\sqrt{k}}\frac{1}{128Ck(k-1)}$. Similar to \eqref{4cut-off}, we obtain $k$ localizations of $F$ satisfying

$$\Psi_i|_{T_i}=F|_{T_i}, \quad\forall1\leq i\leq k,$$
$$\mathrm{supp}(\Psi_i)\cap\mathrm{supp}(\Psi_j)=\emptyset,\quad \forall1\leq i\neq j\leq k.$$

Applying Lemma \ref{localizationlemma}, for any $1\leq i\leq k,$ we have
\begin{eqnarray*}
&&\frac{\sum_{e=\{x,y\}\in E( W,\overline{ W})}\mu_{xy}\|\Psi_i(x)-\Psi_i(y)\|^2}{\sum_{x\in\mathrm{supp}(\Psi_i)}\nu(x)\|\Psi_i(x)\|^2}\nonumber\\
&\leq&\frac{(1+\frac{2}{\epsilon})^2\sum_{e=\{x,y\}\in E( W,\overline{ W})}\mu_{xy}\|F(x)-F(y)\|^2}{\frac{1}{2k}\sum_{x\in W}\nu(x)\|F(x)\|^2}\nonumber\\
&=&2k(1+768C\sqrt{k}k(k-1))^2\frac{\sum_{e=\{x,y\}\in E( W,\overline{ W})}\mu_{xy}\|F(x)-F(y)\|^2}{\sum_{x\in W}\nu(x)\|F(x)\|^2}\nonumber\\
&\leq&2\times(786C)^2k^6\lambda^{\nu}_{k,D}( W).
\end{eqnarray*}
Write $\Psi_i(x)=(\psi^1_i(x),\psi^2_i(x),\cdots,\psi^k_i(x)).$ Hence for any $1\leq i\leq k,$ there exists a coordinate index $a_i\in\{1,2,\cdots,k\}$ such that $\psi^{a_i}_i$ is not identically zero and
$$\frac{\sum_{e=\{x,y\}\in E( W,\overline{ W})}\mu_{xy}|\psi^{a_i}_i(x)-\psi^{a_i}_i(y)|^2}{\sum_{x\in W}\nu(x)|\psi^{a_i}_i(x)|^2}\leq ck^6\lambda^{\nu}_{k,D}( W),$$
where $c=2\times(786C)^2$. Set $A_i:=\mathrm{supp}(\psi^{a_i}_i),$ for any $1\leq i\leq k.$ Then we have $(A_1,A_2,\cdots,A_k)\in\mathcal{A}_k( W)$ and for any $1\leq i\leq k,$
$$\lambda^{\nu}_{1,D}(A_i)\leq ck^6\lambda^{\nu}_{k,D}( W).$$
Then by the definition of $\Gamma_k( W),$ \eqref{eq:eqpp10}, we have
$$\lambda^{\nu}_{k,D}( W)\geq\frac{c}{k^6}\Gamma_k( W).$$
\end{proof}

\section{Higher order Cheeger estimate for DtN operators}

Let $(V,E,\mu)$ be an infinite graph and ${\Omega}\subset V$ be an infinite subset. For a finite subset $ W\subset\overline{\Omega}$ with $ W\cap\delta\Omega\neq\emptyset,$
let $\sigma_k( W)$ be the $k$-th eigenvalue of the DtN operator on $ W$.

Following the method proposed in \cite{Miclo2017}, we prove higher order Cheeger estimates for the DtN operators. For any $r>0$, consider the following measure defined on $\overline{\Omega}$:
\[
m^{(r)}_x=\left\{
       \begin{array}{ll}
        d(x), &\quad x\in\delta\Omega, \\
        \frac{1}{r}d(x), &\quad x\in\Omega.
               \end{array}
     \right.
\]
For any $f\in\mathbb{R}^{\overline{\Omega}}$, set
\begin{equation}\label{def:blowup}\Delta^{(r)}(f)(x):=\frac{1}{m^{(r)}_x}\sum_{y\in\overline{\Omega}}\mu_{xy}(f(y)-f(x)),\, \forall x\in\Omega.\end{equation}
Let $ W\subset\overline{\Omega}$, $ W\cap\delta\Omega\neq\emptyset$, be a finite subset. Consider the following Dirichlet problem
\begin{equation}\label{dtnapproximate}
\left\{
       \begin{array}{ll}
        -\Delta^{(r)}(f)(x)=\lambda f(x), &\quad x\in W, \\
        f(x)=0, &\quad x\in\delta W.
               \end{array}
     \right.
\end{equation}
We denote by $\lambda^{(r)}_{k,D}( W)$ the $k$-th eigenvalue of the above Dirichlet problem. Recall that $N=\sharp W$. Set $P:=\sharp( W\cap\delta\Omega)$ and we have the following approximation result.

\begin{prop}\label{approximation}
For any $k\in[P]:=\{1,2,\cdots,P\}$, we have
$$\lim_{r\rightarrow +\infty}\lambda^{(r)}_{k,D}( W)=\sigma_k( W)$$
and for any $k\in[N]\setminus[P]$,
$$\lim_{r\rightarrow +\infty}\lambda^{(r)}_{k,D}( W)= +\infty.$$
\end{prop}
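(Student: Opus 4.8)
The plan is to compare the Rayleigh quotient governing $\lambda^{(r)}_{k,D}(W)$ with the one governing $\sigma_k(W)$, and to track what happens to the extra degrees of freedom carried by the interior vertices $W\cap\Omega$ as $r\to\infty$. Recall from \eqref{min_max_finiDefi} that $\sigma_k(W)$ is the $k$-th min-max value of the quotient $D_W(u^W_f)/\langle f,f\rangle_{W\cap\delta\Omega}$ over $f\in\mathbb{R}^{W\cap\delta\Omega}$, while the variational characterization of the Dirichlet problem \eqref{dtnapproximate} gives
\[
\lambda^{(r)}_{k,D}(W)=\min_{\substack{H\subset\mathbb{R}^{\overline W},\,\mathrm{supp}\subset W\\ \dim H=k}}\ \max_{0\neq g\in H}\frac{D_W(g)}{\sum_{x\in W\cap\delta\Omega}g^2(x)d(x)+\frac1r\sum_{x\in W\cap\Omega}g^2(x)d(x)}.
\]
So the only difference between the two problems is the weight $\frac1r$ on interior vertices in the denominator (the numerator $D_W$ is the same quadratic form), together with the fact that the DtN problem restricts test functions to be harmonic in the interior. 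The claim is that as $r\to\infty$ the interior weight becomes negligible, the low eigenvalues converge to those of the DtN operator, and the remaining $N-P$ eigenvalues blow up.

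First I would prove the upper bound $\limsup_{r\to\infty}\lambda^{(r)}_{k,D}(W)\leq\sigma_k(W)$ for $k\in[P]$: take the optimal $k$-dimensional subspace $H\subset\mathbb{R}^{W\cap\delta\Omega}$ realizing $\sigma_k(W)$, and use the test subspace $\widetilde H=\mathrm{span}\{u^W_f:f\in H\}$, which is $k$-dimensional (the map $f\mapsto u^W_f$ is injective by uniqueness in Lemma~\ref{harmonic extension}). For $g=u^W_f$ one has $D_W(g)=\mathrm{Cap}(f,W)$ by Lemma~\ref{51}, and the interior term $\frac1r\sum_{x\in W\cap\Omega}g^2(x)d(x)$ is bounded by $\frac{C}{r}\|f\|^2_{\ell^2(W\cap\delta\Omega)}$ via the maximum principle $|u^W_f|\leq\max|f|$; hence the quotient for $g$ is at most $\sigma_k(W)+O(1/r)$ uniformly on $\widetilde H$. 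Next, the lower bound: since dropping the nonnegative interior term from the denominator only increases the quotient, $\lambda^{(r)}_{k,D}(W)\geq\min_{\dim H=k}\max_{g\in H}D_W(g)/\langle g,g\rangle_{W\cap\delta\Omega}$, where now $H$ ranges over all subspaces of functions supported in $W$ — but the denominator $\langle g,g\rangle_{W\cap\delta\Omega}$ is degenerate (it vanishes on functions supported in $W\cap\Omega$), so this raw bound is $0$ once $k>$ something, and needs care. The clean way is a compactness/continuity argument: the operators $-\Delta^{(r)}$ on $\mathbb{R}^W$ with Dirichlet data, written in a fixed basis, form a family of symmetric matrices (with respect to an $r$-dependent inner product) whose behaviour as $r\to\infty$ is that of a singular perturbation. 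I would block-decompose $\mathbb{R}^W=\mathbb{R}^{W\cap\delta\Omega}\oplus\mathbb{R}^{W\cap\Omega}$ and perform a Schur complement reduction: eliminating the interior variables, the interior block of $-\Delta^{(r)}$ is $rL_{\mathrm{int}}$ plus bounded terms where $L_{\mathrm{int}}$ is the (positive definite, by the Dirichlet condition on $\delta W$) interior Laplacian block, so its $N-P$ eigenvalues scale like $r$ and tend to $+\infty$, while the effective operator on the boundary block converges to exactly the DtN operator $\Lambda_W$ (the Schur complement $B/(rL_{\mathrm{int}}+\cdots)$ is $O(1/r)$ and the harmonic-extension structure is precisely what the Schur complement encodes). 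Standard eigenvalue perturbation theory — e.g. min-max continuity of eigenvalues of symmetric matrices under the convergence of the associated forms — then yields $\lambda^{(r)}_{k,D}(W)\to\sigma_k(W)$ for $k\leq P$ and $\lambda^{(r)}_{k,D}(W)\to\infty$ for $P<k\leq N$.

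The main obstacle I anticipate is handling the denominator degeneracy rigorously in the lower-bound direction for $k\le P$: one must rule out that an eigenfunction for a low eigenvalue $\lambda^{(r)}_{k,D}(W)$ concentrates (in the sense of having large $\ell^2$ mass weighted by $\frac1r d(x)$ on interior vertices while being small on the boundary) as $r\to\infty$. The Schur-complement / matrix-pencil viewpoint is precisely designed to sidestep this: because $L_{\mathrm{int}}$ is positive definite, an eigenfunction with bounded eigenvalue $\lambda=\lambda^{(r)}_{k,D}(W)$ must have its interior part determined by its boundary part up to $O(1/r)$ (solve $(rL_{\mathrm{int}}+\cdots-\lambda)g_{\mathrm{int}}=-B^{T}g_{\partial}$, so $g_{\mathrm{int}}=u^W_{g_\partial}+O(1/r)$), which both controls the concentration and identifies the limit operator on the boundary as $\Lambda_W$. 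I would therefore organize the proof around this reduction rather than around a direct min-max manipulation, invoking Proposition~\ref{eigenvalueapproximation}-style monotonicity only where needed; the routine estimates (the $O(1/r)$ bounds, the explicit form of the interior block) I would state and leave to the reader.
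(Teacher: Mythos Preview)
Your proposal is correct and reaches the result by a genuinely different route from the paper. The paper argues by compactness of eigenfunctions: it normalizes the Dirichlet eigenfunctions $\Phi^{(r)}_k$ in $\ell^\infty$, extracts subsequential limits $\Phi^{(\infty)}_k$ as $r\to\infty$, checks directly from the equation that each limit is harmonic on $W\cap\Omega$ and satisfies the DtN eigenvalue relation on $W\cap\delta\Omega$, uses the surviving orthogonality on $W\cap\delta\Omega$ to bound the number of finite limit eigenvalues by $P$, and uses the harmonic extensions of DtN eigenfunctions as test functions (your upper bound step) to show that at least $P$ eigenvalues stay bounded; a subsequence argument then upgrades to full convergence. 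Your approach instead rewrites the problem as the generalized pencil $Lg=\lambda M^{(r)}g$, block-decomposes along $W\cap\delta\Omega$ versus $W\cap\Omega$, and uses that the interior--interior block $L_{II}$ is positive definite to solve for $g_{\mathrm{int}}$ in terms of $g_\partial$ up to $O(1/r)$, identifying the effective boundary operator with the Schur complement $L_{BB}-L_{BI}L_{II}^{-1}L_{IB}$, which is exactly the matrix of $\Lambda_W$. Both arguments share the same upper-bound test-function step; they differ in how the lower bound and the divergence of the top $N-P$ eigenvalues are obtained. The paper's method is lighter on prerequisites (just finite-dimensional compactness and pointwise limits in the equation) and mirrors the PDE-style argument of Hassannezhad--Miclo; your Schur-complement reduction is more structural, makes the identification $\Lambda_W=\text{Schur complement}$ explicit, and in principle yields a quantitative $O(1/r)$ rate, at the cost of invoking singular-perturbation facts for symmetric pencils with a degenerating mass matrix. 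One small point to tighten in your write-up: the sentence ``the Schur complement $B/(rL_{\mathrm{int}}+\cdots)$ is $O(1/r)$'' is garbled---what you mean is that the \emph{difference} between the $r$-dependent Schur complement $L_{BB}-L_{BI}(L_{II}-\tfrac{\lambda}{r}D_I)^{-1}L_{IB}$ and its limit $L_{BB}-L_{BI}L_{II}^{-1}L_{IB}=\Lambda_W$ is $O(1/r)$ uniformly for $\lambda$ in a bounded set; stating it this way also makes clear why the $N-P$ ``interior'' eigenvalues diverge like $r$.
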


\begin{proof}
Let $\Phi^{(r)}_1,\Phi^{(r)}_2,\cdots,\Phi^{(r)}_N$ be the eigenfunctions corresponding to $\lambda^{(r)}_{k,D}( W),k\in[N],$ such that
\begin{eqnarray}\label{orthogonal}\sum_{x\in W}m^{(r)}_x\Phi^{(r)}_i(x)\Phi^{(r)}_j(x)=0,\  \forall r\in(0,+\infty),\forall i\neq j \in[N],\end{eqnarray}
normalized such that
$$\|\Phi^{(r)}_i\|_{\infty}=1,\ \forall r\in(0,+\infty),\forall i\in[N].$$
Consider $s\in[N]$ satisfying
$$\liminf\limits_{r\rightarrow+\infty}\lambda^{(r)}_{s,D}( W)<+\infty,$$
$$\liminf\limits_{r\rightarrow+\infty}\lambda^{(r)}_{s+1,D}( W)=+\infty.$$

For any $k\in[s]$, by the compactness, there exists an increasing sequence of positive numbers $\{r_n\}_{n\in\mathbb{N}}$, a non-negative finite number $\lambda^{(\infty)}_{k,D}( W)$, and a function $\Phi^{(\infty)}_k\in\R^{ W}$ with $\|\Phi^{(\infty)}_k\|_{\infty}=1$, such that
\begin{equation}\label{converge1}\lim_{n\rightarrow \infty}r_n=+\infty,\end{equation}
\begin{equation}\label{converge2}\lim_{n\rightarrow\infty}\lambda^{(r_n)}_{k,D}( W)=\lambda^{(\infty)}_{k,D}( W),\end{equation}
\begin{equation}\label{converge3}\lim_{n\rightarrow\infty}\Phi^{(r_n)}_k=\Phi^{(\infty)}_k.\end{equation}
For $x\in W\cap\delta\Omega$,
$$\Delta(\Phi^{(r_n)}_k)(x)=\Delta^{(r_n)}(\Phi^{(r_n)}_k)(x)=-\lambda^{(r_n)}_{k,D}( W)\Phi^{(r_n)}_k(x).$$
Letting $n\rightarrow\infty$, we get
\begin{equation}\label{dtnEquation}\Delta(\Phi^{(\infty)}_k)(x)=-\lambda^{(\infty)}_{k,D}( W)\Phi^{(\infty)}_k(x).\end{equation}
For $x\in W\cap\Omega$,
$$\Delta^{(r_n)}(\Phi^{(r_n)}_k)(x)=r_n\Delta(\Phi^{(r_n)}_k)(x)=-\lambda^{(r_n)}_{k,D}( W)\Phi^{(r_n)}_k(x).$$
Letting $n\rightarrow\infty$, we have
$$\Delta(\Phi^{(\infty)}_k)(x)=\lim_{n\rightarrow\infty}\frac{-\lambda^{(r_n)}_{k,D}( W)\Phi^{(r_n)}_k(x)}{r_n}=0.$$

Hence $\Phi^{(\infty)}_k$ is the solution of \eqref{existlemma} with $f$ replaced by $\Phi^{(\infty)}_k|_{ W\cap\delta\Omega}$. Note that $\Phi^{(\infty)}_k|_{ W\cap\delta\Omega}\neq0$, otherwise one can conclude that $\Phi^{(\infty)}_k=0$, which contradicts to $\|\Phi^{(\infty)}_k\|_{\infty}=1.$ Hence $\lambda^{(\infty)}_{k,D}( W)$ is an eigenvalue of the DtN operator on $ W$ by \eqref{dtnEquation} and the fact that the DtN operator equals the minus Laplace on $W\cap\delta\Omega$. Letting $r\rightarrow\infty$ in \eqref{orthogonal}, we have
$$\sum_{x\in W\cap\delta\Omega}d(x)(\Phi^{(\infty)}_i|_{ W\cap\delta\Omega})(x)(\Phi^{(\infty)}_j|_{ W\cap\delta\Omega})(x)=0,\quad\forall i\neq j\in[s].$$
This implies $\Phi^{(\infty)}_i|_{ W\cap\delta\Omega}$, $1\leq i\leq s$, are linear independent. Hence $s\leq P.$

Let $\psi_1,\psi_2,\cdots,\psi_P$ be the eigenfunctions of $\Lambda_{ W}$ associated to the eigenvalues $\sigma_1( W),\sigma_2( W),\cdots,\sigma_P( W)$ respectively and $\Psi_i,\forall i\in[P]$, be the solutions of \eqref{existlemma} with $f$ replaced by $\psi_i$. We further assume that $\|\Psi_k\|_{\infty}=1$, $\forall k\in[P]$. Consider the vector space $U$ defined as 
$$U:=\mathrm{span}\{\Psi_k:k\in[P]\}.$$
Note that $\dim U=p$. Then we have for any $r>0$,
$$\lambda^{(r)}_{P,D}( W)\leq\sup_{F\in U\setminus\{0\}}\frac{-\sum_{x\in W}m^{(r)}_xF(x)\Delta^{(r)}(F)(x)}{\sum_{x\in W}m^{(r)}_xF^2(x)}.$$
Since the functions from $U$ are harmonic on $ W\cap\Omega$, we have for any $r>0$,
\begin{eqnarray*}
\lambda^{(r)}_{P,D}( W)&\leq&\sup_{F\in U\setminus\{0\}}\frac{-\sum_{x\in W\cap\delta\Omega}m^{(r)}_xF(x)\Delta F(x)}{\sum_{x\in W}m^{(r)}_xF^2(x)}\nonumber\\
&\leq&\sup_{F\in U\setminus\{0\}}\frac{\sum_{x\in W\cap\delta\Omega}m^{(r)}_xf(x)\Lambda (f)(x)}{\sum_{x\in W\cap\delta\Omega}m^{(r)}_xf^2(x)}\nonumber\\
&\leq&\sup_{F\in U\setminus\{0\}}\frac{\sigma_P( W)\sum_{x\in W\cap\delta\Omega}d(x)f^2(x)}{\sum_{x\in W\cap\delta\Omega}d(x)f^2(x)}\nonumber\\
&=&\sigma_P( W),
\end{eqnarray*}
where $f$ is the restriction of $F$ on $ W\cap\delta\Omega.$
Hence
\begin{eqnarray}\label{sublimitbound} \limsup_{r\rightarrow+\infty}\lambda^{(r)}_{P,D}( W)<+\infty.\end{eqnarray}
This implies $s\geq P$ and finally $s=P$. Hence, for $\{r_n\}_{n=1}^\infty$ satisfying \eqref{converge1}-\eqref{converge3}, we have
$$\lim_{n\rightarrow\infty}\lambda^{(r_n)}_{k,D}( W)=\sigma_k( W),\quad\forall\, k\in[P].$$

Note that for any sequence $r_i\to \infty,$ we can extract a subsequence $\{r_{i_j}\}$ satisfying \eqref{converge1}-\eqref{converge3}. Moreover, by the same argument above, one can show that
$$\lim_{j\rightarrow\infty}\lambda^{(r_{i_j})}_{k,D}( W)=\sigma_k( W),\quad\forall k\in[P].$$ This yields that
$$\lim_{r\rightarrow\infty}\lambda^{(r)}_{k,D}( W)=\sigma_k( W).$$
Then the proposition follows.
\end{proof}

Now we are ready to prove Theorem \ref{hdtn}.

\begin{proof}[Proof of Theorem \ref{hdtn}] For the upper bound estimate,
choose $(A_1,\ldots,A_k)\in\mathcal{A}_k( W)$ that achieves ${h}^k_J( W)$. Consider $H:=\mathrm{span}\{\chi_{A_l}:l\in[k]\}\subset\R^{ W}$. Then $\dim H=k$ and $H|_{ W\cap\delta\Omega}\subset\R^{ W\cap\delta\Omega}.$
Notice that
\begin{eqnarray*}
\frac{D_{ W}(\chi_{A_l})}{\langle \chi_{A_l},\chi_{A_l}\rangle_{ W\cap\delta\Omega}}=\frac{\sum_{e=\{x,y\}\in E( W,\overline{ W})}\mu_{xy}(\chi_{A_l}(y)-\chi_{A_l}(x))^2}{d(A_l\cap\delta\Omega)}
=\frac{\mu(\partial_{ W}A_l)}{d(A_l\cap\delta\Omega)}.
\end{eqnarray*}
Hence by \eqref{min_max_finiDefi},
\begin{eqnarray*}
\sigma_k( W)&\leq&\max_{l\in[k]}\frac{D_{ W}(u^{ W}_{\chi_{A_l}})}{\langle \chi_{A_l},\chi_{A_l}\rangle_{ W\cap\delta\Omega}}=\max_{l\in[k]}\frac{\mu(\partial_{ W}A_l)}{d(A_l\cap\delta\Omega)}={h}^k_{J}( W).
\end{eqnarray*}
Next, we prove the lower bound estimate. For any $k\in[P]$,
\begin{eqnarray*}
\sigma_k( W)&=&\lim_{r\rightarrow+\infty}\lambda^{(r)}_{k,D}( W)
\geq\lim_{r\rightarrow+\infty}\frac{c}{k^6}\min_{(A_1,\cdots,A_k)\in\mathcal{A}_k(W)}\max_{l\in[k]}\lambda^{(r)}_{1,D}(A_l)\nonumber\\
&=&\frac{c}{k^6}\min_{(A_1,\cdots,A_k)\in\mathcal{A}_k(W)}\max_{l\in[k]}\lim_{r\rightarrow+\infty}\lambda^{(r)}_{1,D}(A_l)\nonumber\\
&=&\frac{c}{k^6}\min_{(A_1,\cdots,A_k)\in\mathcal{A}_k(W)}\max_{l\in[k]}\sigma_1(A_l)\nonumber\\
&\geq&\frac{c}{k^6}\min_{(A_1,\cdots,A_k)\in\mathcal{A}_k(W)}\max_{l\in[k]}\frac{1}{2}{h}_J(A_l){h}(A_l)\nonumber\\
&=&\frac{c^{\prime}}{k^6}{h}_k( W),
\end{eqnarray*}
the first inequality follows from Theorem \ref{hd}.
\end{proof}

Finally, by Theorem \ref{hdtn} and Lemma \ref{eigenvalueapproximation}, we can prove Theorem~\ref{infinitehigherlowwerbound}.

\begin{proof}[Proof of Theorem~\ref{infinitehigherlowwerbound}]
For the upper bound estimate,
\begin{eqnarray*}{\sigma}_k(\Omega)=\lim_{i\to \infty}\sigma_k( W_i)\leq\lim_{i\to \infty}{h}^{k}_J( W_i)={h}^{k}_J(\Omega).
\end{eqnarray*}

For the lower bound estimate,
\begin{eqnarray*}
\sigma_k(\Omega)=\lim_{i\to \infty}\sigma_k( W_i)\geq\lim_{i\to \infty}\frac{c}{k^6}{h}_k( W_i)=\frac{c}{k^6}{h}_k(\Omega).
\end{eqnarray*}
Then the theorem follows.
\end{proof}

\bibliography{infiniteDTNproblems}
\bibliographystyle{alpha}

\end{document}